\patchcmd{\@maketitle}{\newpage}{}{}{} 
\numberwithin{equation}{section}
\theoremstyle{definition}
\newtheorem{definition}{Definition}[section]
\newtheorem{remark}[definition]{Remark}
\theoremstyle{plain}
\newtheorem{theorem}[definition]{Theorem}
\newtheorem{lemma}[definition]{Lemma}
\newtheorem{corollary}[definition]{Corollary}
\newtheorem{prop}[definition]{Proposition}
\newtheorem{assumption}[definition]{Assumption}
\newcommand{\A}{\mathbf{A}}
\newcommand{\B}{\mathbf{B}}
\newcommand{\C}{\mathcal{C}}
\newcommand{\E}{\mathfrak{E}}
\newcommand{\f}{\change{f}}
\newcommand{\g}{\overline{g}}
\newcommand{\G}{\underline{G}}
\renewcommand{\H}{\mathcal{H}}
\newcommand{\I}{\mathbb{I}}
\renewcommand{\j}{\jmath}
\renewcommand{\L}{\mathcal{L}}
\newcommand{\M}{\overline{M}}
\newcommand{\N}{\mathbb{N}}
\newcommand{\nabbar}{\overline{\nabla}}
\renewcommand{\O}[1]{\mathcal{O}\left(#1\right)}
\newcommand{\R}{\mathbb{R}}
\newcommand{\Ric}{\text{\normalfont{Ric}}}
\newcommand{\Riem}{\text{\normalfont{Riem}}}
\newcommand{\supp}{\mathrm{supp}}
\renewcommand{\S}{\mathbb{S}}
\newcommand{\vol}[1]{{\text{\normalfont{dvol}}}_{#1}}
\newcommand{\X}{\bm{X}}
\newcommand{\Z}{\mathbb{Z}}
\renewcommand{\epsilon}{\varepsilon}
\newcommand{\phibar}{{\phi}_{FLRW}}
\newcommand{\del}{\partial}
\newcommand{\Lap}{\Delta}
\renewcommand{\div}{\text{\normalfont{div}}}
\newcommand{\Gamhat}{\hat{\Gamma}}
\newcommand{\nabsak}{\underline{\nabla}}
\newcommand{\Ltilde}{\tilde{\L}}
\newcommand{\numberthis}{\addtocounter{equation}{1}\tag{\theequation}}
\renewcommand{\theequation}{\arabic{section}.\arabic{equation}}
\newcommand{\change}[1]{#1}
\newcommand{\changereport}[1]{\color{black}#1\color{black}}
\title{Quiescent Big Bang formation in $2+1$ dimensions}
\author[L.~Urban]{Liam Urban}
\address{Faculty of Physics, University of Vienna, Währinger Straße 17, 1090 Vienna, Austria}
\email{liam.urban@ univie.ac.at}
\keywords{Big Bang stability, 2+1 gravity, \change{Einstein scalar-field Vlasov system}, Strong Cosmic Censorship}
\begin{document}

\maketitle

\begin{abstract}
In this paper, we study the past asymptotics of $(2+1)$-dimensional solutions to the Einstein scalar-field Vlasov system which are close to Friedman-Lema\^\i tre-Robertson-Walker spacetimes on an initial hypersurface diffeomorphic to a closed orientable surface $M$ of arbitrary genus. We prove that such solutions are past causally geodesically incomplete\change{, form a curvature singularity }and exhibit stable Kretschmann scalar blow-up in the contracting direction. In particular, \change{the spacetime is }$C^2$-inextendible towards the past where causal geodesics become incomplete. Moreover, we show that geometry and matter are asymptotically velocity term dominated toward the past, remaining close to their background counterparts. \change{When viewed on the co-mass shell, the Vlasov distribution in particular converges to a limiting distribution on the Big Bang hypersurface, while asymptotics of the spatial metric lead to slight degeneracies when trying to control the components of the Vlasov energy-momentum tensor. These also manifest when viewing the distribution function on the mass shell, where velocities generally approach a smooth one-dimensional subbundle of the tangent bundle, leading the distribution to become highly anisotropic, \changereport{when viewed in terms of the geometry induced by the constant curvature spatial reference metric}. Compared to previous results in higher dimensions, in particular \cite{FU25}}, inhomogeneous terms in the wave and Vlasov equations factor in more strongly in our setting, which \change{a priori creates additional hurdles at high orders while largely keeping the quiescent system intact.}\\
\indent As a corollary, our main result shows that the Strong Cosmic Censorship conjecture holds for certain polarized $U(1)$-symmetric solutions to the Einstein vacuum equations that emanate from a spatial hypersurface diffeomorphic to $M\times\S^1$. 
\end{abstract}

\section{Introduction}

In this paper, we consider $(2+1)$-dimensional solutions $(\M=I\times M,\g)$ to the Einstein scalar-field Vlasov (ESFV) system 
\begin{subequations}\label{eq:EEq}
\begin{align}
\label{eq:EEq1}\Ric[\g]_{\mu\nu}-\frac12 R[\g]\,\g_{\mu\nu}=&\,8\pi\left(T_{\mu\nu}^{SF}+T_{\mu\nu}^{Vl}\right)\,,\\
T_{\mu\nu}^{SF}=&\,\nabbar_{\mu}\phi\,\nabbar_{\nu}\phi-\frac12\,\g_{\mu\nu}\,\nabbar^{\alpha}\phi\,\nabbar_{\alpha}\phi\,,\\
T_{\mu\nu}^{Vl}=&\,\int_{P_{t,x}}p_\mu\,p_{\nu}\,\f(t,x,p)\,\vol{P_{t,x}}\,,\label{eq:EEqMatter-2+1}\\
\label{eq:EEqW}\square_{\g}\phi=&\,0\,,\\
\label{eq:EEqVl-2+1}\change{\mathcal{X}}_{\g}\f=&\,0
\end{align}
\end{subequations}
which are close to Friedman-Lema\^\i tre-Robertson-Walker (FLRW) solutions in the sense of Lemma \ref{lem:FLRW-2+1}. Here, $I\subseteq(0,\infty)$ is an open interval, $M$ is a closed orientable surface of arbitrary \changereport{genus and }the fiber $P_{(t,x)}$ of the \change{co-mass shell }$P$ in \eqref{eq:EEqMatter-2+1} is as in \eqref{eq:mass-shell-2+1}, where we treat Vlasov matter in both the massive ($m=1$) and the massless ($m=0$) case. \change{Furthermore, $\phi$ is a scalar function on $\M$, $\f$ a nonnegative scalar function on $P$ and $\change{\mathcal{X}}_{\g}$ is the geodesic spray with respect to $\g$}. Our main result can be summarized as follows.

\begin{theorem}[Stability of FLRW solutions to the ESFV system in $2+1$ dimensions]\label{thm:main-intro-2+1}
Let $(M,\mathring{g},\mathring{k},\mathring{\pi},\mathring{\psi},\mathring{f})$ be initial data\footnote{see Section \ref{subsec:init-2+1}} to the ESFV system that is close to FLRW initial data such that the momentum support of $\mathring{f}$ is compact. In the massless case, suppose in addition that one has 
\[\inf\{ \lvert p\rvert_{\mathring{g}_x}\,\vert\,(x,p)\in \supp\mathring{f}\}>0\,.\]

Then, the past maximal globally hyperbolic development of said data in the ESFV system is causally geodesically incomplete. In fact, the Kretschmann scalar exhibits stable blow-up \changereport{toward the past boundary of said development}. Furthermore, the solution is asymptotically velocity term dominated\change{\footnote{i.e., the asymptotic behaviour of the solution agrees, at leading order, with a formal analogue of the $(3+1)$-decomposed Einstein equations in which all spatial derivatives are dropped}, the Vlasov distribution remains close to its FLRW counterpart, and so does its momentum support if it is close to that of the FLRW distribution function initially.}
\end{theorem}

The full version of the main theorem can be found in Theorem \ref{thm:main-2+1}. We note that, while we will often restrict ourselves to constant mean curvature (CMC) data throughout, this is not a true restriction and can be assumed without loss of generality, see Section \ref{subsec:lwp-2+1}. \change{Additionally, when viewing the Vlasov particle distribution as a function of the mass shell, degeneracies in the metric lead to concentration of particle velocities in the negative eigendirection of the renormalised shear, as is discussed in Remark \ref{rem:Vlasov-contra-2+1}.}

The BKL conjecture (after Belinskiǐ, Khalatinov and \changereport{Lifshitz}, see \change{\cite{LK63,BKL70,BK73,BKL82}}) predicts that past asymptotics of cosmological spacetimes are dominated by geometric oscillations for typical matter models, while scalar field matter is predicted to have a stabilizing effect. Theorem \ref{thm:main-intro-2+1} confirms that near-FLRW solutions to the ESFV system in \change{$(2+1)$ }dimensions are subject to this stabilization, often referred to as quiescent Big Bang formation. This extends the results of \cite{FU25} to three spacetime dimensions. It also turns out that the more rigid geometric structure in the $(2+1)$-dimensional setting allows for an alternative, simpler proof. In particular, that inhomogeneous terms in both matter equations exhibit weaker decay turns out not to be a significant obstruction to our approach, see the discussion in Section \ref{sec:proof-outline}.\\

For $M=\mathbb{S}^2$, the reference FLRW spacetime exhibits a Big Crunch singularity toward the future at some $t=T$ with the same asymptotic properties as the Big Bang singularity, where the scale factor is preserved by the reflection $t\mapsto T-t$. Applying this throughout, Theorem \ref{thm:main-intro-2+1} also shows that near-FLRW data is future stable and causally incomplete for $M=\mathbb{S}^2$ and exhibits entirely analogous asymptotics toward the Big Crunch as toward the Big Bang.\\

Additionally, as we discuss in Section \ref{subsec:lit}, Theorem \ref{thm:main-2+1} also implies \change{a stability result for solutions to the $(3+1)$-dimensional Einstein vacuum equations in polarized $U(1)$-symmetry, i.e., vacuum spacetimes that contain a non-degenerate spacelike Killing vector field. Such solutions can be written as spacetimes over $I\times M\times \S^1$ equipped with a metric of the form
\begin{gather}
\label{eq:u1-corresp} \check{g}=e^{-2\sqrt{4\pi}\,\phi}\,\g+e^{2\sqrt{4\pi}\,\phi}\,(dx^3)^2
\end{gather}
where $\phi$ is a scalar function on $I\times M$, $\g$ is spacetime metric on $I\times M$ and $(dx^3)^2$ is the standard coordinate metric on $\mathbb{S}^1$. Using this notation, one obtains the following.}

\begin{corollary}[Cosmic Censorship and stable curvature blow-up for polarized $U(1)$-symmetric vacuum solutions]\label{cor:u1-intro} \change{Consider past maximal solutions to the $(3+1)$-dimensional Einstein vacuum equations of the form \eqref{eq:u1-corresp} with
\begin{equation*}
\g=-dt^2+a^2\,\gamma\text{ and }\del_t\phi=C\,a^{-2},\,\del_{x^i}\phi=0
\end{equation*}
where $(M,\gamma)$ is a closed orientable surface of constant curvature and $C<0$}.\footnote{Requiring $C<0$ excludes extremal Kasner-like spacetimes that do not feature a Big Bang singularity. For further discussion of the reference solutions, see Remark \ref{rem:u1-ref}.} Let $(\check{M},\check{g})$ be the past maximal development of nearby polarized $U(1)$-symmetric initial data \change{within the Einstein vacuum equations. Then, $(\check{M}, \check{g})$ is a polarized $U(1)$-symmetric spacetime that }exhibits a crushing singularity with curvature blow-up toward the past boundary and thus is past $C^2$-inextendible. More precisely, the rescaled quantity $e^{-4\sqrt{4\pi}\,\phi}\,\check{\mathcal{K}}$ exhibits stable blow-up, where $\check{\mathcal{K}}$ is the Kretschmann scalar associated with $\check{g}$.
\end{corollary}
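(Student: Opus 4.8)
The plan is to reduce Corollary~\ref{cor:u1-intro} to the $(2+1)$-dimensional result (Theorem~\ref{thm:main}) via the classical Kaluza--Klein dimensional reduction for polarized $U(1)$-symmetric spacetimes. First I would recall that for a metric of the form $\check{g}=e^{-2\sqrt{4\pi}\,\phi}\,\g+e^{2\sqrt{4\pi}\,\phi}\,dx_3^2$ on $\check{M}=(0,t_0]\times M\times\S^1$ with $\del_{x_3}$ Killing and hypersurface-orthogonal (polarization), the $(3+1)$-dimensional vacuum equations $\Ric[\check{g}]=0$ are equivalent to the $(2+1)$-dimensional Einstein scalar-field system for $(\g,\phi)$ with the normalization constant chosen precisely so that $\phi$ is a massless scalar field with the standard stress-energy tensor $T^{SF}_{\mu\nu}=\nabbar_\mu\phi\,\nabbar_\nu\phi-\frac12\g_{\mu\nu}\nabbar^\alpha\phi\,\nabbar_\alpha\phi$; this is a standard computation (the $dx_3^2$-component of the Ricci curvature yields $\square_\g\phi=0$, and the $M$-components yield the reduced Einstein equation). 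I would then check that the reference solution $\check{g}$ with $\del_t\phi_{\textrm{\normalfont ref}}=C\,a^{-2}$, $C<0$, corresponds under this reduction exactly to an FLRW solution $((0,t_0]\times M,\g_{FLRW},\phi_{FLRW})$ of the $(2+1)$-dimensional ESFV system with vanishing Vlasov matter in the sense of Lemma~\ref{lem:FLRW} — here the condition $\del_t\phi_{\textrm{\normalfont ref}}=Ca^{-2}$ is the statement that $\phi_{\textrm{\normalfont ref}}$ has the FLRW profile, and $C<0$ is needed to land in the quiescent (non-extremal-Kasner) regime where the Big Bang forms.

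Next, I would address the correspondence at the level of initial data and the Cauchy problem: polarized $U(1)$-symmetric vacuum initial data on $M\times\S^1$ that is close to the reference data descends to $(2+1)$-dimensional ESFV initial data $(M,\mathring{g},\mathring{k},\mathring{\pi},\mathring{\psi},\mathring{f}\equiv 0)$ close to FLRW initial data, with the closeness in the appropriate Sobolev norms preserved by the reduction (this is where one must be slightly careful to match the regularity and smallness hypotheses of Theorem~\ref{thm:main}). Since $U(1)$-symmetry is preserved under evolution by the vacuum Einstein equations (the Killing field propagates), the past maximal globally hyperbolic development $(\check{M},\check{g})$ is itself $U(1)$-symmetric and polarized, hence of the stated form $\check{g}=e^{-2\sqrt{4\pi}\,\phi}\,\g+e^{2\sqrt{4\pi}\,\phi}\,dx_3^2$ with $((0,t_0]\times M,\g,\phi)$ the MGHD of the reduced data — this uses uniqueness of the MGHD together with the fact that the reduction and its inverse are both solution-preserving. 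Applying Theorem~\ref{thm:main} to $(\g,\phi)$ then gives: past causal geodesic incompleteness, asymptotic velocity term domination, the FLRW-type asymptotics, and in particular stable blow-up of the $(2+1)$-dimensional Kretschmann scalar $\mathcal{K}_\g$ of order $t^{-4}$ near $t=0$, together with quantitative control on $\phi$ and its derivatives (e.g.\ $\phi$ grows logarithmically and $a^2\del_t\phi$ has a limit).

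Finally, I would transfer the curvature blow-up statement back upstairs. The Kretschmann scalar $\check{\mathcal{K}}=\Riem[\check{g}]_{\mu\nu\rho\sigma}\Riem[\check{g}]^{\mu\nu\rho\sigma}$ of the warped product can be expanded in terms of the Riemann tensor of $\g$, the Hessian and gradient of $\phi$, and the warping exponentials; using the asymptotic control on $(\g,\phi)$ from Theorem~\ref{thm:main} one shows that $e^{-4\sqrt{4\pi}\,\phi}\,\check{\mathcal{K}}$ has the same leading-order behavior as $\mathcal{K}_\g$ up to lower-order corrections, hence exhibits stable blow-up; the precise power $e^{-4\sqrt{4\pi}\,\phi}$ is exactly the conformal weight needed so that the dominant term survives. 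The crushing singularity (mean curvature of the CMC foliation tending to $\pm\infty$) is inherited from the corresponding statement for $\g$ together with the explicit form of the second fundamental form of the $\S^1$-bundle, and $C^2$-inextendibility follows from the curvature blow-up in the standard way, since $\check{\mathcal{K}}$ is a scalar invariant that would have to extend continuously across a $C^2$ extension. The main obstacle I anticipate is not any single deep step but rather the bookkeeping of the reduction: verifying that the smallness and regularity norms genuinely correspond (so that the hypotheses of Theorem~\ref{thm:main} are met by the reduced data), that the coordinate/gauge choices used in the reduction are compatible with the transported coordinate system in which Theorem~\ref{thm:main} is proven, and that the $U(1)$ Killing field is genuinely propagated with the required regularity — these are routine in principle but require care, and are presumably handled by reference to the established literature on $U(1)$ reduction (Choquet-Bruhat--Moncrief and successors).
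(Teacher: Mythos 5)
Your proposal follows the same overall route as the paper: the Moncrief polarized $U(1)$ reduction to the $(2+1)$-dimensional Einstein scalar-field system, initial-data correspondence and propagation of the symmetry class, application of Theorem~\ref{thm:main} to the reduced problem, and explicit transfer of curvature blow-up back to $(\check{M},\check{g})$. Two details deserve correction, though.

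First, the claim that $e^{-4\sqrt{4\pi}\,\phi}\,\check{\mathcal{K}}$ \enquote{has the same leading-order behavior as $\mathcal{K}_\g$ up to lower-order corrections} is not accurate, and you cannot use it as a shortcut. The paper's explicit curvature computation for the warped-product metric (Section~\ref{subsec:curvature-u1}) gives
\begin{equation*}
e^{-4\sqrt{4\pi}\,\phi}\,a^8\,\check{\mathcal{K}}\longrightarrow Z_\textrm{\normalfont Bang}=256\pi^2\,(\Psi_\textrm{\normalfont Bang}+C)^2\,(\Psi_\textrm{\normalfont Bang}+2C)\,(\Psi_\textrm{\normalfont Bang}+4C)\,,
\end{equation*}
whereas Theorem~\ref{thm:main} yields $a^8\,\mathcal{K}_\g\to 6(8\pi)^2\,(\Psi_\textrm{\normalfont Bang}+C)^4$. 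These limit profiles disagree even for the reference solution ($\Psi_\textrm{\normalfont Bang}=0$: $2048\pi^2 C^4$ versus $384\pi^2 C^4$), so you genuinely have to carry out the curvature calculation for $\check{g}$ rather than borrow the one for $\g$. The conclusion is nonetheless recoverable by the argument you implicitly invoke: the limit is a continuous functional of the asymptotic data, positive for the reference solution (since $C<0$), hence bounded away from zero for all nearby data; but the precise expression is not $\mathcal{K}_\g$.

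Second, the parenthetical \enquote{mean curvature of the CMC foliation tending to $\pm\infty$} is a slip: the induced foliation $(M_t\times\S^1)_{t}$ of $\check{M}$ is not CMC, since $\mathrm{tr}_{\tilde{g}}\tilde{k}=e^{\sqrt{4\pi}\,\phi}\,(\tau+\sqrt{4\pi}\,\psi)$ has a nonzero spatial gradient in general (the paper makes this point explicitly). The trace of $\tilde{k}$ still diverges uniformly to $-\infty$ as $t\downarrow 0$, which is all that is needed for a crushing singularity, so the conclusion stands once the wording is corrected.
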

The case where $(M,\gamma)\cong(\mathbb{T}^2,\change{(dx^1)^2+(dx^2)^2})$ is also treated in \cite[Theorem 6.3]{RodSpFou20}. However, the Kretschmann scalar asymptotics in Corollary \ref{cor:u1-intro} slightly differ from those in \cite[Theorem 6.3]{RodSpFou20} since we slice $(\check{M},\check{g})$ differently: The CMC \change{foliation }$(M_t)_{t\in(0,t_0]}$ of $(\M,\g)$ underlying Theorem \ref{thm:main-intro-2+1} induces a foliation $(M_t\times\S^1)_{t\in(0,t_0]}$ on $(\check{M},\check{g})$ that leads to Corollary \ref{cor:u1-intro}. This foliation is not CMC and, thus, qualitatively different from the CMC slicing used in \cite{RodSpFou20}, leading to slightly different expressions of the asymptotics. Nevertheless, Corollary \ref{cor:u1-intro} extends the core result on past $C^2$-inextendibility and stable curvature blow-up to the cases where $(M,\gamma)$ is a closed (orientable) constant curvature surface with $\mathrm{genus}(M)=0$ or $\mathrm{genus}(M)\geq 2$. 

\subsection{Previous work}\label{subsec:lit}

\change{$(2+1)$-dimensional gravity often provides tractable toy models compared to its higher dimensional analogues, especially for the purposes of Quantum Gravity.\footnote{See \cite{Witten88} and the recent review \cite{Car25}} In these contexts, one often attempts to resolve classical singularities such as the Big Bang singularity with the help of a given Quantum Gravity theory. Nevertheless, it is still useful even from this persective to study the properties of Big Bang singularities in classical $(2+1)$ gravity to understand where precisely deviations between these theories manifest. However, to the author's knowledge, the only results on past asymptotics of cosmological $(2+1)$ spacetimes are either restricted to vacuum solutions or follow implicitly from symmetry-reduced higher dimensional settings.\\

For $(2+1)$-dimensional cosmological \changereport{\textit{vacuum} }solutions, the geometric structure of the past singularity is well understood, albeit often without providing concrete asymptotics toward the singularity. Commonly, one tackles these with CMC foliations, and in particular with a CMC gauge that reduces the Einstein vacuum equations to a Hamilton-Jacobi system, see \cite{Mon89,Mon08}. In particular, it was shown by Andersson, Moncrief and Tromba in \cite{AMT97} that any $(2+1)$-dimensional vacuum spacetime containing a CMC hypersurface exhibits a crushing singularity in the direction of contraction. An alternative approach was pioneered by Mess in \cite{Mess07}\footnote{see also \cite{Mess07Notes} for further discussion of this work}, by which these spacetimes can be classified using measured geodesic laminations. This classification, which does not make reference to a time function by itself, was related to the cosmological perspective by Benedetti and Guadagnini in \cite{BenGua01}. Therein, the authors construct a cosmological time function that measures the geodesic distance of an event from the initial singularity, and the Big Bang can be viewed as a real tree\footnote{\changereport{A real tree is a metric space $(\mathcal{T},d)$ in which, for any distinct points $p,q\in\mathcal{T}$, there exists a unique curve connecting $p$ and $q$ that is isometric to the arc
\[\gamma: [0,d(p,q)]\longrightarrow(\R,d_{Eucl}),\quad \gamma(t)=t\,.\]}} that is in one-to-one correspondence to a measured geodesic lamination.
That this description of the initial singularity is independent of the choice of time function and thus also applies to CMC foliations was shown in \cite{And05,Bel14,Bel17}. Similarly, Mondal proved in \cite{Mondal20} that, in an appropriate CMC gauge, any vacuum solution curve follows a unique ray in the configuration space of the Hamilton-Jacobi system which approaches a unique point in the space of projective measured laminations toward the initial singularity. To the author's knoweldge, the only extension of such classification approaches to settings containing matter occurs in \cite{BenGua00} which studies certain Einstein solutions coupled to finitely many particles.\\}

\change{Moving on the Einstein scalar-field system and using the notation introduced in \eqref{eq:u1-corresp}, it was shown by Moncrief in \cite{Mon86} that a polarized $U(1)$-symmetric $(3+1)$-dimensional vacuum spacetime is a solution to the Einstein vacuum equations }if and only if $(I\times M,\g,\phi)$ is a solution to the $(2+1)$-dimensional Einstein scalar-field system.\footnote{The factor $\sqrt{4\pi}$ is, in \cite{Mon86}, $\sqrt{0.5}$ since the Einstein equations are normalized differently in that work compared to \eqref{eq:EEq1}.} Due to precisely this correspondence, \change{one is }able to obtain Corollary \ref{cor:u1-intro} from Theorem \ref{thm:main-intro-2+1}.\\
When considering two-dimensional spatial slices with $\mathrm{genus}(M)\geq 2$, the corresponding vacuum Milne solution is nonlinearly stable toward the future within the polarized $U(1)$-symmetry class; see \cite{CBM01}. Toward the past, the first results for the three-dimensional Einstein scalar-field system appear in \cite{IsMon02} and \cite{CBIM04}, in which Fuchsian methods are used to construct families of analytic solutions evolving from the Big Bang singularity which then translate to analytic \change{$(3+1)$}-dimensional $U(1)$-symmetric vacuum spacetimes evolving from the past singularity.\\

\change{In $(3+1)$ dimensions, the original BKL conjecture in \cite{BKL70} predicts that vacuum solutions generally become highly oscillatory as they approach the Big Bang, unless the solution satisfies an integrability condition that holds in polarized $U(1)$-symmetry. It was also predicted therein that this behaviour persists in the presence of most matter sources. }However, the authors of the conjecture themselves predicted that scalar field matter should form an exception to this and exhibit quiescent Big Bang formation in which geometric oscillations are dampened; see \cite{BK73}. This quiescent behaviour has been established rigorously in various recent past stability results for cosmological solutions to the Einstein scalar field system in \change{($d+1$) dimensions for $d\geq 3$}; see, for example, \changereport{\cite{Rodnianski2018, Rodnianski2014, BeyOl21, Speck2018, RodSp22, FU24, RodSpFou20, BOZ25, Zhe26, GPR23, FGR26, Li24}}. In particular, these results on quiescence admit asymptotic control on the solution variables and, consequently, the Kretschmann scalar. This means that the Strong Cosmic Censorship conjecture is verified for these solutions in the following sense: Towards the past, as causal geodesics become incomplete, the spacetime is $C^2$-inextendible since the Kretschmann scalar blows up.\\

Of particular relevance to \change{this }work, Fournodavlos, Rodnianski and Speck have studied the past asymptotics of the so-called generalized Kasner solutions in \cite{RodSpFou20}, i.e., of solutions \change{on }$(0,\infty)\times \mathbb{T}^d$ to the Einstein scalar-field and Einstein vacuum systems of the form
\begin{gather*}
\g_{\textrm{\normalfont Kas},d}=-dt^2+\sum_{i=1}^dt^{2p_i}\,\change{(dx^i)}^2,\quad \sum_{i=1}^dp_i=1,\ \sum_{i=1}^dp_i^2=1-8\pi\,A^2,\\
\del_t\phi_{\textrm{\normalfont Kas},d}=A\,t^{-1},\,\del_{\change{x^i}}\phi=0\,.
\end{gather*}
They showed that non-extremal\footnote{Kasner solutions where none of the Kasner exponents vanish and thus a past singularity forms.} vacuum Kasner spacetimes for $d=3$ are nonlinearly stable within the class of polarized $U(1)$-symmetric initial data toward the past. By the same correspondence as discussed above, this immediately implies nonlinear stability of $(2+1)$-dimensional Einstein scalar-field spacetimes of the form
\begin{align*}
\g_{\textrm{\normalfont Kas},2}=&\,-S^{2p_3}\,dS^2+S^{2(p_1+p_3)}\,\change{(dx^1)}^2+S^{2(p_2+p_3)}\,\change{(dx^2)}^2,\ \del_S\phi=p_3\,S^{-1}, \del_{\change{x^i}}\phi=0\,.
\end{align*}
After applying the coordinate transformation $dt=S^{p_3}\,dS,\ \tilde{x}_i=(1+p_3)^{-\frac12}\,\change{x^i}$, this becomes
\begin{align*}
\g_{\textrm{\normalfont Kas},2}=&\,-dt^2+t^{2q}\,\change{(d\tilde{x}^1)}^2+t^{2(1-q)}\,\change{(d\tilde{x}^2)}^2,\ \del_t\phi=\text{sgn}(p_3)\sqrt{q\,(1-q)}\,t^{-1},\del_{\change{\tilde{x}^i}}\phi=0,
\end{align*}
where $q=(p_1+p_3)/(1+p_3)$.\\

However, it is not obvious that this result extends to $M\times \mathbb{S}^1$ with $\text{genus}(M)\neq 1$: The proof crucially relies on a near-diagonal evolutionary system for the structure coefficients of a Fermi-Walker transported orthonormal frame, in which the only coefficients which may cause instability are automatically zero within the polarized $U(1)$-symmetry class. When perturbing spherical or hyperboloidal spatial metrics, however, these coefficients are no longer perturbed near zero, introducing linear terms of order $t^{-1}$ to the system that cause additional roadblocks to the analysis. Parts of \cite{RodSpFou20} have been extended to non-toroidal spatial geometries in \cite{GPR23}, which provides general conditions for \change{initial data for }the Einstein scalar-field system in spatial dimension $d\geq 3$ to form past singularities. However, \cite{GPR23} does not cover the polarized $U(1)$-symmetric vacuum setting or $d=2$. Corollary \ref{cor:u1-intro} fills in this gap in results on stable Big Bang formation in the case where $\text{genus}(M)\neq 1$. Since the \change{constant time }foliation of $\check{M}$ one obtains from the CMC foliation $(M_t)_{t\in(0,t_0)}$ of $\M$ is no longer CMC itself, the Kretschmann scalar needs to be rescaled to account for this difference in foliation compared to \cite[Theorem 6.3]{RodSpFou20}. Similarly, the induced foliation is also not immediately compatible with the setup in the \changereport{recent preprints \cite{FG24,FGR26}, which consider the singular initial value problem for quiescent Big Bang scenarios and its relation to quiescent asymptotics. The work \cite{FG24} does cover polarized $U(1)$-symmetric initial data on the singularity, but as discussed in \cite[Remark 1.20]{FG24}, it also only considers locally Gaussian developments emanating from the singularity, which our solutions may not admit. While \cite{FGR26} is able to show that quiescent Einstein scalar-field initial data converges to data on the singularity in the sense of \cite{FG24} for $d\geq 3$, it is again not clear how to extend this to the polarized $U(1)$-symmetric setting. Nevertheless, that quiescent asymptotics for the Einstein scalar-field system can be fully described in this manner indicates that such an extension could be viable in principle.}\\

To conclude the discussion of Einstein \change{vacuum and scalar-field }systems, we briefly discuss \change{the vacuum asymptotics in Gowdy symmetry }that effectively reduces the Einstein equations to a $(1+1)$-dimensional problem\change{, see \cite{Gowdy74,Ring10}}. Polarised solutions that are not quotients of Minkowski were shown to be stable in \cite{CIM90}. In \cite{Ring04}, it was also shown that a large class of unpolarized $\mathbb{T}^3$ Gowdy symmetric solutions are past stable, including perturbations of Kasner spacetimes with distinct exponents where the directions of symmetry are the directions in which the Kasner exponents are positive. For perturbations of Kasner spacetimes where one of these exponents is negative, solutions undergo a Kasner bounce if they break polarization, making the Kasner solutions generically unstable, see \change{\cite{Ring09Annals, Li24Gowdy}}. Consequently, one must also expect to encounter this instability for general non-polarized $U(1)$-symmetric perturbations. Indeed, Corollary \ref{cor:u1-intro} covers Kasner(-like) spacetimes with Kasner exponents $(2/3,2/3,-1/3)$ where the direction of symmetry is associated with the negative exponent.\\

Given that scalar fields seem to be very robust in stabilizing past asymptotics, one may wonder whether quiescent Big Bang formation persists when coupling with additional, more physical matter models. This was done for FLRW solutions with spatial geometry $\mathbb{T}^d$ with subcritical Euler fluids in spatial dimension $d\geq 3$ in \change{\cite{BO24EESF}}. Alternatively, since our universe can be modelled as a collisionless self-gravitating gas, Vlasov matter is also a model of cosmological interest; see \cite[Section 1.2 and 4.1]{BT08}.

Future stability of Einstein-Vlasov solutions with zero cosmological constant is well understood for the Minkowski spacetime and the $(3+1)$-dimensional Milne spacetime; see \cite{Tay17, LiTay20, FJS21, BFJSTh21} and \cite{AndFaj20} respectively. For accelerated expansion, future stability results can also be found in \cite{AndRin16} and \cite{Rin13}. The former covers the Einstein-Vlasov system in $\mathbb{T}^3$-Gowdy symmetry, the latter the Einstein Vlasov scalar-field system without symmetry or homgeneity assumptions.

In \change{$(2+1)$ }dimensions, isotropic solutions to the massive Einstein-Vlasov system on $I\times M$ of the form
\[-4\,dt^2+c\,t^2\,\gamma\]
where $\gamma$ is a homogeneous metric on $M$ and $c$ only depends on the genus of $M$ were shown to be nonlinearly stable in \cite{Faj17,Faj18}\change{, }see also \cite{Faj16} for an asymptotic analysis of isotropic solutions. In the negatively curved case, this corresponds to \change{future }stability of the vacuum Milne solutions; see Remark \ref{rem:covering-bases}.\\
 
Toward the past, results had long been restricted to the homogeneous case in \change{$(2+1)$ } gravity (see \cite{Faj17m0}) or to \change{$(3+1)$ }gravity under strong symmetry assumptions (see \changereport{\cite{Rein96, TR06, Tch04} }in \change{plane}, spherical and hyperbolic symmetry, as well as many results on Cosmic Censorship, including \cite{Wea04, DR06, DaRe16, Smu08, Smu11}). In recent joint work \cite{FU25} with David Fajman, we have shown the past stability of FLRW solutions to the Einstein scalar-field Vlasov system in \change{$(3+1)$ }dimensions, in which the scalar field dominates Vlasov matter and thus the spacetime asymptotics essentially fall in line with the Einstein scalar-field system discussed above. We note that the asymptotic equation of state satisfied by Vlasov matter matches precisely the critical threshold excluded in \cite{BO24EESF}. This is reflected in \change{the components of the Vlasov energy-momentum tensor only remaining close to their FLRW counterparts up to a leading order offset, which is caused by the interplay of metric degeneracies and momentum characteristics, both in \cite{FU25} and in Theorem \ref{thm:main-2+1} below. This effect is seen more strongly when viewing the Vlasov distribution on the mass shell, where they lead to anisotropic velocity concentration, see Remark \ref{rem:Vlasov-contra-2+1}}. \changereport{Complementary to \cite{FU25}, it was shown in \cite{AHS25} that generalised Kasner spacetimes are also stable within the Einstein scalar-field Vlasov Maxwell system for $m=0$ if they are \enquote{strongly subcritical}, i.e., the Kasner exponents are in the range of stability studied in \cite{RodSpFou20} and, additionally, the spacetime is not so anisotropic such that non-trivial Vlasov matter would become asymptotically dominant. While the proof only applies to $d\geq 3$ and small, massless Vlasov perturbations, the fact that Vlasov matter remains compatible with quiescence for any reasonable degree of anisotropy suggests that the assumption of isotropy of the background in the present work can be relaxed.\\}

While there are issues in exploiting the conformal structure of \change{$(2+1)$ }gravity toward the past as in \cite{Faj17,Faj18}, working in lower spatial dimensions allows \change{one }to simplify the proof compared to \cite{FU25}. In particular, many of the steps taken to control Vlasov matter can be directly extended from \cite{FU25}. The main contribution of the present paper thus lies in controlling the evolution of geometric variables, as discussed in Section \ref{sec:proof-outline}. 

\subsection{Comments on the proof and outline}\label{sec:proof-outline}
In this section, \change{I }outline the structure of this paper and comment on the main features of the proof of Theorem \ref{thm:main-intro-2+1}, the full version of which is given in Theorem \ref{thm:main-2+1}.\\

This work relies on a \change{$(2+1)$ }decomposition of the ESFV system using constant mean curvature (CMC) gauge with zero shift, often referred to as a constant mean curvature transported coordinate (CMCTC) gauge. While this means we cannot make use of the conformal structure of \change{$(2+1)$ }gravity used in prior work (see Section \ref{subsec:lit}), using CMCTC gauge seems difficult to avoid: Given that the AVTD picture means we have to expect the asymptotic results \eqref{eq:asymp-quiesc} and \eqref{eq:asymp-renorm}, the spatial metric will degenerate toward the Big Bang. However, the shift vector is determined up to a conformal Killing field via an elliptic equation \change{in the underlying gauge }in which metric terms explicitly occur and \change{thus, as in \cite[Remark 1.4]{FU24},  would }inherit any such degeneracy; \change{see also \cite[p.\,137, (3.28)]{Mon86} for the conformal $(2+1)$ gauge}. This would cause linear terms of order $t^{-1}$ or worse to appear in the evolution equations and prevent one from closing energy estimates.\\

In Section \ref{sec:prelim-2+1}, we state the ESFV equations in CMCTC gauge and collect the reference solutions as well as some facts regarding their scale factors. The considerations that need to be taken to ensure local well-posedness and to obtain sufficient continuation criteria are sketched in Section \ref{sec:id-bs}, along with setting up the necessary norms and energies to state the initial data assumptions and set up the bootstrap argument. \change{I }also discuss the analogous background solutions and initial data notions for the polarized $U(1)$-symmetric vacuum setting in Remark \ref{rem:u1-ref} and Remark \ref{rem:u1-init}.\\

Sections \ref{sec:AP} and \ref{sec:en-est-2+1} contain low order a priori bounds and energy estimates for the solution variables\change{\footnote{\change{i.e., the shear, the lapse, the scalar field and the Vlasov distribution, along with the spatial metric and the associated Ricci curvature}}}, respectively. \change{Proofs are kept }brief when they closely follow the arguments in \cite{FU25}. The essential differences in the approach compared to \cite{FU25} occur in Sections \ref{subsec:en-est-k} and \ref{subsec:en-est-metric}. Here, the fact that the Ricci tensor $\Ric[G]$ is pure trace means that any curvature terms in the evolution of the shear $\hat{k}^\sharp$ (see \eqref{eq:REEqk}) can be ignored in the evolution of $\E^{(L)}(\hat{k},\cdot)$. This significantly simplifies the energy estimates for the geometry as a whole, with two exceptions at order zero and at top order to deal with metric error terms that Vlasov matter introduces. 

In order to control Vlasov matter at order zero, one needs to control derivatives of the reference distribution, which introduces terms containing $\Gamma-\Gamhat$. To then not lose derivatives in the zero order energy estimate, we need a scaled low order energy estimate for the first order energy of $\hat{k}$, which we establish in Lemma \ref{lem:en-est-khat-scaled} using the Codazzi constraint. On the other hand, since Vlasov matter is controlled using the Sasaki metric \eqref{eq:sasaki-metric} which contains first order terms in the spatial metric, one incurs high order curvature terms that do not effectively cancel as they do in the geometric evolution. Since these arise from horizontal derivative terms in the Vlasov equation, \change{i.e., terms including lifts of spatial derivatives to the co-mass shell which are scaled favourably, }the scaled top order curvature energy estimate \eqref{eq:en-est-Ric-top-2+1} is sufficient to close the estimates. Since we only have to control the scalar curvature, one can use the momentum constraint to obtain this estimate directly without requiring additional high order estimates for the shear. Altogether, this allows us to establish a combined total energy estimate in Section \ref{subsec:en-est-total} for the total energy \eqref{eq:total-en-def}\change{,  see \eqref{eq:total-en-imp-2+1}.}\\

\change{However, as in \cite{FU25}, \eqref{eq:total-en-imp-2+1} on its own only provides weak control on components of the spacetime metric and on Vlasov matter since all of the respective energies enter into the total energy \eqref{eq:total-en-def} with a time-dependent weight. For the total Vlasov energy, these rates are introduced to mitigate the effect of borderline and linear shear terms in the commuted Vlasov equation, as well as of metric errors from the linearisation near $f_{FLRW}$, which would otherwise prevent us from closing the bootstrap argument. The additional weighted shear and curvature energies are then needed to obtain a closed estimate from the total energy, since Vlasov matter appears linearly in the evolution equations of all other variables. Since the Vlasov terms that enter the geometric evolution do so at a rate that is strictly weaker than $t^{-1}$, this is still sufficient to obtain an appropriate total energy estimate. In turn, the resulting total energy bound controls Vlasov matter sufficiently well to yield a strong bound on the quiescent system, see \eqref{eq:quiesc-en-imp-2+1}.}

To control Vlasov matter, one can use a hierarchy between horizontal and vertical derivatives in the rescaled Vlasov equation \eqref{eq:REEqVlasov} analogous to that in \cite{FU25}. This allows us to introduce scaled Vlasov energies into the total energy in a way that admits a closed total estimate and to obtain the non-scaled Vlasov energy estimates \eqref{eq:vlasov-indiv}, improving the remaining Vlasov bootstrap assumptions.\\

We note that one might expect three potential roadblocks from differences in decay rates compared to \cite{FU25}:
\begin{enumerate}
\item In the background solution, the Vlasov density $\rho_{FLRW}^{Vl}$ behaves as $t^{-\frac32}$ instead of $t^{-\frac43}$ toward the Big Bang, thus potentially having a larger impact on the spacetime asymptotics compared to the scalar field whose matter density behaves like $t^{-2}$ in both cases.
\item In the rescaled wave equation, spatial derivative terms are of order $1$ instead of order $t^\frac13$; see \eqref{eq:REEqWave-2+1} and \cite[(2.20p)]{FU25}.
\item In the rescaled Vlasov equation, the horizontal term diverges at order $t^{-\frac12}$ instead of $t^{-\frac13}$; see \eqref{eq:REEqVlasov} and \cite[(2.19a)]{FU25}.
\end{enumerate}
\change{The first point does lead to weaker convergence rates for the lapse and consequently all other variables in Theorem \ref{thm:main-2+1} due to its effect on the lapse equations; see \eqref{eq:REEqN}. }However, the Vlasov density and pressure are still strictly asymptotically dominated by scalar field matter, which suffices for our purposes. The other two points could only become problematic at top order, since at low orders, it similarly is sufficient that either of these rates are integrable in time to obtain the bounds in Section \ref{sec:AP}. \change{At top order, to deal with issues that do arise from the third point, the scaled energy estimate for the scalar curvature remains sufficient, see Lemma \ref{lem:en-est-R} and \eqref{eq:Ric-en-top-imp}.}\\

Once the energy bounds in Corollary \ref{cor:en-imp} are obtained, the main theorem in Section \ref{sec:main} follows by standard arguments as in, for example, \cite{Rodnianski2014,Speck2018,FU25}. \change{The asymptotic behaviour of the dual Vlasov distribution function $f^\sharp$ on the mass shell is related to that of the spatial metric since the renormalisation of velocities is similar to that used for the renormalised spatial metric $\mathcal{G}$ in Corollary \ref{cor:renorm}, which converges toward the Big Bang hypersurface. The metric $\mathcal{G}$ in Corollary \ref{cor:renorm} also corresponds to the spatial metrics used in the framework of \cite{GPR23}.}

\change{Finally, translating these results to the polarized $U(1)$ symmetric vacuum setting, }Corollary \ref{cor:u1} follows by taking the asymptotic control on the solution variables established in Theorem \ref{thm:main-2+1} and, with these in hand, computing the norm of the second fundamental form as well as the Kretschmann scalar with respect to the induced foliation. The necessary curvature formulas are collected in Section \ref{subsec:curvature-u1}. 

\subsection{Further applications}

To conclude the introduction, \change{I }briefly discuss two potential extensions of this work:

\begin{remark}[Polarized $U(1)$-symmetric Einstein-Vlasov solutions]\label{rem:wild-speculation}
Given Theorem \ref{thm:main-intro-2+1} and Corollary \ref{cor:u1-intro}, one is led to ask whether our approach can be applied to polarized $U(1)$-symmetric solutions to the Einstein-Vlasov system in \change{$(3+1)$ dimensions}, especially given that the scalar field driving quiescence in \change{$(2+1)$ }gravity for the ESFV system dominates Vlasov matter terms. When performing the same decomposition for the Einstein-Vlasov system as in \cite{Mon86}, one clearly obtains a different system from the one in Section \ref{subsec:eq-2+1}, since the Vlasov and wave equations become directly coupled. However, since we essentially deal with the scalar field and Vlasov matter components separately (see Section \ref{sec:proof-outline}), one \change{might }be able to apply similar ideas if Vlasov matter does not determine the spacetime geometry of the reference solution. Specifically, one \change{might }be able to apply similar ideas to spacetimes of the form
\[\left(\M=I\times\S^1\times M,\ \g=-e^{2\mu(s,r)}dr^2+e^{2\lambda(s,r)}dx^2+r^2\gamma, \change{f}=0\right)\,.\]
within polarized $U(1)$-symmetric solutions to the Einstein-Vlasov equations. In particular, after reducing to the corresponding \change{$(2+1)$ }system, the estimates for the geometry in Sections \ref{subsec:en-est-k} and \ref{subsec:en-est-metric} should largely extend. \change{If similar matter estimates to the ones in this work were to still go through, one would likely be able to show that the Kretschmann scalar blow-up exhibited by these specific vacuum Kasner-like solutions is stable within polarized $U(1)$-symmetric solutions to the Einstein equations, extending similar results in surface symmetry in \cite{Rein96} to polarized $U(1)$-symmetry. Furthermore, the resulting quiescent asymptotics would lead $f$ to asymptote to a small limiting distribution on the cotangent bundle of the Big Bang hypersurface. However, the matter estimates in the $(2+1)$ system would need to be fully reworked as they are now directly and fully coupled, while it is essential in this work that we can ensure quiescence before sharply controlling Vlasov matter. This is left as an open problem for future work.}
\end{remark}

\begin{remark}[Theorem \ref{thm:main-intro-2+1} and global stability]\label{rem:covering-bases}
\change{As discussed after Theorem \ref{thm:main-intro-2+1}, for $\text{genus}(M)=0$, time reflection symmetry implies that the results also show stability of the Big Crunch. A straightforward Cauchy stability argument as in \cite[Section 10]{FU24} shows that the past and future stability regimes can be connected to obtain a global stability result. Furthermore, one can likely also obtain global stability results for $\text{genus}(M)\geq 2$:}

Future global stability for the vacuum solutions on $(0,\infty)\times M\times \S^1$, where $M$ is of genus $\mathrm{genus}(M)\geq 2$, was shown in \cite{CBM01} for small polarized $U(1)$-symmetric data.\change{\footnote{See also the future stability of $(3+1)$ Milne spacetimes proven in \cite{AM03}.} }This was done precisely by showing that the \change{$(2+1)$ }Milne solution is future stable within the Einstein scalar-field equations, using the correspondence from \cite{Mon86} discussed above.

When all Vlasov terms are set to zero, this can be combined with Theorem \ref{thm:main-intro-2+1} (and, consequently, Corollary \ref{cor:u1-intro}) to a global stability result for the $(2+1)$-dimensional Einstein scalar-field system (respectively, a global stability result within the polarized $U(1)$-symmetry class of \change{$(3+1)$ }vacuum solutions): The difference in gauge choice between \cite{CBM01} and this work essentially only comes down to whether the shift determines $\del_t\sigma$ or vice versa (see \cite[p.\,1014]{CBM01}), where $\sigma$ is conformal to the evolved spatial metric $g$ with $R[\sigma]=-1$. Thus, one can straightforwardly transform between these gauge choices on a given initial data hypersurface. From there, the regimes of past and future stability can \change{again be connected using Cauchy stability.}

On the other hand, \change{\cite{Faj17} }builds upon the approach to control geometry in \cite{CBM01} with an energy formalism for massive Vlasov matter to show future stability of Milne spacetimes in the $(2+1)$-dimensional Einstein-Vlasov system. Thus, one can likely combine both results with Theorem \ref{thm:main-intro-2+1} to obtain global stability for near-FLRW data to the massive ESFV system with negatively curved spatial geometry, but we omit details to focus on Big Bang asymptotics.
\end{remark}

\section{Preliminaries}\label{sec:prelim-2+1}

\subsection{Notation}\label{subsec:not}

We will denote the Levi-Civita connection of a Semi-Riemannian metric $\mathfrak{g}$ by $\nabla[\mathfrak{g}]$ and its Christoffel symbols by $\Gamma[\mathfrak{g}]$. The associated volume form is denoted by $\vol{\mathfrak{g}}$, the associated trace operator by $\mathrm{tr}_\mathfrak{g}$ and the divergence with respect to $\nabla[\mathfrak{g}]$ by $\div_{\mathfrak{g}}$. In the case where $\mathfrak{g}$ is Riemannian, we denote the induced inner product and pointwise norm by $\langle\,\cdot\,,\,\cdot\,\rangle_{\mathfrak{g}}$ and $\lvert\,\cdot\,\rvert_\mathfrak{g}$ respectively and write $\langle\,\cdot\,\rangle_{\mathfrak{g}}=\sqrt{1+\lvert\,\cdot\,\rvert_{\mathfrak{g}}^2}$. For the Riemann curvature tensor $\Riem[\mathfrak{g}]$ of $\mathfrak{g}$, we use the sign convention
\[\Riem[\mathfrak{g}](X,Y)Z=\left(\nabla[\mathfrak{g}]_X\nabla[\mathfrak{g}]_Y-\nabla[\mathfrak{g}]_Y\nabla[\mathfrak{g}]_X-\nabla[\mathfrak{g}]_{[X,Y]}\right)Z\,.\]
for vector fields $X,Y$ and $Z$. The Ricci and scalar curvature are denoted by $\Ric[\mathfrak{g}]$ and $R[\mathfrak{g}]$ respectively. Recall that, when $\mathfrak{g}$ is a two-dimensional Riemannian metric, one has
\changereport{\[\Riem[\mathfrak{g}]_{ijkl}=\frac{R[\mathfrak{g}]}2\left(\mathfrak{g}_{ik}\,\mathfrak{g}_{jl}-\mathfrak{g}_{il}\,\mathfrak{g}_{jk}\right)\ \text{and}\ \Ric[\mathfrak{g}]_{ij}=\frac{R[\mathfrak{g}]}2\mathfrak{g}_{ij}\,.\]}

For most of this paper, we work with a $(2+1)$-dimensional spacetime $(\M, \g)$ that admits a time function $t \in C^\infty (\M)$ whose level sets $M_s= t^{-1} (\{s\}),\ s\in t(\M)$ are diffeomorphic to a closed orientable surface $M$ and have constant mean curvature for all \change{$s\in t(\M)$}. In particular, $\M$ is diffeomorphic to $I \times M$ where $I \subset \R$ is an open interval.\\

In constant mean curvature transported coordinate (CMCTC) gauge, $\g$ takes the form
\begin{equation}\label{eq:TC}
\g = - n^2 \, d t^2 + g
\end{equation}
where $g$ is a time-dependent Riemannian metric on $M$. \change{We will abbreviate $\nabla\equiv\nabla[g]$ and $\nabbar\equiv\nabla[\g]$. In our convention, the lapse $n$ is positive, i.e..,
\[n=(-\g(\nabbar t,\nabbar t))^{-\frac12}\,.\]}
The future directed timelike unit normal $e_0$ to $M_t$ is given by $e_0=n^{-1}\del_t$.\\

We will often work with the time-dependent rescaled Riemannian metric $G = a(t)^{-2} \, g$ on $M_t$ where the scale factor $a\in C^\infty(I)$ is introduced in Lemma \ref{lem:FLRW-2+1}. Note that, for a time-dependent vector field $X$ and a time-dependent tensor field $\mathfrak{T}$ on $M$, one has $\nabla[G]_X\mathfrak{T}=\nabla_X\mathfrak{T}$. We use $\Lap_G$ to denote the Laplace-Beltrami operator with respect to $G$. \\
\indent The second fundamental form is a symmetric $(0,2)$-tensor on $M_t$ given by
\[k(X,Y)=-\g(\nabbar_Xe_0,Y)\,.\]
\indent We will assume that the mean curvature $\tau:=\mathrm{tr}_gk$ of $M_t$ is given by 
\begin{equation}\label{eq:CMC-2+1}
\tau=-2\,\frac{\dot{a}}{a}\,.
\end{equation}

In the following, \change{Latin }indices $i,j,\dots$ refer to indices with respect to a given local coordinate system $(x^i)_{i=1,2}$ on $M$ or to the momentum coordinates \change{$(p_i)_{i=1,2}$ on the co-mass shell }induced by such coordinates; see Section \ref{subsec:resc}.

Lowercase Greek letters are spacetime indices with respect to $\M$, running from $0$ to $2$. The index $0$ always denotes tensor components relative to $e_0$. We use Einstein summation convention if not stated otherwise.\\

We raise and lower indices with respect to $\g$ for tensors on $\M$ and $g$ for those on $M$ without comment. When we raise indices of a tensor $\mathfrak{T}$ on $M$ with respect to $G$, the result is denoted by $\mathfrak{T}^\sharp$. To simplify notation, we will occasionally write $\mathfrak{T}_1\ast\mathfrak{T}_2$ for an arbitrary contraction of time-dependent tensor fields on $M$ with respect to $G$ up to multiplicative constant and $\nabla^I\mathfrak{T}$ for $I\in\{0,1,2,\dots\}$ when $I$ covariant derivatives of a time-dependent tensor $\mathfrak{T}$ are taken. \\

We will also work with polarized $U(1)$-symmetric $(3+1)$-dimensional spacetimes $(\check{M}\cong\overline{M}\times \S^1,\check{g})$. \changereport{In such settings, $\del_{\change{x^3}}$ }refers to the extended coordinate derivative on $\mathbb{S}^1$ and the index $3$ to components with respect to $\del_{\change{x^3}}$. \\

Finally, for real functions $\zeta_1,\zeta_2$ with $\zeta_2\geq 0$, we write $\zeta_1\lesssim\zeta_2$ if there exists a constant $K>0$ such that $\max\{\zeta_1,0\}\leq K\zeta_2$. If $\zeta_1$ is also nonnegative, we write $\zeta_1\simeq \zeta_2$ if $\zeta_1\lesssim \zeta_2$ and $\zeta_2\lesssim \zeta_1$ holds.

\subsection{The reference solutions}\label{subsec:FLRW}

Our main result deals with perturbations of FLRW backgrounds with isotropic scalar field and Vlasov matter. Using standard formulas for warped product spacetimes as in, for example, \cite[p.\,345]{ONeill83}, one shows that such backgrounds take the following form in the ESFV system \eqref{eq:EEq}.

\begin{lemma}[FLRW solutions to the ESFV system]\label{lem:FLRW-2+1}
Let $(\M,\g_{FLRW})$ be an FLRW spacetime of the form
\[\M=(0,T)\times M,\quad \g_{FLRW}=-dt^2+a(t)^2\,\gamma\]
where $T\in (0,\infty]$, $a\in C^\infty(0,T)$ is a positive function and $(M,\gamma)$ is a closed orientable surface with constant curvature $\kappa\in\R$. Further, let $\phibar$ and $\f_{FLRW}$ be spatially homogeneous and isotropic solutions to \eqref{eq:EEqW} and \eqref{eq:EEqVl-2+1} for $\g=\g_{FLRW}$, i.e., functions such that
\begin{subequations}
\begin{gather}
\label{eq:FLRW-wave-2+1}
\del_t\phibar=C\,a(t)^{-2},\quad \nabla\phibar=0, \\
\label{eq:FLRW-Vlasov-2+1}
\change{{\f}_{FLRW}(t,x,p_i)={\mathcal{F}}\left((\gamma^{-1})^{ij}\,p_i\,p_j\right)}
\end{gather}
for some $C\in\R$ and ${\mathcal{F}}\in C_c^\infty([0,\infty),[0,\infty))$.
We define
\begin{align}
{\rho}_{FLRW}^{Vl}(t):=&\ 2\pi\int_0^\infty R\sqrt{m^2\,a(t)^2+R^2}\,\mathcal{F}(R^2)\,dR\,,\\
{\mathfrak{p}}_{FLRW}^{Vl}(t):=&\,\pi\int_0^\infty \frac{R^3}{\sqrt{m^2\,a(t)^2+R^2}}\,\mathcal{F}(R^2)\,dR\,,
\end{align}
where $m=0$ in the massless case and $m=1$ in the massive case. 

Then, $(\M,\g_{FLRW},\phi_{FLRW},\f_{FLRW})$ is a solution to the ESFV system \eqref{eq:EEq} if and only if the Friedman equation is satisfied:
\begin{equation}\label{eq:Friedman-2+1}
\dot{a}^2=4\pi\,C^2\,a^{-2}+4\pi\,a^{-1}\left(\rho^{Vl}_{FLRW}+2\,\mathfrak{p}^{Vl}_{FLRW}\right)-{\kappa}\,.
\end{equation}
\end{subequations}
\end{lemma}
Note that $\rho^{Vl}_{FLRW}$ and $\mathfrak{p}^{Vl}_{FLRW}$ are rescalings of the non-zero components of $T^{Vl}[\g_{FLRW},\f_{FLRW}]$, i.e.,
\begin{equation*}
\rho^{Vl}_{FLRW}=a^3\,T_{00}^{Vl}[\g_{FLRW},\f_{FLRW}]\,\ \text{ and }\ 
\mathfrak{p}^{Vl}_{FLRW}=\frac12\,a^3\,\mathrm{tr}_{a^2\gamma}\left(T^{Vl}[\g_{FLRW},\f_{FLRW}]\right)\,.
\end{equation*}
Moreover, observe that $(\M,\g)$ is of the form \eqref{eq:TC} and that the mean curvature $\tau$ of the constant time surfaces $M_t$ satisfies \eqref{eq:CMC-2+1}. In particular, \eqref{eq:Friedman-2+1} implies the following identities for the mean curvature $\tau$.
\begin{subequations}
\begin{align}
\label{eq:tau2}\tau^2=&\ 16\pi\,C^2\,a^{-4}+16\pi\,a^{-3}\left(\rho^{Vl}_{FLRW}+2\,\mathfrak{p}^{Vl}_{FLRW}\right)-4\,\kappa\,a^{-2}\,,\\
\label{eq:delt-tau-2+1}\del_t\tau=&\ 16\pi\,C^2\,a^{-4}+8\pi\,a^{-3}\left(\rho^{Vl}_{FLRW}+4\,\mathfrak{p}^{Vl}_{FLRW}\right)-2\,\kappa\,a^{-2}\,.
\end{align}
\end{subequations}

Throughout, we assume $a(0)=0$, that $a$ is positive somewhere and that the scalar field is non-trivial, i.e., that $C\neq 0$. Under these assumptions, one obtains the following result as in \cite[Lemma 2.2]{FU25} and \cite[Lemma 3.7]{Speck2018}.

\begin{lemma}[Properties of the scale factor]\label{lem:scale-factor-2+1} The singular initial value problem \eqref{eq:Friedman-2+1} with $a(0)=0$ and $C\neq 0$ has a unique nonnegative solution on the maximal interval of existence $(0,T)$ for some $T\in(0,\infty]$. For $\kappa\leq 0$, one has $T=\infty$. Else, one has $T<\infty$, $\left(-\frac{T}2,\frac{T}2\right)\ni t\mapsto a\left(t+\frac{T}2\right)$ is even and the scale factor can be continuously extended to $a(T)=0$.\\

As $t\downarrow 0$, one has $a(t)\simeq \sqrt{t}$. Further, given $t_0<\frac{T}2$, $t\in(0,t_0)$ and $q\in(0,\infty)$, there exists a constant $K$ independent of $t,t_0$ and $q$ such that the following estimates hold.
\begin{subequations}
\begin{align}
\label{eq:Friedman-ineq}\sqrt{4\pi}\,\lvert C\rvert\,a(t)^{-1}\leq&\ \dot{a}(t)+\max\{0,\kappa\}\,,\\
\int_t^{t_0}a(s)^{-2-q}\,ds\leq&\,\frac{K}q\,a(t)^{-q}\,.\label{eq:a-int-est}
\end{align}
\end{subequations}
\end{lemma}
We recall that, for $\kappa>0$, the time reflection symmetry of $a$ implies that the entirety of Theorem \ref{thm:main-2+1} also applies toward the future for $t_0^\prime\in(\frac{T}2,T)$: Similar to how one argues in Section \ref{subsec:lwp-2+1}, one can use Cauchy stability in harmonic gauge to extend past $t=\frac{T}2$ while controlling the perturbation size and can then find a nearby CMC hypersurface to restart the stability analysis.

\begin{remark}[Reference solutions for Corollary \ref{cor:u1-intro}]\label{rem:u1-ref}
For $f_{FLRW}=0$, the FLRW solutions above correspond to the following polarized $U(1)$-symmetric $(3+1)$-dimensional vacuum solutions for some $t_1>0$.
\begin{equation}\label{eq:u1-FLRW}
\check{g}=\exp\left(2\sqrt{4\pi}\,C\int_t^{t_1}a(s)^{-2}\,ds\right)\left(-dt^2+a(t)^2\,\gamma\right)+\exp\left(-2\sqrt{4\pi}\,C\int_t^{t_1}a(s)^{-2}\,ds\right)\change{(dx^3)}^2\,.
\end{equation}
Using the curvature formulas in Section \change{\ref{sec:app} }for $\nabla\phi=0$, $\hat{k}=0$ and $n=1$ as well as the Friedman equation \eqref{eq:Friedman-2+1}, one computes that the Kretschmann scalar 
\[\check{\mathcal{K}}=\Riem[\check{g}]_{\alpha\beta\gamma\delta}\,\Riem[\check{g}]^{\alpha\beta\gamma\delta}\]
satisfies
\begin{align*}
\exp\left(4\sqrt{4\pi}C\int_t^{t_1}a(s)^{-2}\right)\check{\mathcal{K}}=&\,128\pi \,C^2\,a(t)^{-4}\left(\frac{\tau(t)}2+\sqrt{4\pi}\,C\,a(t)^{-2}\right)\\
=&\,128\pi\,C^2\,a(t)^{-8}\left(\sqrt{4\pi}\,C-\sqrt{4\pi\,C^2-\kappa\,a(t)^2}\right)^2\,.\numberthis\label{eq:Kretsch-ref}
\end{align*}

To better understand these solutions, we transform the spacetime metric $\check{g}$ into the more familiar Kasner-like form, depending on the signs of $\kappa$ and $C$. \\

For $\kappa=0$, the Friedman equation \eqref{eq:Friedman-2+1} with $a(0)=0$ is solved by $a(t)=2\,\pi^\frac14\,\lvert C\rvert^\frac12\,t^\frac12$\,.
In particular, there exists $t_1\in(0,\infty)$ such that $a(t_1)=1$. \\
\indent Assuming in addition that $C<0$, we use \eqref{eq:Friedman-2+1} to introduce the new time coordinate $S$.
\[\dot{S}(t)=\exp\left(\sqrt{4\pi}\,C\int_t^{t_1}a(s)^{-2}\,ds\right)=a(t),\ S(0)=0\,.\]
It follows that $S(t)$ is proportional to $t^\frac32$ and, consequently, that $\check{g}$ corresponds to the Kasner solution
\[\check{g}=-dS^2+S^\frac43\,\left(\change{(d{x}^1)}^2+\change{(d{x}^2)}^2\right)+S^{-\frac23}\,\change{(d{x}^3)}^2\,,\]
possibly upon rescaling the spatial coordinates by constants. Furthermore, one then has
\[a(t)^4\,\check{\mathcal{K}}=\exp\left(4\sqrt{4\pi}\,C\int_t^{t_1}a(s)^{-2}\,ds\right)\check{\mathcal{K}}=2048\,\pi^2\,C^4\, a(t)^{-8}\]
and thus that $\check{\mathcal{K}}$ is proportional to $S^{-4}$.

For $C>0$ and $\kappa=0$, one finds that the solution can be transformed to the following extremal Kasner solution: 
\begin{align*}
\check{g}=&\,-dS^2+(d\tilde{x}^1)^2+(d\tilde{x}^2)^2+S^2(d\tilde{x}^3)^2\,.
\end{align*}
Here, $S$ is proportional to $a$ and $\tilde{x}^i$ are time-independent rescalings of the original coordinates. One also sees from \eqref{eq:Kretsch-ref} that the Kretschmann scalar vanishes. Indeed, this simply is the Minkowski metric expressed in an unusual coordinate system.\\

For $\kappa\neq 0$, one obtains that the spacetime metric can be written as
\begin{equation}\label{eq:u1-ref-metric}
\check{g}=-dS^2+b(S)^2\,\gamma+b_3(S)^2\,\change{(dx^3)}^2\,.
\end{equation} 
where the scale factors asymptotically correspond to their spatially flat counterparts at leading order, as we compute in Section \ref{subsec:ref-u1}.

For $C<0$, one has $b(S)\simeq S^\frac23$ and $b_3(S)\simeq S^{-\frac13}$ as $S$ approaches $0$ and the leading order terms in $e^{4\sqrt{4\pi}\,\phi_{FLRW}}\check{\mathcal{K}}$ are also analogous to the spatially flat case.

For $C>0$, we obtain $b(S)\simeq 1$ and $b_3(S)\simeq S$ as $S$ approaches $0$. However, unlike in the spatially flat setting, $(\check{M},\check{g})$ exhibits a curvature singularity of order 
$S^{-2}$. Since the metric still asymptotes towards an extremal Kasner-like metric, one cannot expect such singularities to be stable. In fact, they appear to be unstable within polarized $U(1)$-symmetric solutions; see Remark \ref{rem:extreme}. This is why we do not consider $C>0$ (i.e., $\phi>0$) in Corollary \ref{cor:u1-intro}.
\end{remark}

\subsection{Rescaled variables and Sasaki metrics}\label{subsec:resc}

Before declaring our normalised solution variables, we have to introduce canonical local coordinates on the \change{co-mass shell}

\change{\begin{equation}\label{eq:mass-shell-2+1}
P=\bigsqcup_{t\in I}P_{t}\,,\quad P_{t}=\left\{(t,x,p)\in T^\ast\M\ \vert\ x\in M,\ p\in T^\ast_{(t,x)}\M,\ p_\mu \,p^\mu=-m^2,\ p^0>0\right\}\,.
\end{equation}}
Let $(U,(x^i)_{i=1,2})$ be a coordinate neighbourhood of $M$ and
\[V:=\left\{(t,x,p)\in P\ \vert\ x\in U\right\}\,.\] 
Then, for $w=(s,y,q)\in V$, we define $y^0(w)=s$ and $y^i(w)=x^i(y)$ for $i=1,2$. Furthermore, we introduce the canonical momentum coordinates \change{$(p_\mu)_{\mu=0,1,2}$ by
\[q=\left.p_0(w)\,\theta^0\right\vert_y+\left.p_i(w)\,d{x^i}\right\vert_{y}\,,\]
where $\theta^0$ is the dual coframe to the future directed unit normal $e_0$. 
Note that the co-mass shell relation and \eqref{eq:TC} imply
\[p^0=-p_0=\sqrt{m^2+\lvert p\rvert_g^2}\,.\]}
Then, \change{$(y^0,y^1,y^2,p_1,p_2)$ }is a local coordinate system on $V$ canonically derived from coordinates $(x^1,x^2)$ on $U$ \change{such that $(y^i,p_i)$ are time-independent}. In all that follows, we will use such canonically induced coordinates on the \change{co-mass shell }without further comment. We will write $t$ instead of $y^0(t,x,p)$ and $x^i$ instead of $y^i(t,x,p)$. \\

Now, we can introduce the following rescaled solution variables.

\begin{definition}[Expansion normalised variables]\label{def:resc} Consider a solution $(\M,\g,\phi,\overline{f})$ of the ESFV system \eqref{eq:EEq} in CMCTC gauge, i.e., where $\g$ is of the form \eqref{eq:TC} and $\M$ is foliated by constant time slices $M_t$ with constant mean curvature $\tau$. We define the following normalised variables as
\begin{subequations}
\begin{gather}
\label{eq:resc-G}G_{ij}=a^{-2}g_{ij},\ 
\ N=n-1,\ \Psi=a^2e_0\phi-C\,.
\end{gather}
We decompose the second fundamental form $k_{ij}$ as
\begin{equation}\label{eq:khat}
k_{ij}=\frac{\tau}2\,g_{ij}+\hat{k}_{ij}
\end{equation}
and refer to the tracefree spatial tensor $\hat{k}$ as the shear. 
Additionally, we define the rescaled momentum variables
\change{\begin{gather}
\label{eq:resc-Vlasov-mom}v_i=\,p_i\quad \text{and} \quad v^0=a\,p^0=\sqrt{m^2\,a^2+\lvert v\rvert_G^2}\,.
\end{gather}}
The rescaled Vlasov energy-momentum components are defined as follows:
\begin{gather*}
\rho^{Vl}=a^3\,T_{00}^{Vl},\ \changereport{\j^{Vl}_i=-a^2\,T_{0i}^{Vl}=a^2{(T^{Vl})^{0}}_i},\ \mathfrak{p}^{Vl}=\frac12\, a^3\,\text{tr}_gT^{Vl},\ S_{ij}^{Vl}=a^3\,T^{Vl}_{ij}\\
S_{ij}^{Vl,\parallel}=S_{ij}^{Vl}-\mathfrak{p}^{Vl}G_{ij}\,.
\end{gather*}
\end{subequations}
Note that $S^{Vl,\parallel}$ vanishes if the Vlasov distribution is isotropic.
\end{definition}

The rescaled metric $G$ on $M$ induces the Sasaki metric $\G$ on \change{$T^\ast M$ }as follows: For the local vector field frame \change{$\{\A_1,\A_2,\B^1,\B^2\}$ }given by
\begin{subequations}\label{eq:AB-2+1}
\change{\begin{align}
\A_i=&\,\del_{x^i}+v_l\,\Gamma[G]^l_{ij}\,\del_{v_j}\ \text{and}\\
\B^i=&\,\del_{v_i}\,,
\end{align}}
\end{subequations}
we set
\change{\begin{equation}\label{eq:sasaki-metric}
\G(\A_i,\A_j)=G_{ij},\ \G(\A_i,\B^j)=0,\ \G(\B^i,\B^j)=(G^{-1})^{ij}\ \text{for}\ i,j=1,2\,.
\end{equation}}
\noindent The Levi-Civita connection with respect to $\G$ is denoted by $\nabsak\equiv\nabla[\G]$. \change{Such metrics were first studied on the tangent bundle by Sasaki in \cite{Sak58}. We refer the recent review article \cite{A-CGaSar22} for a more detailed overview of this equivalent formulation on the cotangent bundle and co-mass shell and its application to relativistic kinetic theory.}

Note that the horizontal derivatives $\{\A_1,\A_2\}$ are orthogonal to the vertical derivatives \change{$\{\B^1,\B^2\}$ }with respect to $\G$. The latter spans an integrable distribution on \change{$T^\ast M$ }whose leaves are the tangent spaces \changereport{$T^\ast_xM,\ x\in M$}. Thus, we can distinguish between covariant derivatives taken in vertical directions, denoted by $\nabsak_{\mathrm{vert}}$ and those taken in horizontal directions, denoted by $\nabsak_{\mathrm{hor}}$, without making reference to local coordinates. \\
Similarly, this induces the Riemannian metric $\G\vert_{\mathrm{vert}}$ on the vertical distribution by
\change{\[\G\vert_{\mathrm{vert}}(\B^i,\B^j)=(G^{-1})^{ij}\,.\]}
Viewing $(G_{ij})$ in transported local coordinates as a matrix and denoting its determinant by $\det G$, note that one has
\change{\begin{equation}\label{eq:vol-form-Sasaki}
\vol{(\G\vert_{\mathrm{vert}})_x}=\sqrt{\det G^{-1}_x}\,d{v}^1\wedge dv^2=\vol{G^{-1}_{x}}\,.
\end{equation}}
In the following, we will write \changereport{$\nabsak_i=\nabsak_{\A_i}$ and \change{$\nabsak^{i+2}=\nabsak_{\B^i}$ }for $i=1,2$.\footnote{\changereport{This is not to be confused with slot labelling in abstract index notation.}} }We will also use the momentum-rescaled Sasaki metric ${\G}_0$ which is determined by 
\change{\[\G_0(\A_i,\A_j)=G_{ij},\ \G_0(\A_i,\B^j)=0,\ \G_0(\B^i,\B^j)=(v^0)^{-2}\,(G^{-1})^{ij}\,.\]}
The reference metrics $\underline{\gamma}$ and $\underline{\gamma}_0$ on $P_t$ are defined analogously for $i,j=1,2$.
\change{\begin{align*}
\underline{\gamma}\left(\del_{x^i}+v_m\Gamma[\gamma]^m_{li}\,\del_{v_l},\del_{x^j}+v_s\Gamma[\gamma]^s_{rj}\,\del_{v_r}\right)=&\,\gamma_{ij},\\ \underline{\gamma}(\del_{x^i}+\Gamma[\gamma]^m_{li}\,\del_{v_l},\del_{v_j})=&\,0\,,\\
\underline{\gamma}(\del_{v_i},\del_{v_j})=&\,(\gamma^{-1})^{ij}\,;\\[0.5em]
\underline{\gamma}_0\left(\del_{x^i}+v_m\Gamma[\gamma]^m_{li}\,\del_{v_l},\del_{x^j}+v_s\Gamma[\gamma]^s_{rj}\,\del_{v_r}\right)=&\,\gamma_{ij},\\ \underline{\gamma}_0(\del_{x^i}+v_m\Gamma[\gamma]^m_{li}\,\del_{v_l},\del_{v_j})=&\,0\,,\\
\underline{\gamma}_0(\del_{v_i},\del_{v_j})=&\,(m^2+\lvert v\rvert_\gamma^2)^{-1}\,\gamma_{ij}\,.
\end{align*}}

\subsection{The expansion normalised $2+1$ system in CMCTC gauge}\label{subsec:eq-2+1}

The rescaled metric and shear satisfy the following evolution equations.
\begin{subequations}
\begin{align*}
\numberthis\label{eq:REEqG-2+1}\del_tG_{ij}=&\,-2\,a^{-2}\,(N+1)\,\hat{k}_{ij}+2\,N\,\frac{\dot{a}}a\,G_{ij}\,,\\
\numberthis\label{eq:REEqk}\del_t{(\hat{k}^{\sharp})^i}_{j}=&\,\frac12\,(N+1)\,R[G]\,\I^{i}_{j}+\left(4\pi\,a^{-1}\,\rho^{Vl}_{FLRW}-2\,\kappa\right)\I^i_j+\tau\,N\,{(\hat{k}^{\sharp})^i}_{j}-\nabla^{\sharp i}\nabla_jN\\
&\quad -8\pi\nabla^{\sharp i}\phi\,\nabla_j\phi-8\pi\,a^{-1}\left({(S^{Vl,\parallel})^{\sharp i}}_j+\changereport{\frac12(\rho^{Vl}-\rho^{Vl}_{FLRW})\,\I^i_j}\right)\,.
\end{align*}
The Christoffel symbols and the scalar curvature with respect to $G$ evolve as follows:
\begin{align*}
\numberthis\label{eq:REEqChr-2+1}\del_t\Gamma[G]^i_{jl}=&\,-a^{-2}\left[\nabla_j\left((N+1)\,{(\hat{k}^{\sharp})^i}_l\right)+\nabla_l\left((N+1)\,{(\hat{k}^{\sharp})^i}_j\right)-\nabla^{\sharp i}\left((N+1)\,\hat{k}_{jl}\right)\right]\\
&\quad+\frac{\dot{a}}a\left[\I^i_l\,\nabla_jN+\I^i_j\,\nabla_lN-G_{jl}\,\nabla^{\sharp i}N\right]\,,
\\
\numberthis\label{eq:REEqR}\del_tR[G]=&\,-2a^{-2}\,\div_G\div_G\left((N+1)\,\hat{k}\right)-4\,N\,\frac{\dot{a}}a\,\left(R[G]-2\kappa\right)-8\,\kappa\,N\,\frac{\dot{a}}a \,.
\end{align*}
The Hamiltonian and momentum constraints take the form
\begin{align}
R[G]-2\,\kappa=&\,a^{-2}\,\lvert\hat{k}\rvert_G^2+8\pi\,a^{-2}\left[\lvert \Psi\rvert^2+2\,C\,\Psi\right]+16\pi\,a^{-1}\left({\rho}^{Vl}-{\rho}^{Vl}_{FLRW}\right),\label{eq:REEqHam-2+1}\\
(\div_G\hat{k})_{i}=&\,-8\pi\,(\Psi+C)\,\nabla_i\phi-8\pi\,{\j}_i^{Vl}\label{eq:REEqMom-2+1}.
\end{align}
Using \eqref{eq:tau2}-\eqref{eq:delt-tau-2+1}, the lapse equations can be written as
\begin{equation}\label{eq:REEqN}
\L N=H,\quad \Ltilde N=\tilde{H}
\end{equation}
with the elliptic operators
\[\L\zeta=a^2\Lap_G\zeta-h\,\zeta\quad \text{and}\quad \Ltilde\zeta=a^2\,\Lap_G\zeta-\tilde{h}\,\zeta\,,\]
where
\begin{align*}
h=&\,\left[16\pi\,C^2+16\pi\,a\left(\rho^{Vl}_{FLRW}+\mathfrak{p}^{Vl}_{FLRW}\right)-2\,\kappa\,a^2\right]+H\,,\\
\tilde{h}=&\,\left[16\pi\,C^2+8\pi a\left(\rho^{Vl}_{FLRW}+2\,\mathfrak{p}^{Vl}_{FLRW}\right)-2\,\kappa\,a^2\right]+\tilde{H}\,,\\
H=&\,\lvert \hat{k}\rvert_G^2+8\pi\,\Psi^2+16\pi\,C\,\Psi+8\pi\,a\left(\rho^{Vl}-\rho^{Vl}_{FLRW}\right)\text{ and}\\
\tilde{H}=&\,a^2\left(R[G]-2\kappa\right)-8\pi\,a^2\,\lvert\nabla\phi\rvert_G^2-8\pi\,a\left(\rho^{Vl}-\rho^{Vl}_{FLRW}\right)\,.
\end{align*}
The wave equation takes the form
\begin{align}
\label{eq:REEqnablaphi}\del_t\nabla\phi=&\,a^{-2}\,(\Psi+C)\,\nabla N+a^{-2}\,(N+1)\,\nabla\Psi\,,\\
\label{eq:REEqWave-2+1}\del_t\Psi=&\,\div_G\left((N+1)\,\nabla\phi\right)+2\,N\,\frac{\dot{a}}a\,(\Psi+C)\,.
\end{align}
\change{The Vlasov equation can be written as
\begin{equation}\label{eq:REEqVlasov}
\del_t{f}=-a^{-1}\,\frac{v^{\sharp j}}{v^0}\,(N+1)\,{\A}_j{f}+a^{-1}\,v^0\,\nabla_{j}N\,{\B}^jf.
\end{equation}

To extend the Vlasov equation to higher orders, we introduce the following differential operator $\X$ acting on time-dependent tensors $\mathfrak{T}$ on \change{$T^\ast M$}.

\begin{align*}\numberthis\label{eq:REEqVlasov-gen}
\X\mathfrak{T}:=&\,-a^{-1}\,\frac{v^j}{v^0}\,(N+1)\,\nabsak_j\mathfrak{T}+a^{-1}\,v^0\,\nabla_{j}N\,\B^{j}\mathfrak{T}\,.
\end{align*}}
\end{subequations}

Additionally, for time-dependent functions $\zeta$  on $M$ and $\xi$ on $TM$, \eqref{eq:REEqG-2+1} implies the following formulas that we will use below without further comment.
\begin{align*}
\del_t\int_M\zeta\,\vol{G}=&\,\int_M\left(\del_t\zeta-N\tau\zeta\right)\,\vol{G}\\
\change{\del_t\int_{T^\ast M}\xi\,\vol{\G}=}&\change{\,\int_{T^\ast M}\del_t\xi\,\vol{\G}}
\end{align*}
Finally, note
\begin{equation}\label{eq:nabla-int-TM}
\nabla_{i_1}\dots\nabla_{i_\ell}\int_{\change{T^\ast_\bullet M}}\xi\,\change{\vol{(\G\vert_{vert})_{\bullet}}}=\int_{\change{T^\ast_\bullet M}}\nabsak_{i_1}\dots\nabsak_{i_\ell}\xi\,\change{\vol{(\G\vert_{vert})_{\bullet}}}\,.
\end{equation}

\section{Initial data, local well-posedness and bootstrap assumptions}\label{sec:id-bs}

In this section, we state the initial data assumptions, describe how they initialize a unique maximal globally hyperbolic development toward the past and state the bootstrap assumptions.

\subsection{Solution norms and energies}

Let $\ell\geq 0$ be an integer and $\mu\in[0,\infty)$. For functions $\zeta$ on $\M$ and $\xi$ on the mass shell $P$, we introduce the following Sobolev and supremum norms:
\begin{align*}
\|\zeta\|_{H^\ell_G(M_t)}=&\,\left(\sum_{I=0}^\ell\int_{M_t}\lvert \nabla^I\zeta\rvert_G^2\,\vol{G}\right)^\frac12&
\|\xi\|_{H_{\mu,\G_0}^\ell(\change{T^\ast M_t})}=&\,\left(\sum_{I=0}^\ell\int_{\change{T^\ast M_t}}\langle v\rangle_G^{2\mu}\,\lvert\nabsak^{I}\xi\rvert_{\G_0}^2\,\vol{\G}\right)^\frac12\\
\|\zeta\|_{C^\ell_G(M_t)}=&\,\sum_{I=0}^\ell\sup_{x\in M_t}\lvert \nabla^I\zeta(t,x)\rvert_G&
\|\xi\|_{C_{\mu,\G_0}^\ell(\change{T^\ast M_t})}=&\,\sum_{I=0}^\ell\sup_{(x,v)\in \change{T^\ast M_t}}\langle v\rangle_G^{\mu}\,\left\lvert\nabsak^{I}\xi(t,x,v)\right\rvert_{\G_0}
\end{align*}
\vspace{-1em}
\change{\[\|\xi\|_{C_{\mu,\underline{\gamma}_0}^\ell(\change{T^\ast M_t})}=\sum_{I=0}^\ell\sup_{(x,v)\in \change{T^\ast M_t}}\langle v\rangle_{\gamma}^{\mu}\,\left\lvert\nabla[\underline{\gamma}]^{I}\xi(t,x,v)\right\rvert_{{\underline{\gamma}}_0}\]}

Using the rescaled variables introduced above, we can now define our main solution norms:
\begin{definition}[Solution norms]
\begin{subequations}
\begin{align*}
\numberthis\label{def:H}\H(t)=&\,\|\Psi\|_{H_G^{10}(M_t)}+\|\nabla\phi\|_{H_G^{9}(M_t)}+a(t)\,\|\nabla\phi\|_{\dot{H}^{10}_G(M_t)}+\|\hat{k}\|_{H_G^{10}(M_t)}+\|G-\gamma\|_{H^{10}_G(M_t)}\\
&\,+\|R[G]-2\,\kappa\|_{H^{8}_G(M_t)}+a(t)\,\|R[G]-2\,\kappa\|_{\dot{H}^{9}_G(M_t)}+\|f-f_{FLRW}\|_{H^{10}_{1, {\G}_0}(\change{T^\ast M_t})}\\
&\,+a(t)^{-1}\,\|N\|_{H^{8}_G(M_t)}\\
\numberthis\label{def:C}\C(t)=&\,\|\Psi\|_{C_G^{8}(M_t)}+\|\nabla\phi\|_{C_G^{7}(M_t)}+a(t)\,\|\nabla\phi\|_{C_G^8(M_t)}+\|\hat{k}\|_{C_G^{8}(M_t)}+\|G-\gamma\|_{C^{8}_G(M_t)}\\
&\,+\|R[G]-2\,\kappa\|_{C^{6}_G(M_t)}+\|\rho^{Vl}-\rho^{Vl}_{FLRW}\|_{C^8_G(M_t)}+\|S^{Vl,\parallel}\|_{C^8_G(M_t)}\\
&\,+a(t)^{-1}\,\|N\|_{C^{6}_G(M_t)}
\end{align*}
\end{subequations}
\end{definition}
We will see in the proof of Theorem \ref{thm:main-2+1} that $\mathcal{H}$ essentially controls $\mathcal{C}$, as in \cite[Corollary 7.8]{FU25}. Thus, the goal is to control $\mathcal{H}$ in a bootstrap argument, using the following associated energies:

\begin{definition}[Energies]\label{def:en}
Let $K,L$ be integers with $L\geq K\geq 0$. \change{We define the following energies for the scalar field, the shear, the lapse, the scalar curvature and the Vlasov distribution.}
\begin{subequations}
\begin{align}
\label{eq:def-en-sf}\E^{(L)}(\phi,\cdot)=&\,(-1)^L\int_{M}\left[\Psi\,\Lap_G^L\Psi-a^2\,\phi\,\Lap_G^{L+1}\phi\right]\vol{G}\,,\\
\E^{(L)}(\hat{k},\cdot)=&\,(-1)^L\int_M\langle\hat{k},\Lap_G^L\hat{k}\rangle_G\,\vol{G}\,,\\
\E^{(L)}(N,\cdot)=&\,(-1)^L\int_MN\Lap_G^LN\,\vol{G}\,,\\
\label{eq:def-en-R}\E^{(L)}(R,\cdot)=&\,(-1)^L\int_M \left(R[G]-2\,\kappa\right)\Lap_G^{L}(R[G]-2\,\kappa)\,\vol{G}\,,\\
\label{eq:def-en-Vl0}\E^{(L)}_{\mu,0}(f,\cdot)=&\,\int_{\change{T^\ast M_t}}\langle v\rangle^{2\mu}_G\,\lvert \nabsak^L_{vert}(f-f_{FLRW})\rvert^2_{\G_0}\,\vol{\G}\,,\\
\label{eq:def-en-VlK}\E^{(L)}_{\mu,K}(f,\cdot)=&\,\int_{\change{T^\ast M_t}} \langle v\rangle^{2\mu}_G\,\lvert\nabsak_{\mathrm{vert}}^{L-K}\nabsak_{\mathrm{hor}}^{K}f\rvert_{\G_0}^2\,\vol{\G}\,,\\
\E^{(\leq L)}_{\mu,\leq K}(f,\cdot)=&\sum_{R=1}^L\sum_{S=0}^{\min\{K,R\}}\E^{(R)}_{\mu,S}(f,\cdot)\,.
\end{align}
Let $\epsilon>0$ denote the square root of the perturbation size, see Assumption \ref{ass:init-2+1}, \change{and let $\omega>0$ be sufficiently small, as discussed below. }Then, we define the following total energies:
\change{\begin{align*}
\numberthis\label{eq:total-en-def}\E^{(L)}_\textrm{\normalfont total}=&\,\E^{(L)}(\phi,\cdot)+\left(\epsilon^\frac14+a^\frac{\omega}4}\right)\,\E^{(L)}(\hat{k},\cdot)+\left(\epsilon^\frac12+a^\frac{\omega}2\right)\,\E^{(L-2)}(R,\cdot)\\
&\,+\E^{(L)}_{total,Vl}+a^{2+\frac{\omega}4}\E^{(L-1)}(R,\cdot)\\
&\,+\begin{cases}
a^\frac{\omega}2\|{G^{\pm 1}-\gamma^{\pm 1}}\|_{L^2_G}^2+a^{2+\frac{\omega}2}\,\|\Gamma-\Gamhat\|_{L^2_G}^2 & L=0\\
a^\frac{\omega}2\|{G^{\pm 1}-\gamma^{\pm 1}}\|_{H^2_G}^2+a^{\frac{\omega}2}\,\|\Gamma-\Gamhat\|_{{H}^1_G}^2 & L=2\\
0 & \text{else}
\end{cases}\,,\\
\numberthis\label{eq:total-en-def-Vl}\E^{(L)}_\textrm{\normalfont total,Vl}=&\,\sum_{K=0}^La^{(K+1)\omega}\E^{(L)}_{1,K}(f,\cdot)\,,\\
\numberthis\label{eq:total-en-def-quiesc-2+1}\E^{(L)}_\textrm{\normalfont quiesc}=&\,\E^{(L)}(\phi,\cdot)+\epsilon^\frac14\E^{(L)}(\hat{k},\cdot)+\epsilon^\frac12\E^{(L)}(R,\cdot)\,.
\end{align*}
\end{subequations}
\end{definition}
\change{The total Vlasov energy is effectively weighted by $a^\omega$ to mitigate terms in the Vlasov integral estimates that would otherwise diverge at order $t^{-1}$ or slightly worse and prevent one from being able to obtain strong bounds from energy estimates. Essentially, one is able to show weak control on Vlasov matter by obtaining improved estimates for $\E^{(L)}_\textrm{\normalfont total}$, which suffices to obtain a strong bound on $\E^{(L)}_\textrm{\normalfont quiesc}$ since the scalar field dominates Vlasov matter toward the Big Bang. In turn, this strong bound on the quiescent variables yields a strong bound on Vlasov matter.  However, as in \cite{FU25}, one needs to introduce time-weighted geometric energies and metric error terms to ensure that the Vlasov energy hierarchy is compatible with the mechanism in $\E^{(L)}_{quiesc}$. We note that, in contrast to \cite{FU25}, we do not require an $\epsilon$-weight on the high order curvature term in $\E^{(L)}_\textrm{\normalfont total}$, since such a term now no longer occurs naturally in the quiescent energy due to the simplified energy estimates for $\hat{k}$ and $R$, see Section \ref{sec:en-est-2+1}.\\ }
In the following, $\omega$ will be small enough, but significantly larger than the perturbation size $\epsilon^2$ and the bootstrap parameter $\eta$; see Assumption \ref{ass:bs}. If $\eta$ and $\epsilon$ are taken to be sufficiently small, one can choose, for example, $\omega=\frac1{100}$.\\

Furthermore, we will often assume $L$ to be an even integer in the sequel to simplify the exposition. For the energies defined in \eqref{eq:def-en-sf}-\eqref{eq:def-en-R}, integration by parts immediately implies that, for $c>0$,
\begin{equation}\label{eq:ibp-trick-2+1}
\E^{(L-1)}\leq\frac12\left(\E^{(L)}+\E^{(L-2)}\right),\quad\text{and}\quad a^{-c\sqrt{\epsilon}}\,\E^{(L-1)}\leq\frac12\left(\E^{(L)}+a^{-2c\sqrt{\epsilon}}\E^{(L-2)}\right).
\end{equation}
Due to the weights in \eqref{eq:def-en-Vl0}-\eqref{eq:def-en-VlK}, this is only approximately true for Vlasov energies, but this will still be sufficient to improve \eqref{eq:total-en-def}; see \eqref{eq:ibp-Vlasov-1-1} for the Vlasov energy analogue. We also note that \eqref{eq:ibp-Vlasov-1-1} is the reason why one needs to consider metric error terms of up to second order in $\E^{(2)}_\mathrm{ total}$.

\subsection{Initial data}\label{subsec:init-2+1}

\change{Initial data for }the $(2+1)$-dimensional ESFV system \eqref{eq:EEq} takes the form $(M,\mathring{g},\mathring{k},\mathring{\pi},\mathring{\psi},\mathring{f})$, 
where $\mathring{g}$ and $\mathring{k}$ are symmetric $(0,2)$-tensors on $M$, $\mathring{\pi}$ is \changereport{an exact }(0,1)-tensor and $\mathring{\psi}$ and $\mathring{f}$ are scalar fields on $M$ and \change{$T^\ast M$}, respectively and which satisfy the constraint equations
\begin{align*}
R[\mathring{g}]+(\mathrm{tr}_{\mathring{g}}\mathring{k})^2-\lvert\mathring{k}\rvert_{\mathring{g}}^2=&\,8\pi\left[\mathring{\psi}^2+\lvert\mathring{\pi}\rvert_{\mathring{g}}^2\right]+16\pi\,\int_{\change{T^\ast_{\bullet}M}}\sqrt{m^2+\lvert p\rvert_{\mathring{g}}^2}\,\mathring{f}(\cdot,p)\,\vol{\change{\mathring{g}_\bullet^{-1}}}\,,\\
\div_{\mathring{g}}\mathring{k}=&\,-8\pi\,\mathring{\psi}\,\nabla[\mathring{g}]\mathring{\pi}-8\pi\int_{\change{T^\ast_{\bullet}M}}p\,\mathring{f}(\cdot,p)\,\vol{\change{\mathring{g}_\bullet^{-1}}}\,.
\end{align*}
We also assume that $\mathring{f}$ is nonnegative and that the momentum support of $\mathring{f}$ is compact, i.e., that
\[\sup\{\lvert p\rvert_{\mathring{g}}\,\vert\,(x,p)\in\supp\mathring{f}\}<\infty\,.\]
In the massless case $(m=0)$, we additionally assume that
\begin{equation}\label{eq:massless-cond}
\inf\{\lvert p\rvert_{\mathring{g}}\,\vert\,(x,p)\in\supp\mathring{f}\}>0\,.
\end{equation}
This is required to ensure well-posedness\change{, }see Section \ref{subsec:lwp-2+1}. The CMC condition is enforced by requiring 
\[\text{tr}_{\mathring{g}}\mathring{k}=-2\,\frac{\dot{a}(t_0)}{a(t_0)}\,.\] 
That such data, embedded into a spacetime $(\M,\g)$ solving \eqref{eq:EEq}, admits unique maximal globally hyperbolic developments follows from standard arguments as in \cite{FB52,CBGer69}.\\

We assume such data $(g,k,\nabla\phi,e_0\phi,\f)$ to be close to FLRW data in the following sense:

\begin{assumption}[Initial data assumption]\label{ass:init-2+1} At the initial time $t_0\in(0,\change{T/2})$ (see Lemma \ref{lem:scale-factor-2+1}) and $\epsilon\in(0,1)$ small enough, we take \change{initial data for the ESFV }system to be close to FLRW data in the following sense:
\begin{subequations}
\begin{align}
\label{eq:init-HC}\H(t_0)+\C(t_0)\leq&\,\epsilon^2\,,\\
\label{eq:init-Vlasov-low}\|f-f_{FLRW}\|_{C_{1,{\G}_0}^5(\change{T^\ast M_{t_0}})}\leq&\,\,\epsilon^2\,.
\end{align}
\end{subequations}
\end{assumption}
Note that \eqref{eq:init-HC} implies that all energies in Definition \ref{def:en} of order $L\leq 10$ are bounded by $\epsilon^4$ at $t=t_0$. 

\begin{remark}[Polarized $U(1)$-symmetric \change{initial data for the $(3+1)$ Einstein }vacuum equations]\label{rem:u1-init}
For Corollary \ref{cor:u1-intro}, we consider initial data $(M\times \S^1,\mathring{\tilde{g}},\mathring{\tilde{k}})$ to the vacuum equations that is polarized and $U(1)$-symmetric, i.e., $\mathring{\tilde{g}}$ and $\mathring{\tilde{k}}$ satisfy
\[\mathring{\tilde{g}}_{13}=\mathring{\tilde{g}}_{23}=\mathring{\tilde{k}}_{13}=\mathring{\tilde{k}}_{23}=0\]
on top of the vacuum constraint equations. Given the correspondence \eqref{eq:u1-corresp}, this induces initial data for the scalar field given by
\begin{subequations}\label{eq:u1-init-corresp}
\begin{equation}
\mathring{\pi}_i=\frac1{2\sqrt{4\pi}}\,\del_i\log(\mathring{\tilde{g}}_{33}),\quad \mathring{\psi}=-\frac{1}{\sqrt{4\pi}}\,\left(\mathring{\tilde{g}}_{33}\right)^{-\frac32}\,\mathring{\tilde{k}}_{33},
\end{equation}
and consequently data for the metric and the second fundamental form by 
\begin{equation}
\mathring{g}_{ij}=\left(\mathring{\tilde{g}}_{33}\right)^{-1}\,\mathring{\tilde{g}}_{ij},\quad \mathring{k}_{ij}=\sqrt{\left(\mathring{\tilde{g}}_{33}\right)}\,\mathring{\tilde{k}}_{ij}-\sqrt{4\pi}\,\mathring{\psi}\,\mathring{g}_{ij}\,.
\end{equation}
\end{subequations}
\end{remark}

\subsection{Local well-posedness}\label{subsec:lwp-2+1}

A local well-posedness theory for the $(2+1)$-dimensional Einstein-Vlasov-scalar-system in CMCTC gauge can be developed following the same arguments as in \cite[Section 3.3]{FU25}. More precisely, one first combines the well-posedness results \cite[Corollary 24.10]{Rin13} and \cite[Paper B, Lemma 9.2]{Sve12} in harmonic gauge with the approach in \cite{FajKr20} to find a CMC hypersurface close to general near-FLRW data, allowing one to assume that any initial data close to FLRW data is CMC without loss of generality. Then, one follows the proof of \cite[Theorem 14.1]{Rodnianski2014} to obtain local well-posedness results in CMCTC gauge, along with continuation criteria. \change{We note, as in \cite{FU25}, that a direct extension of the arguments in \cite{Rin13,Sve12} requires working with the dual Vlasov distribution $f^\sharp$ in terms of canonical coordinates on the mass shell, but the spacetime is sufficiently regular that one can translate between both frameworks at no additional cost.}\\

The new nuance compared to the $(3+1)$-dimensional case arises when considering the results in \cite[Paper B]{Sve12}, which are needed to obtain an analogue of the harmonic gauge well-posedness result \cite[Lemma 3.6]{FU25} for $m=0$, since \cite[Paper B]{Sve12} is restricted to odd spatial dimensions. However, as noted in \cite[Remark 5.2]{Sve12}, all of the results apply to even spatial dimensions if one adds an additional momentum weight $\langle p\rangle_\gamma^\alpha\lvert p\rvert_\gamma^{-\alpha}$ for $\alpha\in\R\backslash\Z,\,\alpha>0$ to all weighted norms for the Vlasov distribution function throughout the work and adapts the function spaces accordingly. Consequently, analogous additional momentum weights would have to enter our regularity assumptions and continuation criteria a priori. However, since we assume \eqref{eq:massless-cond} anyhow in the massless case, one verifies that these weights are uniformly bounded from above and below using the method of characteristics, as in Lemma \ref{lem:APMom-2+1}. Similarly, adding to the continuation criteria that
\[t\mapsto \inf\{\lvert p\rvert_g\,\vert\, x\in M,\,(t,x,p)\in\supp\f\}\]
is uniformly bounded from below by a positive constant is sufficient to resolve this technical hurdle.\\

\change{\noindent In summary, the following holds.
\begin{lemma}[Local well-posedness and continuation criteria for the $(2+1)$ ESFV system]
For sufficiently small $\epsilon>0$, the $(2+1)$-dimensional ESFV system \eqref{eq:EEq} in CMCTC gauge with initial data as in Assumption \ref{ass:init-2+1} is locally well-posed, launching a foliation $(M_t)_{t\in(t_-,t_0]}$ of $\M$ for some $t_->0$. The solution can be extended to $t=t_-$ as long as, approaching $t_-$, the following continuation criteria hold for some $Q>0$:
\begin{enumerate}
\item The eigenvalues of $g_{t,x}$ remain bounded from below by $Q$.
\item One has $n\geq Q$.
\item One has $\lvert e_0\phi\rvert^2+\lvert\nabla\phi\rvert_g^2\geq Q$.
\item For $m=0$, one has
\[\inf_{t\in(t_-,t_0]}\inf\{\lvert p\rvert_g\,\vert\, x\in M,\,(t,x,p)\in\supp\f\}\geq Q\,.\]
\item The following quantities remain uniformly bounded.
\begin{gather*}
\|g\|_{C^2_\gamma(M_t)},\ \|\hat{k}\|_{C^1_\gamma(M_t)},\ \|e_0\phi\|_{C^1_\gamma(M_t)},\ \change{\|\nabla\phi\|_{C^1_\gamma(M_t)},\ }\|\rho^{Vl}\|_{C^1_\gamma(M_t)},\ \|\j^{Vl}\|_{C^1_\gamma(M_t)},\ \|S^{Vl}\|_{C^1_\gamma(M_t)},\\
\change{\sup_{x\in M}\int_{T^\ast_xM}\langle p\rangle_\gamma^2\,\left[\lvert\f(t,x,p)\rvert^2+(m^2+\lvert p\rvert_\gamma^2)\,\lvert \del_{p_i}\f(t,x,p)\rvert_\gamma^2\right]\,\vol{\mathring{g}^{-1}_x}}\,.
\end{gather*}
\end{enumerate}
\end{lemma}}

With regards to the final bound for the integrated Vlasov distribution, the equivalent criterion \cite[(3.11c)]{FU25} in the higher dimensional setting contains the additional weight $\langle p\rangle_\gamma^{2\ell}(p_\gamma^0)^{-2\ell}$, where the integer $\ell$ depends on the regularity of the initial data. However, in the massive case, this weight is identical to $1$ and in the massless case, it is uniformly bounded from above and below on the support of $f$ and can thus be dropped due to criterion (4).\\

In addition, one can choose the data to be sufficiently regular such that all energies used within this paper are continuously differentiable in time, which we tacitly assume on top of Assumption \ref{ass:init-2+1} without loss of generality. This does not impact the continuation criteria (1)--(5).

\begin{remark}[On local well-posedness for polarized $U(1)$-symmetric vacuum solutions]
For the sake of this work, we refrain from separately establishing a local well-posedness statement for the vacuum solutions arising from polarized $U(1)$-symmetric vacuum data as discussed in Remark \ref{rem:u1-init}. Instead, the local well-posedness result for the \change{Einstein (Vlasov) scalar-field system }in CMCTC gauge and $(2+1)$-dimensions is sufficient for our purposes: Polarized $U(1)$-symmetric vacuum data induces data for the Einstein scalar-field system; see Remark \ref{rem:u1-init}. The maximal Einstein scalar-field solution this launches then induces a globally hyperbolic development of the original vacuum data by the correspondence \eqref{eq:u1-corresp}. $U(1)$-symmetry and polarization are then propagated in the induced foliation $(M_t\times\S^1)_{t\in(t_-,t_0]}$ of $\check{M}$ by construction. While this development is not a priori past maximal, as evidenced by the extremal Kasner solution discussed in Remark \ref{rem:u1-ref}, for the solutions in Corollary \ref{cor:u1-intro}, we show that the Kretschmann scalar of the vacuum solution blows up as $t\downarrow 0$, ensuring that the full past development is obtained. 
\end{remark}

\subsection{Bootstrap assumptions}

The bootstrap interval will be denoted by $(t_{Boot},t_0]$, $t_{Boot}\in[0,t_0)$. On this interval, we make the following bootstrap assumptions:

\begin{assumption}[Bootstrap assumptions]\label{ass:bs}
For $\eta\gg\epsilon^\frac18>0$ sufficiently small, one has
\begin{subequations}
\begin{align}
\label{eq:bs-ass-HC}\C(t)\lesssim&\, \epsilon\,a(t)^{-c\,\eta}\quad\text{and}\\
\label{eq:bs-ass-Vlasov}\|f-f_{FLRW}\|_{C^5_{1,\G_0}(\change{T^\ast M_t})}\lesssim&\,{\epsilon}^\frac14\,a(t)^{-c\,\eta}
\end{align}
for any $t\in (t_{Boot},t_0]$.
\end{subequations}
\end{assumption}
One may take $\eta=\epsilon^\frac1{16}$. The goal of Sections \ref{sec:AP} and \ref{sec:en-est-2+1} is to obtain \eqref{eq:bs-imp} for some $c^\prime>0$, which improves the bootstrap assumptions for sufficiently small $\epsilon>0$. Consequently, we may implicitly update the constant $c$ from line to line throughout the bootstrap argument.

Throughout the bootstrap argument in Section \ref{sec:AP} and \ref{sec:en-est-2+1}, we always assume $t\in(t_{Boot},t_0]$.

\section{A priori estimates}\label{sec:AP}

\subsection{Strong low order pointwise bounds}
In this section, we collect strong bounds in $C_G^\ell$ for $\ell\leq 6$ that immediately follow from the bootstrap assumption. For the most part, these bounds can be derived as in \cite[Section 4]{FU25}, so we only sketch these arguments. 

\begin{lemma}[Zero order bounds for kinetic variables]\label{lem:AP0-2+1}
\begin{subequations}
\begin{align}
\|\hat{k}\|_{C^0_G(M_t)}\lesssim&\,\epsilon\,,\label{eq:APkhat0}\\
\|\Psi\|_{C^0(M_t)}\lesssim&\,\epsilon\,.\label{eq:APPsi0}
\end{align}
\end{subequations}
\end{lemma}
\begin{proof}
Using \eqref{eq:REEqk} and that the $\hat{k}$ is tracefree, we compute
\begin{align*}
\frac12\,\del_t\left(\lvert \hat{k}\rvert_G^2\right)=&\,\frac12\,\del_t\left((\hat{k}^{\sharp})^{b}_{\ c}\,(\hat{k}^{\sharp})^c_{\ b}\right)\\
=&\,-\langle \nabla^2N,\hat{k}\rangle_G-8\pi\,\langle \nabla\phi\,\nabla\phi,\hat{k}\rangle_G-8\pi\,a^{-1}\,\langle S^{Vl,\parallel},\hat{k}\rangle_G-N\,\tau\,\lvert \hat{k}\rvert_G^2\,.
\end{align*}

The bootstrap assumptions \eqref{eq:bs-ass-HC} imply
\begin{equation*}
\lvert\hat{k}\rvert_G(t)\lesssim \lvert\hat{k}\rvert_G(t_0)+\int_t^{t_0}\epsilon\, a(s)^{-1-c\eta}\,ds +\int_t^{t_0}\epsilon\,a(s)^{-1-c\sqrt{\epsilon}}\,\lvert \hat{k}\rvert_G(s)\,ds\,.
\end{equation*}
Estimate \eqref{eq:APkhat0} now follows from the Gronwall lemma, Lemma \ref{lem:scale-factor-2+1} and the initial data assumption \eqref{eq:init-HC}. An analogous argument proves \eqref{eq:APPsi0}.
\end{proof}
As a direct consequence, we obtain that the momentum support remains compact and, in the massless case, bounded away from the origin:
\begin{lemma}[Momentum support bounds]\label{lem:APMom-2+1} The following momentum support bound holds.
\change{\begin{subequations}
\begin{equation}\label{eq:APMom-2+1}
\sup_{x\in M}\sup_{v\in\supp f(t,x,\cdot)}v^0\lesssim a(t)^{-c\sqrt{\epsilon}},\quad \sup_{x\in M}\sup_{v\in\supp f(t,x,\cdot)}\lvert v\rvert_\gamma\lesssim 1\,.
\end{equation}
For $m=0$, one additionally has
\begin{equation}\label{eq:APMomMassless-2+1}
\sup_{x\in M}\sup_{v\in\supp f(t,x,\cdot)}\lvert v\rvert_G^{-1}\lesssim a(t)^{-c\sqrt{\epsilon}},\quad \sup_{x\in M}\sup_{v\in\supp f(t,x,\cdot)}\lvert v\rvert_\gamma^{-1}\lesssim 1\,.
\end{equation}
\end{subequations}}
\end{lemma}
\begin{proof}Let $x\in M_t$ and $v\in\supp f(t,x,\cdot)$. By the method of characteristics, the solution of \eqref{eq:REEqVlasov} is given by
\[f(t,x,v)=\mathring{f}(t,X(t;x,v),V(t;x,v)),\]
where, writing $V^0=\sqrt{m^2\,a^2+\lvert v\rvert_G^2}\,$, $V$ solves the initial value problem
\change{\begin{align*}
\dot{V}_i=&\,-a^{-1}\,(N+1)\,(G^{-1})^{jk}\Gamma[G]^l_{ij}\,\frac{V_kV_l}{V^0}+a^{-1}V^0\,\nabla^{\sharp i}N\,,\\
V(t_0;x,v)=&\,v
\end{align*}}
and $X$ is given by
\begin{align*}
\dot{X}^i=&\,a^{-1}\,(N+1)\,\frac{(G^{-1})^{ij}V_j}{V^0}\,,\\
X(t_0;x,v)=&\,x\,.
\end{align*}
\change{The bootstrap assumption \eqref{eq:bs-ass-HC} then yields
\[\lvert \dot{V}\rvert_G\lesssim a^{-1-c\sqrt{\epsilon}}\,V^0\,.\]

The strong low order bound \eqref{eq:APkhat0},  along with \eqref{eq:bs-ass-HC}, implies $\lvert \del_tG\rvert_G\lesssim \epsilon a^{-2}$ using \eqref{eq:REEqG-2+1}. Thus, \eqref{eq:APMom-2+1} follows from these bounds using the Gronwall lemma and that the momentum support is bounded initially. \eqref{eq:APMomMassless-2+1} is proven analogously, noting that $\del_t\gamma=0$.
\end{proof}}

From this, one obtains the following bounds, which in turn control low order nonlinear error terms in the energy estimates:

\begin{lemma}[Strong low order $C_G$-bounds]\label{lem:AP-2+1}
\begin{subequations}
\begin{align}
\|\hat{k}\|_{C_G^6(M_t)}+\|\Psi\|_{C_G^6(M_t)}\lesssim&\,{\epsilon}\,a(t)^{-c\sqrt{\epsilon}}\,,\label{eq:APkin}\\
\|\nabla\phi\|_{C_G^5(M_t)}+\|G-\gamma\|_{C_G^6(M_t)}+\|R[G]-2\,\kappa\|_{C_G^4(M_t)}\lesssim&\,\sqrt{\epsilon}\,a(t)^{-c\sqrt{\epsilon}}\,,\label{eq:APder}\\
\|f-f_{FLRW}\|_{C_{1,\G_0}^5(\change{T^\ast M_t})}\lesssim&\,\sqrt{\epsilon}\,a(t)^{-c\sqrt{\epsilon}}\,,\label{eq:APVlasov-2+1}\\
\|\rho^{Vl}-\rho^{Vl}_{FLRW}\|_{C_G^5(M_t)}+\|S^{Vl,\parallel}\|_{C_G^5(M_t)}+\|\j^{Vl}\|_{C_G^5(M_t)}\lesssim&\,\sqrt{\epsilon}\,a(t)^{-c\sqrt{\epsilon}}\,.\label{eq:APVlasovder}
\end{align}
\end{subequations}
\end{lemma}
\begin{proof}
\eqref{eq:APkin}-\eqref{eq:APder} are proven as in \cite[Lemma 4.2]{FU25}, \eqref{eq:APVlasov-2+1} as in \cite[Lemma 4.8 and 4.9]{FU25} and \eqref{eq:APVlasovder} as in \cite[Corollary 4.10]{FU25}. Considering the shear bound in \eqref{eq:APkin}, for example, note that the tracefree part $(\del_t\hat{k})^\parallel$ of $\del_t\hat{k}$ satisfies the following bound for $J\in\{1,2,\dots,6\}$; see \eqref{eq:REEqk}.
\[\left\|(\del_t\hat{k}^\sharp)^\parallel\right\|_{C^{J}_G(M_t)}\lesssim \tau(t)\,\|N\|_{C^J_G(M_t)}\,\|\hat{k}\|_{C^J_G(M_t)}+\|N\|_{C^{J+2}_G(M_t)}+\|\nabla\phi\|_{C^{J}_G(M_t)}^2+a(t)^{-1}\|S^{Vl,\parallel}\|_{C^J_G(M_t)}\,.\]
One also obtains that
\begin{align*}
\left\|[\del_t,\nabla^J]\hat{k}^\sharp\right\|_{C^0_G(M_t)}\lesssim&\,\tau(t)\,\|N\|_{C^J_G(M_t)}\,\|\hat{k}\|_{C^J_G(M_t)}+a(t)^{-2}\,\|\hat{k}\|_{C^0_G(M_t)}\,\|\hat{k}\|_{\dot{C}^J_G(M_t)}\\
&\,+a(t)^{-2}\,\|N+1\|_{C^J_G(M_t)}\,\|\hat{k}\|_{C^{J-1}_G(M_t)}^2\,.
\end{align*}
Assume that the bound $\|\hat{k}\|_{C^{J-1}_G(M_t)}\lesssim \epsilon\,a(t)^{-c\sqrt{\epsilon}}$ is already established -- for $J=1$, this follows from \eqref{eq:APkhat0}. The bootstrap assumption \eqref{eq:bs-ass-HC} and the zero order \eqref{eq:APkhat0} then imply 
\[-\del_t\|\nabla^J\hat{k}\|^2_{\dot{C}^{J}_G(M_t)}\lesssim \epsilon\,a(t)^{-2}\,\|\hat{k}\|^2_{\dot{C}^J_G(M_t)}+\left(\epsilon^2\,a(t)^{-2-c\sqrt{\epsilon}}+\epsilon\,a(t)^{-1-c\eta}\right)\|\hat{k}\|_{\dot{C}^J_G(M_t)}\,.\]
This is sufficient to prove the bound at order $J$ using the Gronwall lemma. Completing the iteration, we obtain the shear bound in \eqref{eq:APkin}. The remaining bound in \eqref{eq:APkin} follows analogously using the already obtained shear bound and \eqref{eq:APder} along with \eqref{eq:APkin}.\\

Regarding \eqref{eq:APVlasov-2+1}, one considers the rescaled Vlasov equation \eqref{eq:REEqVlasov} as an inhomogeneous transport equation for $f-f_{FLRW}$, where the inhomogeneities arise from reference terms depending on derivatives of $f_{FLRW}$, which can be bounded using \eqref{eq:APkin}-\eqref{eq:APder}. At order $0$, the bound follows by controlling $f-f_{FLRW}$ along the characteristics, similar to the proof of Lemma \ref{lem:APMom-2+1}.

From here, one proceeds iteratively after commuting the Vlasov equation with $\A^{K}\B^{J-K}$ for increasing $K=0,\dots,J$.\footnote{Since $M$ is compact and $f$ has compact momentum support, one can restrict the rest of this proof to coordinate neighbourhoods.} For $J=1,K=0$, using \eqref{eq:APkhat0} to control the borderline shear term and \eqref{eq:bs-ass-Vlasov} to control the horizontal derivative terms, one can apply the Gronwall lemma along with \eqref{eq:APMom-2+1} to obtain
\change{\[(v^0)\lvert\B(f-f_{FLRW})\rvert_{\G\vert_{vert}}\lesssim \epsilon^\frac14\,.\]}
Using this bound for $J=1,K=1$, one locally obtains
\[\lvert\A(f-f_{FLRW})\rvert_{G}\lesssim \sqrt{\epsilon}\,a^{-c\sqrt{\epsilon}}\]
and then \eqref{eq:APVlasov-2+1} for $C^1_{1,\G_0}(\change{T^\ast M_t})$ by repeating the first step with this improved bound in place of the bootstrap assumption and combining the local estimates. For $J=2,3,4$ and $5$, one proceeds analogously. Finally, \eqref{eq:APVlasovder} is an immediate consequence of \eqref{eq:nabla-int-TM}, \eqref{eq:APMom-2+1} and \eqref{eq:APVlasov-2+1}.
\end{proof}

\subsection{Relating Sobolev norms to energies}

With these low order estimates in hand, we can now show that our energies are almost coercive:

\begin{prop}[Near-coercivity of energies]\label{lem:near-coerc} Let $\ell$ be an integer with $0\leq\ell\leq 9$. Then, the following estimates hold, where energies with superscript $\ell-2$ or $\ell-3$ only appear when $\ell\geq 5$:
\begin{subequations}\label{eq:near-coerc}
\begin{align*}
\numberthis\label{eq:near-coerc-SF}\|\Psi\|_{H^{\ell}_G(M_t)}+a(t)\,\|\nabla\phi\|_{H^{\ell}_G(M_t)}\lesssim&\,\sqrt{\E^{(\ell)}(\phi,t)}+a(t)^{-c\sqrt{\epsilon}}\left(\sqrt{\E^{(\leq \ell-2)}(\phi,t)}+\epsilon \sqrt{\E^{(\leq \ell-2)}(R,t)}\right)\,,\\
\numberthis\label{eq:near-coerc-khat}\|\hat{k}\|_{H^{\ell}_G(M_t)}\lesssim&\,\sqrt{\E^{(\ell)}(\hat{k},t)}+a(t)^{-c\sqrt{\epsilon}}\left(\sqrt{\E^{(\leq \ell-2)}(\hat{k},t)}+\epsilon^2\sqrt{\E^{(\leq \ell-3)}(R,t)}\right)\,,\\
\numberthis\label{eq:near-coerc-N}\|N\|_{H^{\ell}_G(M_t)}\lesssim&\,\sqrt{\E^{(\ell)}(N,t)}+a(t)^{-c\sqrt{\epsilon}}\left(\sqrt{\E^{(\leq \ell-3)}(N,t)}+\epsilon \sqrt{\E^{(\leq \ell-3)}(R,t)}\right)\,,\\
\numberthis\label{eq:near-coerc-R}\|R[G]-2\,\kappa\|_{H^\ell_G(M_t)}\lesssim&\,\sqrt{\E^{(\ell)}(R,t)}+a(t)^{-c\sqrt{\epsilon}}\sqrt{\E^{(\leq \ell-2)}(R,t)}\,,\\
\numberthis\label{eq:near-coerc-Vl1}\|\rho^{Vl}-\rho^{Vl}_{FLRW}\|_{H^\ell_G(M_t)}\lesssim&\, a(t)^{-c\sqrt{\epsilon}}\left(\sqrt{\E^{(\leq \ell)}_{1,\leq \ell}(f,t)}+\|\change{G^{- 1}-\gamma^{-1}}\|_{L^2_G(M_t)}\right)\,,\\
\numberthis\label{eq:near-coerc-Vl2}\|S^{Vl,\parallel}\|_{H^\ell_G(M_t)}\lesssim&\, \changereport{a(t)^{-c\sqrt{\epsilon}}\left(\sqrt{\E^{(\leq \ell)}_{1,\leq \ell}(f,t)}+\|G^{-1}-\gamma^{-1}\|_{L^2_G(M_t)}}\right)\,,\\
\numberthis\label{eq:near-coerc-Vl3}\|\j^{Vl}\|_{H^\ell_G(M_t)}\lesssim&\,a(t)^{-c\sqrt{\epsilon}}\sqrt{\E^{(\leq \ell)}_{1,\leq \ell}(f,t)}\,,\\
\numberthis\label{eq:near-coerc-Vl4}\change{\|f-f_{FLRW}\|_{H^\ell_{1,\underline{G}_0}(T^\ast M_t)}\lesssim}&\change{\,\sqrt{\E^{(\ell)}_{1,\leq \ell}(f,t)}+a(t)^{-c\sqrt{\epsilon}}\left(\sqrt{\E^{(\leq \ell-1)}_{1,\leq\ell-1}(f,t)}+\epsilon\,\sqrt{\E^{(\leq \ell-2)}(R,t)}}\right)\,,\\
&\,\change{+a(t)^{-c\sqrt{\epsilon}}\left(\|\Gamma-\Gamhat\|_{H^{\ell-1}_G(M_t)}+\|G^{\pm 1}-\gamma^{\pm 1}\|_{H_G^{\ell-1}(M_t)}\right)}\,.
\end{align*}
\end{subequations}
We also have the following bound
\begin{align}
\label{eq:norm-est-nablaphi}\|\nabla\phi\|_{H^9(M_t)}\lesssim&\,a(t)^{-c\sqrt{\epsilon}}\left(\sqrt{\E^{(\leq 10)}(\hat{k},t)}+\sqrt{\epsilon}\,\sqrt{\E^{(\leq 9)}(\phi,t)}+\sqrt{\E^{(\leq 9)}_{1,\leq 9}(f,t)}+\sqrt{\E^{(\leq 8)}(R,t)}\right)
\end{align}
\end{prop}
\begin{proof}
For \eqref{eq:near-coerc-SF}-\eqref{eq:near-coerc-R}, the proof proceeds as in \cite[Lemma 4.5]{FU24} by integration by parts and commuting derivatives, where low order terms in the resulting commutator errors are controlled using Lemma \ref{lem:AP-2+1}. Conversely, \eqref{eq:near-coerc-Vl1}-\eqref{eq:near-coerc-Vl3} follow as in \cite[Lemma 4.14]{FU25}, using \eqref{eq:nabla-int-TM}, \eqref{eq:APMom-2+1} as well as Jensen's inequality. The zero order metric error terms in \eqref{eq:near-coerc-Vl1}-\eqref{eq:near-coerc-Vl3} arise from the reference quantity in the zero order estimate. \change{The Vlasov distribution bound \eqref{eq:near-coerc-Vl4} is proven as in \cite[Lemma 4.13]{FU25}. In particular, higher order metric terms in the second line only arise from computing derivatives of the FLRW reference distribution and not from rearranging derivatives into the specific order required by the Vlasov energy.}

Finally, by rearranging the momentum constraint \eqref{eq:REEqMom-2+1}, differentiating and applying Lemma \eqref{eq:APkin} and \eqref{eq:APVlasovder} to control lower order terms, one obtains
\[\|\nabla\phi\|_{H^9(M_t)}\lesssim a(t)^{-c\sqrt{\epsilon}}\,\left(\|\hat{k}\|_{H^{10}_G(M_t)}+\|\j^{Vl}\|_{H^{9}_G(M_t)}\right)+\sqrt{\epsilon}\,a(t)^{-c\sqrt{\epsilon}}\,\|\Psi\|_{H^{9}_G(M_t)}\,.\]
The bound \eqref{eq:norm-est-nablaphi} then follows from applying \eqref{eq:near-coerc-khat}, \eqref{eq:near-coerc-Vl3} and \eqref{eq:near-coerc-SF}.
\end{proof}

Finally, we can prove the following lemma that allows us to extend \eqref{eq:ibp-trick-2+1} to energies of the Vlasov distribution except for one special case.

\begin{lemma}
Let $K,L$, with \change{$0< K< L$ }and $(K,L)\neq (1,2)$. Then, the following estimates hold.
\begin{subequations}
\begin{equation}\label{eq:ibp-Vlasov-gen}
a(t)^{-c\sqrt{\epsilon}}\,\E^{(L-1)}_{1,K}(f,t)\lesssim\E^{(L)}_{1,K}(f,t)+a(t)^{-2c\sqrt{\epsilon}}\,\change{\E^{(L-2)}_{1,\min\{K,L-2\}}(f,t)}
\end{equation}
Moreover, given $\omega>0$, one has
\begin{align*}
\numberthis\label{eq:ibp-Vlasov-1-1}
a(t)^{2\omega-c\sqrt{\epsilon}}\E^{(1)}_{1,1}(f,t)\lesssim a(t)^{3\omega}&\,\E^{(2)}_{1,2}(f,t)+a(t)^{\omega-2c\sqrt{\epsilon}}\,\E^{(0)}_{1,0}(f,t)\\
&\,+\change{a(t)^{\frac{\omega}2}}\left(a(t)^{\frac{\omega}2}\,\|\Gamma-\Gamhat\|_{H^1_G(M_t)}^2+a(t)^{\frac{\omega}2}\,\|\change{G^{\pm 1}-\gamma^{\pm 1}}\|_{H^2_G(M_t)}^2\right)\,.
\end{align*}
\end{subequations}
\end{lemma}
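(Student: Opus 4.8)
The statement is two integration-by-parts inequalities for the Vlasov energies, and the natural strategy is to mimic the proof of \eqref{eq:ibp-trick} fibrewise on $TM_t$ and then handle the weights and the special structure of the $(1,1)$ case. I would start with \eqref{eq:ibp-Vlasov-gen}. Write $\Phi = \nabsak_{\mathrm{vert}}^{L-1-K}\nabsak_{\mathrm{hor}}^{K}(f-f_{FLRW})$ (with $f$ in place of $f-f_{FLRW}$ when $K\geq 1$, since $f_{FLRW}$ has no horizontal derivatives), so that $\E^{(L-1)}_{1,K}$ is $\int_{TM}\langle v\rangle_G^2\,\lvert\nabsak_{\mathrm{vert}}\Phi\rvert_{\G_0}^2\,\vol{\G}$ up to the appropriate vertical/horizontal split. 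The key point is that the Sasaki-metric volume form $\vol{\G}$ restricted to a fibre equals $\vol{G_x}$ by \eqref{eq:vol-form-Sasaki}, so on each fibre $T_xM$ one can integrate by parts against the flat-in-$v$ vertical Laplacian $\Lap_{\mathrm{vert}}$ associated to $G_x$, with no boundary terms because $f$ has compact momentum support (Lemma \ref{lem:APMom}). This gives, fibrewise, the Cauchy–Schwarz/interpolation inequality $\lvert\nabsak_{\mathrm{vert}}\Phi\rvert^2 \lesssim \lvert\Phi\rvert\,\lvert\nabsak^2_{\mathrm{vert}}\Phi\rvert + (\text{curvature}\cdot v\text{-weight errors})$; the errors come from the fact that the weight $\langle v\rangle_G^2$ and the metric $\G_0$ depend on $v$ and from commutators of $\nabsak_{\mathrm{vert}}$ with the horizontal derivatives already present in $\Phi$, and all of these are controlled by $a^{-c\sqrt\epsilon}$ via Lemma \ref{lem:AP} and the momentum support bound \eqref{eq:APMom}. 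After a weighted Young's inequality ($2ab\leq \delta a^2+\delta^{-1}b^2$ with $\delta$ a power of $a^{c\sqrt\epsilon}$) and integrating over $x$ and $v$, one lands on \eqref{eq:ibp-Vlasov-gen}; here the hypothesis $(K,L)\neq(1,2)$ is used only insofar as for these cases there is no competing top-order horizontal term to worry about, so the right-hand side stays within $\E^{(L)}_{1,K}$.

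For \eqref{eq:ibp-Vlasov-1-1} the obstruction is exactly that for $(K,L)=(1,2)$ the natural "one derivative up" from $\E^{(1)}_{1,1}$ mixes a vertical and a horizontal derivative, and raising the horizontal index produces $\nabsak_{\mathrm{hor}}^2 f = \nabsak^{\mathrm{hor}}_{\mathrm{hor}} f$, which is $\E^{(2)}_{1,2}$, plus a genuinely new term: commuting $\nabsak_{\mathrm{hor}}$ past the first vertical derivative produces curvature of the Sasaki connection, i.e. terms proportional to $\Riem[G]$ or $\nabla\Riem[G]$ acting on $\nabsak f$, and in two dimensions $\Riem[G]$ is pure trace in $R[G]$, hence controlled — up to the background value $2\kappa$ — by $\|R[G]-2\kappa\|$ and, after one more derivative hits the Christoffel symbols inside $\A_i$, by $\|\Gamma-\Gamhat\|_{H^1_G}$ and $\|G-\gamma\|_{H^2_G}$. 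So the plan is: write $\E^{(1)}_{1,1}(f) = \int \langle v\rangle_G^2 \lvert \nabsak_{\mathrm{vert}}\nabsak_{\mathrm{hor}} f\rvert_{\G_0}^2\vol{\G}$ (or $\nabsak_{\mathrm{hor}}\nabsak_{\mathrm{vert}}$ up to commutator), integrate by parts in the vertical direction fibrewise as above to get $\lesssim \sqrt{\E^{(2)}_{1,2}(f)}\,\sqrt{\E^{(0)}_{1,0}(f)} + (\text{Sasaki-curvature error})$, apply weighted Young, and bound the curvature error by $\|\Gamma-\Gamhat\|_{H^1_G}^2 + \|G-\gamma\|_{H^2_G}^2$ times the sup-norm Vlasov bound \eqref{eq:APVlasov}, which is $\lesssim\sqrt\epsilon\, a^{-c\sqrt\epsilon}$ and thus absorbable with a tiny power $a^{\omega/4}$ to spare. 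The powers of $a^\omega$ on the right are chosen so that, after distributing $a^{2\omega - c\sqrt\epsilon}$ onto the three pieces, each matches the weight assigned to $\E^{(2)}_{1,2}$, $\E^{(0)}_{1,0}$, and the order-$2$ metric error terms in $\E^{(2)}_{\mathrm{total}}$ respectively (recall $\omega \gg \sqrt\epsilon$, so $a^{2\omega-c\sqrt\epsilon}\leq a^\omega$ etc.).

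The main obstacle, and the step that needs care rather than routine estimation, is the bookkeeping of the Sasaki-connection commutators in the $(1,1)$ case: one must check that commuting $\nabsak_{\mathrm{hor}}$ with $\nabsak_{\mathrm{vert}}$ (and with the weight $\langle v\rangle_G$ and the fibre metric $(v^0)^2 G$ defining $\G_0$) produces only terms that are (i) at most first order in $\Gamma-\Gamhat$ and second order in $G-\gamma$, so that the $H^1_G$ and $H^2_G$ norms suffice, and (ii) accompanied by at most one vertical derivative of $f$, so that what multiplies the curvature is the already-controlled low-order quantity \eqref{eq:APVlasov} rather than a top-order Vlasov norm — otherwise one would lose a derivative. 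The two-dimensionality is what makes (i) work, since $\Riem[G]$, $\Ric[G]$ and all their contractions reduce to $R[G]$; this is precisely the structural simplification flagged in Section \ref{sec:proof-outline}. Everything else — compact momentum support to kill boundary terms, $a^{-c\sqrt\epsilon}$ control of $v^0$ and of $\langle v\rangle_G$-weight ratios, and the weighted Young inequality — is routine and parallel to \cite[Lemma 4.14]{FU24} and the proof of \eqref{eq:ibp-trick}.
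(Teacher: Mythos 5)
Your treatment of \eqref{eq:ibp-Vlasov-gen} is fine and matches the paper's proof (fibrewise vertical integration by parts using \eqref{eq:vol-form-Sasaki}, compact momentum support from Lemma \ref{lem:APMom} to kill boundary terms, a weighted Young inequality to split the weight). But your treatment of \eqref{eq:ibp-Vlasov-1-1} has a genuine gap, starting from a misreading of the definition.

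You write $\E^{(1)}_{1,1}(f) = \int \langle v\rangle_G^2\,\lvert\nabsak_{\mathrm{vert}}\nabsak_{\mathrm{hor}} f\rvert_{\G_0}^2\,\vol{\G}$ and then propose to ``integrate by parts in the vertical direction fibrewise.'' But from Definition \ref{def:en}, $\E^{(L)}_{1,K}$ involves $\nabsak_{\mathrm{vert}}^{L-K}\nabsak_{\mathrm{hor}}^{K}$, so $\E^{(1)}_{1,1}$ has \emph{zero} vertical derivatives and one horizontal derivative. There is no vertical derivative to move, so vertical integration by parts simply does not apply here (which is precisely why $(K,L)=(1,2)$ is excluded from \eqref{eq:ibp-Vlasov-gen}); the integration by parts in \eqref{eq:ibp-Vlasov-1-1} is \emph{horizontal}, interpolating between $\E^{(2)}_{1,2}$ (two horizontal, zero vertical) and $\E^{(0)}_{1,0}$ (zero derivatives). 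As a consequence, your attribution of the $\|\Gamma-\Gamhat\|_{H^1_G}$ and $\|G-\gamma\|_{H^2_G}$ terms to Sasaki-curvature commutators between $\nabsak_{\mathrm{hor}}$ and a (nonexistent) $\nabsak_{\mathrm{vert}}$ is not the mechanism at work. The actual source is the reference distribution: $\E^{(0)}_{1,0}$ is built from $f-f_{FLRW}$, not $f$, so after horizontal integration by parts one must split $f=(f-f_{FLRW})+f_{FLRW}$ and deal with $\nabsak_{\mathrm{hor}} f_{FLRW}$. Since $f_{FLRW}(t,x,v)=\mathcal{F}(\lvert v\rvert_\gamma^2)$, the horizontal derivative picks up $\nabsak_{\mathrm{hor}}f_{FLRW}=v\ast v\ast\gamma\ast(\Gamma-\Gamhat)\ast\mathcal{F}^\prime(\lvert v\rvert_\gamma^2)$, and one further horizontal derivative produces the $H^1_G$ norm of $\Gamma-\Gamhat$ and the $H^2_G$ norm of $G-\gamma$. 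This step is entirely absent from your plan; without it you have no way to reach $\E^{(0)}_{1,0}$ (which needs $f-f_{FLRW}$), and no source for the metric error terms. The two-dimensionality and pure-trace structure of $\Riem[G]$ that you invoke is not what makes this lemma go through.
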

\begin{proof}
The first bound follows from a straightforward integration by parts, applying Young's inequality and rearranging. For \eqref{eq:ibp-Vlasov-1-1}, we proceed similarly. Here, we expand $f$ into $(f-f_{FLRW})+f_{FLRW}$ and use
\[\nabsak_{\mathrm{hor}}f_{FLRW}=v\ast v\ast \change{\gamma^{\pm 1}}\ast (\Gamma-\Gamhat)\ast\mathcal{F}^\prime(\lvert v\rvert_\gamma^2)\,\]
as well as \eqref{eq:APMom-2+1} and \eqref{eq:APder} to bound the momentum supports of $f$ and $f_{FLRW}$ respectively by $a^{-c\sqrt{\epsilon}}$ up to constant. For details, we refer to the proof of \cite[Lemma 7.3]{FU25}. 
\end{proof}

\section{Energy estimates}\label{sec:en-est-2+1}

\noindent For all of the estimates in this section, take $L\in\{0,2,4,6,8,10\}$ and energies of \enquote{negative order} are understood to vanish.

The core simplification in our argument compared to that in \cite{FU24,FU25} occurs in Section \ref{subsec:en-est-k}, which is the core geometric estimate. To handle high order curvature terms arising from Vlasov matter, we additionally need to prove a new scaled integral energy estimate in Lemma \ref{lem:en-est-R} using the momentum constraint. The remaining parts of the argument are similar to those in \cite[Sections 5-6]{FU24} or \cite[Sections 5-6]{FU25}; we will keep their discussions relatively brief. 

\subsection{Elliptic lapse estimate}

First, we derive elliptic bounds for the lapse. \change{The former is required to estimate high order lapse terms through the evolution equations, while the latter is needed  to obtain sharp bounds on the lapse and improve the bootstrap assumptions.}

\begin{lemma}[Estimates for the lapse]\label{lem:en-lapse} The following energy estimates hold.
\begin{align*}
\numberthis\label{eq:en-lapse1}a(t)^4\,&\,\E^{(L+2)}(N,t)+a(t)^2\,\E^{(L+1)}(N,t)+\E^{(L)}(N,t)\\
\lesssim &\,\epsilon^2\,\E^{(L)}(\hat{k},t)+\E^{(L)}(\phi,t)+a(t)^{2-c\sqrt{\epsilon}}\left(\E^{(\leq L)}_{1,\leq L}(f,t)+\|G^{-1}-\gamma^{-1}\|_{L^2_G(M_t)}^2\right)\\
&\,+\epsilon^2\,a(t)^{-c\sqrt{\epsilon}}\left(\E^{(\leq L-2)}(\hat{k},t)+\E^{(\leq L-2)}(\phi,t)\right)+\epsilon\,a(t)^{-c\sqrt{\epsilon}}\,\E^{(\leq L-3)}(R,t)\,,\\
\numberthis\label{eq:en-lapse2}a(t)^4\,&\,\E^{(L+2)}(N,t)+a(t)^2\,\E^{(L+1)}(N,t)+\E^{(L)}(N,t)\\
\lesssim&\,a(t)^{4-c\sqrt{\epsilon}}\,\E^{(\leq L)}(R,t)+\epsilon\,a(t)^{2-c\sqrt{\epsilon}}\,\E^{(\leq L)}(\phi,t)\\
&\,\change{+a(t)^{2-(L+1)\omega-c\sqrt{\epsilon}}\,a(t)^{(L+1)\omega}\E^{(\leq L)}_{1,\leq L}(f,t)+a(t)^{2-\frac{\omega}2-c\sqrt{\epsilon}}\,a(t)^\frac{\omega}2\|G^{-1}-\gamma^{-1}\|_{L^2_G(M_t)}^2}\,.
\end{align*}
\end{lemma}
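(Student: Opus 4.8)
The statement is a pair of elliptic estimates for the lapse $N$ obtained from the two equations $\L N = H$ and $\Ltilde N = \tilde H$ in \eqref{eq:REEqN}. The basic scheme is the standard one for such scaled elliptic operators: test the equation $\L N = H$ against $N$, against $-\Lap_G^L N$, and against $\Lap_G^{L+1}N$ (and similarly for $\Ltilde N = \tilde H$), integrate by parts, and use that the zeroth-order coefficient $h$ (resp.\ $\tilde h$) is bounded below by a strictly positive constant multiple of $1$. Indeed, by Lemma \ref{lem:scale-factor} and the bootstrap assumptions the bracketed terms in $h,\tilde h$ are $\gtrsim 16\pi C^2 > 0$ once $\epsilon$ is small, since $a\rho^{Vl}_{FLRW}, a\mathfrak{p}^{Vl}_{FLRW} \to 0$, $\kappa a^2 \to 0$, and the perturbative pieces $H,\tilde H$ are controlled by $\C(t)^2 \lesssim \epsilon^2 a^{-c\eta}$ via Lemma \ref{lem:AP}. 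Hence $\int_M |\nabla^{I+1} N|_G^2\,\vol G \cdot a^2 + \int_M |\nabla^I N|_G^2\,\vol G \lesssim \int_M |\nabla^I H|_G \cdot |\nabla^I N|_G\,\vol G$ plus commutator errors, which after Cauchy--Schwarz/Young absorbs the $N$-terms and leaves $\E^{(L)}(N)$, $a^2\E^{(L+1)}(N)$, $a^4\E^{(L+2)}(N)$ bounded by $\|H\|_{H^L_G}^2$ (resp.\ $\|\tilde H\|_{H^L_G}^2$), up to lower-order commutator terms carrying the harmless weight $a^{-c\sqrt\epsilon}$ from Lemma \ref{lem:AP}.

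For \eqref{eq:en-lapse1}, the right-hand side then comes from expanding $H = |\hat k|_G^2 + 8\pi\Psi^2 + 16\pi C\Psi + 8\pi a(\rho^{Vl}-\rho^{Vl}_{FLRW})$ in $H^L_G$: the dominant linear term $16\pi C\Psi$ gives $\E^{(L)}(\phi,t)$ after using near-coercivity \eqref{eq:near-coerc-SF}; the quadratic terms $|\hat k|_G^2$ and $\Psi^2$ are estimated by pairing one factor in $C^6_G$ (Lemma \ref{lem:AP}, contributing $\epsilon$ or $\epsilon^2$ and a factor $a^{-c\sqrt\epsilon}$) with the other in $H^L_G$, then invoking \eqref{eq:near-coerc-khat}, \eqref{eq:near-coerc-SF} and splitting off the lower-order pieces $\E^{(\leq L-2)}(\hat k), \E^{(\leq L-2)}(\phi)$ together with the $\epsilon^2\E^{(\leq L-3)}(R)$ remainder that appears inside those coercivity bounds (hence the $\epsilon\,a^{-c\sqrt\epsilon}\E^{(\leq L-3)}(R)$ term); the Vlasov term $a(\rho^{Vl}-\rho^{Vl}_{FLRW})$ is handled by \eqref{eq:near-coerc-Vl1}, producing the $a^{2-c\sqrt\epsilon}(\E^{(\leq L)}_{1,\leq L}(f,t) + \|G-\gamma\|_{L^2_G}^2)$ term (the extra power $a^2$ over $a^1$ being gained because $\rho^{Vl}$ enters $h$ multiplied by an $a$ and the $L^2$-type lapse estimate already carries its own factor, so that squaring the $H^L_G$-norm of $a(\rho^{Vl}-\dots)$ yields $a^2$). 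For \eqref{eq:en-lapse2} one instead uses $\tilde H = a^2(R[G]-2\kappa) - 8\pi a^2|\nabla\phi|_G^2 - 8\pi a(\rho^{Vl}-\rho^{Vl}_{FLRW})$: the linear curvature term contributes $a^{4-c\sqrt\epsilon}\E^{(\leq L)}(R,t)$ after \eqref{eq:near-coerc-R} and squaring, the quadratic $a^2|\nabla\phi|_G^2$ term contributes $\epsilon\,a^{2-c\sqrt\epsilon}\E^{(\leq L)}(\phi,t)$ by pairing with $\|\nabla\phi\|_{C^5_G}\lesssim\sqrt\epsilon a^{-c\sqrt\epsilon}$ and using $a\|\nabla\phi\|_{H^L_G}\lesssim\sqrt{\E^{(L)}(\phi)}+\dots$, and the Vlasov term again yields $a^{2-c\sqrt\epsilon}(\E^{(\leq L)}_{1,\leq L}(f,t)+\|G-\gamma\|_{L^2_G}^2)$.

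\textbf{Main obstacle.} The delicate point is the bookkeeping of which lower-order energies and which weight powers end up on the right: one must track that the commutators generated by moving $\Lap_G^L$ past the variable-coefficient operator $\L$ (the $\Lap_G$ has a $G$-dependent coefficient structure and $h$ is not constant) only produce terms of order $\leq L-1$ in $N$, which are then reabsorbed via \eqref{eq:ibp-trick} into $a^2\E^{(L+1)}(N)+\E^{(L)}(N)$ at the cost of the lower-order $\E^{(\leq L-3)}(N)$ pieces — and one must verify these close, i.e.\ that the $N$-energies appearing on the right are strictly lower order so no circularity arises (this is why the bound is stated with $\E^{(\leq L-3)}(N)$ absorbed rather than appearing; in fact one runs the estimate inductively in $L$). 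A secondary subtlety is justifying that the quadratic error terms genuinely carry the stated smallness factors $\epsilon$ or $\epsilon^2$ rather than just $\sqrt\epsilon$: this requires pairing the right factor in the low-order pointwise norm (getting $\epsilon$ from $\|\hat k\|_{C^0_G},\|\Psi\|_{C^0_G}\lesssim\epsilon$ in Lemma \ref{lem:AP0}, or its square) and being careful that the $16\pi C\Psi$ term — which is only linear, hence $O(\epsilon)$ not $O(\epsilon^2)$ — is precisely what produces the unsuppressed $\E^{(L)}(\phi,t)$ on the right of \eqref{eq:en-lapse1}. Apart from this bookkeeping the argument is routine, closely mirroring \cite[Lemma 5.1]{FU24} and \cite[Lemma 5.2]{FU23}, the only genuine novelty being the Vlasov density contributions and the slightly different weight $a^2$ they carry, which is what one needs for the later energy hierarchy.
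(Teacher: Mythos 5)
Your proposal is correct and follows essentially the same route as the paper: the paper also reduces matters to a basic $L^2$-elliptic estimate for $\L\zeta=Z$ and $\Ltilde\zeta=Z$, verifying that $h,\tilde h$ are bounded below by a positive constant and that $\lvert\nabla h\rvert_G,\lvert\nabla\tilde h\rvert_G\lesssim\sqrt{\epsilon}$, then cites \cite[Lemma 5.4]{FU24} for the higher-order extension that you spell out by hand. Your expansion of $H$ and $\tilde H$ to extract the stated right-hand sides and the bookkeeping of the $\epsilon$-weights and $a$-powers match what the cited lemma delivers.
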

\begin{proof}
To obtain these estimates from \eqref{eq:REEqN}, we verify that solutions to the equation $\L\zeta=Z$ (respectively, $\Ltilde\zeta=Z$) satisfy
\[a^2\,\|\Lap_G\zeta\|_{L^2_G}+a\,\|\nabla\zeta\|_{L^2_G}+\|\zeta\|_{L^2_G}\lesssim\|Z\|_{L^2_G}\]
since the estimates then follow as in \cite[Lemma 5.4]{FU25}. To this end, it suffices to check that $h$ (respectively, $\tilde{h}$) is bounded from below by a positive constant and that $\lvert\nabla h\rvert_G\lesssim\sqrt{\epsilon}$ (respectively, $\lvert \nabla\tilde{h}\rvert_G\lesssim\sqrt{\epsilon}$). The former follows from \eqref{eq:APPsi0} and \eqref{eq:APkhat0} (respectively, \eqref{eq:APder}) as well as \eqref{eq:APVlasovder}, the latter from \eqref{eq:APkin} (respectively, \eqref{eq:APder}) and \eqref{eq:APVlasovder}.
\end{proof}

\subsection{Estimates for the second fundamental form}\label{subsec:en-est-k}

Using that the curvature tensor is pure trace, we obtain a simple energy estimate for the second fundamental form.

\begin{lemma}[Energy estimate for second fundamental form]\label{lem:en-k} 
\begin{subequations}
\begin{align*}
\numberthis\label{eq:en-k}\E^{(L)}(\hat{k},t)\lesssim&\,\epsilon^4+\int_t^{t_0}\change{\left(\epsilon^\frac18\,a(s)^{-2}+a(s)^{-1-c\sqrt{\epsilon}}\right)}\,\E^{(L)}(\hat{k},s)+\epsilon^{-\frac18}\,a(s)^{-2}\E^{(L)}(\phi,s)\,ds\\
&\,+\int_t^{t_0}\epsilon^\frac38\,a(s)^{-2}\cdot\epsilon^\frac12\E^{(L-2)}(R,s)+\change{a(s)^{-1-(L+1)\omega-c\sqrt{\epsilon}}\,\E^{(\leq L)}_{total,Vl}(s)}\,ds\\
&\,+\int_t^{t_0}\epsilon\,a(s)^{-2-c\sqrt{\epsilon}}\,\E^{(\leq L-2)}(\hat{k},s)+\epsilon\,a(s)^{-2-c\sqrt{\epsilon}}\,\E^{(\leq L-2)}(\phi,s)\,ds\\
&\,+\int_t^{t_0}\epsilon^\frac78\,a(s)^{-2-c\sqrt{\epsilon}}\,\E^{(\leq L-4)}(R,s)+\change{a(s)^{-1-\frac{\omega}2-c\sqrt{\epsilon}}\,a(s)^\frac{\omega}2\|G^{-1}-\gamma^{-1}\|_{L^2_G(M_s)}^2}\,ds\,,
\end{align*}
\change{\begin{align*}
\numberthis\label{eq:en-k-scaled}a(t)^\frac{\omega}4\E^{(L)}(\hat{k},\changereport{t})&\,\lesssim\epsilon^4+\int_t^{t_0}\left(\epsilon^\frac18\,a(s)^{-2}+a(s)^{-2+\frac{\omega}8}+a(s)^{-1-c\sqrt{\epsilon}}\right)\,a(s)^{\frac{\omega}4}\E^{(L)}(\hat{k},s)\,ds\\
&\,+\int_t^{t_0}a(s)^{-2+\frac{\omega}8}\E^{(L)}(\phi,s)+{\epsilon}^\frac12 a(s)^{-2+\frac{\omega}8}\,\cdot{\epsilon}^\frac12\E^{(L-2)}(\Ric,s)\,ds\\
&\,+\int_t^{t_0}a(s)^{-1-(L+1)\omega-c\sqrt{\epsilon}}\E^{(\leq L)}_{total,Vl}(s)+a(s)^{-1-\frac{\omega}2-c\sqrt{\epsilon}}\,\cdot a(s)^{\frac{\omega}2}\|G^{-1}-\gamma^{-1}\|_{L^2_G(M_s)}^2\,ds\\
&\,+\int_t^{t_0}\epsilon a(s)^{-2-c\sqrt{\epsilon}}\cdot a(s)^{\frac{\omega}4}\E^{(\leq L-2)}(\hat{k},s)+\epsilon a(s)^{-2-c\sqrt{\epsilon}+\frac{\omega}8}\E^{(\leq L-2)}(\phi,s)\,ds\\
&\,+\int_t^{t_0}\epsilon^\frac32 a(s)^{-2-c\sqrt{\epsilon}+\frac{\omega}8}\,\cdot\epsilon^\frac12\E^{(\leq L-4)}(R,s)\,ds\,.
\end{align*}}
\end{subequations}
\end{lemma}
\begin{proof}
\change{We first prove \eqref{eq:en-k} for $L\geq 4$ and sketch the necessary adaptions for \eqref{eq:en-k-scaled}, the estimates for $L\leq 2$ are proven entirely analogously. }Commuting \eqref{eq:REEqk} with $\Lap_G^\frac{L}2$, we obtain
\begin{subequations}\label{eq:REEq-khat-comm}
\begin{align}
\label{eq:REEq-khat-comm-1}\del_t\Lap_G^\frac{L}2{(\hat{k}^{\sharp})^i}_j=&\,\tau\left({(\hat{k}^{\sharp})^i}_j\,\Lap_G^\frac{L}2N+N\,\Lap_G^\frac{L}2{(\hat{k}^{\sharp})^i}_j\right)-\nabla^{\sharp i}\nabla_j\Lap_G^\frac{L}2N\\
\label{eq:REEq-khat-comm-2}&\,-16\pi\,{\left(\nabla\Lap_G^\frac{L}2\phi\otimes_{\text{symm}}\nabla\phi\right)^{\sharp i}}_{j}-8\pi\,a^{-1}\,\Lap_G^{\frac{L}2}{(S^{Vl,\parallel})^{\sharp i}}_j\\
\label{eq:REEq-khat-comm-3}&\,+\I^i_j\,\Lap_G^\frac{L}2\mathfrak{h}+[\del_t,\Lap_G^\frac{L}2]{(\hat{k}^{\sharp})^ i}_j+{(\mathfrak{K}_{\text{junk}})^i}_j\,,
\end{align}
\end{subequations}
where $\mathfrak{h}$ is a scalar function depending on $G,\,R[G],\,N$ and $f$ as well as reference quantities, and $\mathfrak{K}_\text{junk}$ collects lower order error terms arising from the product rule.

One has
\[\del_t\E^{(L)}(\hat{k},\cdot)=\int_M N\,\tau\,\lvert \Lap_G^\frac{L}2\hat{k}\rvert_G^2+2\,\langle\del_t\Lap_G^\frac{L}2\hat{k}^\sharp,{\Lap_G^\frac{L}2\hat{k}^{\sharp}}\rangle_G\,\vol{G}\,.\]

Consequently, the pure trace term in \eqref{eq:REEq-khat-comm-3} cancels when inserting the evolution equation in the second term. Note that this includes high order curvature terms in \eqref{eq:REEq-khat-comm}, i.e., those that don't arise from commutator errors. Abbreviating
\[\left(\del_t\Lap_G^\frac{L}2\hat{k}^\sharp\right)^\parallel=\del_t\Lap_G^\frac{L}2\hat{k}^\sharp-\I\,\Lap_G^\frac{L}2\mathfrak{h}\,,\]
the bootstrap assumption \eqref{eq:bs-ass-HC} implies
\begin{equation}\label{eq:khat-en-est-core}
-\del_t\E^{(L)}(\hat{k},t)\lesssim \epsilon\,a(t)^{-1-c\eta}\,\E^{(L)}(\hat{k},t)+\left\|\left(\del_t\Lap_G^\frac{L}2\hat{k}^\sharp\right)^\parallel\right\|_{L^2_G(M_t)}\sqrt{\E^{(L)}(\hat{k},t)}\,.
\end{equation}

Next, we estimate $\left(\del_t\Lap_G^\frac{L}2\hat{k}^\sharp\right)^\parallel$ by going through the remaining terms in \eqref{eq:REEq-khat-comm}. We apply \change{the bootstrap assumption \eqref{eq:bs-ass-HC} as well as \eqref{eq:APkhat0} }to bound the right hand side of \eqref{eq:REEq-khat-comm-1} in $L^2_G$ by
\begin{align}\label{eq:khat-comm-lapse-energ}
\lesssim&\,\epsilon\,a(t)^{-1-c\,\eta} \sqrt{\E^{(L)}(\hat{k},t)}+a(t)^{-2}\left(a(t)^{-2}\,\sqrt{a(t)^4\,\E^{(L+2)}(N,t)}+\epsilon\sqrt{\E^{(L)}(N,t)}\right)\,.
\end{align}
Regarding \eqref{eq:REEq-khat-comm-2}, we use \eqref{eq:APder} and \eqref{eq:near-coerc-Vl2} to obtain the bound
\begin{align*}
\lesssim&\,\sqrt{\epsilon}\,a(t)^{-1-c\sqrt{\epsilon}}\,\sqrt{\E^{(L)}(\phi,t)}+a(t)^{-1-c\,\sqrt{\epsilon}}\changereport{\left(\sqrt{\E^{(\leq L)}_{1,\leq L}(f,t)}+\|G^{-1}-\gamma^{-1}\|_{L^2_G(M)}\right)}\,.
\end{align*}
In combination, we obtain
\begin{align*}
\left\|\left(\del_t\Lap_G^\frac{L}2\hat{k}^\sharp\right)^\parallel\right\|_{L^2_G(M_t)}\lesssim&\,a(t)^{-2}\left(\sqrt{a(t)^4\,\E^{(L+2)}(N,t)}+\sqrt{\E^{(L)}(N,t)}\right)+\epsilon\,a(t)^{-1-c\eta}\sqrt{\E^{(L)}(\hat{k},t)}\\
&\,+\sqrt{\epsilon}\,a(t)^{-1-c\sqrt{\epsilon}}\left(\sqrt{\E^{(L)}(\phi,t)}+\sqrt{\E^{(L)}_{1,L}(f,t)}+\|\change{G^{-1}-\gamma^{-1}}\|_{L^2_G(M_t)}\right)\\
&\,+\|[\del_t,\Lap_G^\frac{L}2]\hat{k}\|_{L^2_G(M_t)}+\|\mathfrak{K}_\text{junk}\|_{L^2_G(M_t)}\,.
\end{align*}
Applying the elliptic estimate \eqref{eq:en-lapse1} for the lapse terms arising from \eqref{eq:khat-comm-lapse-energ}, the following bound thus follows from \eqref{eq:khat-en-est-core}.
\begin{subequations}
\begin{align}
-\del_t\E^{(L)}(\hat{k},t)\lesssim&\,\epsilon\,a(t)^{-1-c\,\eta}\E^{(L)}(\hat{k},t)\\
\label{eq:khat-diff-en-est-leading}&\,+\sqrt{\E^{(L)}(\hat{k},t)}\,\Bigr\{a(t)^{-2}\left(\epsilon\,\sqrt{\E^{(L)}(\hat{k},t)}+\sqrt{\E^{(L)}(\phi,t)}\right)\\
&\,\qquad+a(t)^{-1-c\sqrt{\epsilon}}\,\left(\sqrt{\E^{(\leq L)}_{1,\leq L}(f,t)}+\changereport{\|G^{-1}-\gamma^{-1}\|_{L^2_G(M_t)}}\right)\\
&\,\qquad+\epsilon\,a(t)^{-2-c\,\sqrt{\epsilon}}\left(\sqrt{\E^{(\leq L-2)}(\hat{k},t)}+\sqrt{\E^{(\leq L-2)}(\phi,t)}\right)\\
\label{eq:khat-diff-en-est-error}&\,\qquad+\sqrt{\epsilon}\,a(t)^{-2-c\sqrt{\epsilon}}\sqrt{\E^{(\leq L-3)}(R,t)}\\
&\,\qquad+\|[\del_t,\Lap_G^\frac{L}2]\hat{k}\|_{L^2_G(M_t)}+\|\mathfrak{K}_\text{junk}\|_{L^2_G(M_t)}\Bigr\}\,.
\end{align}
\end{subequations}
The curvature terms in the penultimate line arise from estimating commutator errors when proving \eqref{eq:en-lapse1}. \\
A naive estimate of \eqref{eq:khat-diff-en-est-leading} would lead to the borderline, unscaled term $a(t)^{-2}\,\E^{(L)}(\hat{k},t)$, which would prevent us from obtaining bootstrap improvements down the line. Thus, we apply Young's inequality as follows:
\[a^{-2}\,\sqrt{\E^{(L)}(\hat{k},\cdot)}\,\sqrt{\E^{(L)}(\phi,\cdot)}\lesssim \epsilon^\frac18\,a^{-2}\,\E^{(L)}(\hat{k},\cdot)+\epsilon^{-\frac18}\,a^{-2}\,\E^{(L)}(\phi,\cdot)\,.\]
Applying Young's inequality as well as \eqref{eq:ibp-trick-2+1} in a similar manner for the remaining terms, the estimate below follows.
\begin{subequations}\label{eq:diff-en-khat}
\begin{align}
\label{eq:diff-en-khat-1}-\del_t\E^{(L)}(\hat{k},t)\lesssim&\,\left(\epsilon^\frac18\,a(t)^{-2}+a(t)^{-1-c\sqrt{\epsilon}}\right)\E^{(L)}(\hat{k},t)+\epsilon^{-\frac18}\,a(t)^{-2}\,\E^{(L)}(\phi,t)\\
\label{eq:diff-en-khat-2}&\,+\change{a(t)^{-1-(L+1)\omega-c\sqrt{\epsilon}}\,a(t)^{(L+1)\omega}{\E^{(L)}_{1,L}(f,t)}+a(t)^{-1-\frac{\omega}2-c\sqrt{\epsilon}}\,a(t)^\frac\omega2\changereport{\|G^{-1}-\gamma^{-1}\|_{L^2_G(M_t)}^2}}\\
\label{eq:diff-en-khat-3}&\,\change{+\epsilon^\frac78\,a(t)^{-2}\,\E^{(\leq L-2)}(R,t)+\epsilon^\frac{15}8\,a(t)^{-2-c\sqrt{\epsilon}}\left(\E^{(\leq L-2)}(\hat{k},t)+\E^{(\leq L-2)}(\phi,t)\right)}\\
\label{eq:diff-en-khat-4}&\,\change{+\epsilon^\frac78\,a(t)^{-2-c\sqrt{\epsilon}}\,\E^{(\leq L-4)}(R,t)+\left(\|[\del_t,\Lap_G^\frac{L}2]\hat{k}\|_{L^2_G(M_t)}+\|\mathfrak{K}_\text{junk}\|_{L^2_G(M_t)}\right)\sqrt{\E^{(L)}(\hat{k},t)}}\,.
\end{align}
\end{subequations}
In particular, the highest order curvature term arises from applying \eqref{eq:ibp-trick-2+1} to the curvature term in \eqref{eq:khat-diff-en-est-error}. It remains to check that the \change{final terms }in \eqref{eq:diff-en-khat-4} are error terms dominated by what is already present in \eqref{eq:diff-en-khat-1}-\eqref{eq:diff-en-khat-3} \change{as well as the first term in \eqref{eq:diff-en-khat-4}}.

After some straightforward computations and applications of Lemma \ref{lem:near-coerc} to estimate Sobolev norms by energies, one sees that the commutator term satisfies
\begin{align*}
\|[\del_t,\Lap_G^\frac{L}2]\hat{k}\|_{L^2_G(M_t)}\lesssim&\ \epsilon\,a(t)^{-2}\,\sqrt{\E^{(\leq L)}(\hat{k},\cdot)}+\epsilon\,a(t)^{-2-c\sqrt{\epsilon}}\,\sqrt{\E^{(\leq L-2)}(\hat{k},\cdot)}\\
&\,+\epsilon\,a(t)^{-2-c\sqrt{\epsilon}}\,\sqrt{\E^{(\leq L-2)}(N,\cdot)}+\epsilon^2\,a(t)^{-2-c\sqrt{\epsilon}}\,\sqrt{\E^{(\leq L-3)}(R,\cdot)}\,.
\end{align*}
Moreover, $\mathfrak{K}_\text{junk}$ can be bounded as follows.
\begin{align*}
\|\mathfrak{K}_\text{junk}\|_{L^2_G}\lesssim&\ \epsilon\,a(t)^{-1-c\eta}\,\sqrt{\E^{(\leq L-1)}(\hat{k},\cdot)}+\epsilon\,a(t)^{-2-c\sqrt{\epsilon}}\,\sqrt{\E^{(\leq L-1)}(N,\cdot)}\\
&\,+\sqrt{\epsilon}\,a(t)^{-1-c\sqrt{\epsilon}}\,\sqrt{\E^{(\leq L-1)}(\phi,\cdot)}+\epsilon\,a(t)^{-1-c\eta}\,\sqrt{\E^{(\leq L-3)}(R,\cdot)}\,.
\end{align*}

After applying the same steps as above to these error upper bounds, one sees that \change{the final term in \eqref{eq:diff-en-khat-4} }indeed only contributes negligible error terms to the right hand side of \eqref{eq:diff-en-khat}. The statement then follows upon integrating \eqref{eq:diff-en-khat} over $[t,t_0]$ and rearranging.

\change{For \eqref{eq:en-k-scaled}, one similarly computes the analogue of \eqref{eq:diff-en-khat}, dropping $-(\del_ta^\frac{\omega}4)\E^{(L)}(\hat{k},\cdot)$ in an upper estimate, and immediately distribute weights as needed for borderline energy terms, i.e., for shear, scalar field and curvature energies
\begin{align*}
-\del_t\left(a^{\frac{\omega}4}\E^{(L)}(\hat{k},\cdot)\right)\lesssim&\,\epsilon a(t)^{-1-c\eta}\,\E^{(L)}(\hat{k},\cdot)\\
&\,+\sqrt{a^\frac{\omega}4\E^{(L)}(\hat{k},\cdot)}\Bigr\{\epsilon a^{-2}\sqrt{a^{\frac{\omega}4}\E^{(L)}(\hat{k},\cdot)}+a^{-2+\frac{\omega}8}\sqrt{\E^{(L)}(\phi,\cdot)}\\
&\,\qquad+a^{-1-c\sqrt{\epsilon}+\frac{\omega}8}\left(\sqrt{\E^{(\leq L)}_{1,\leq L}(f,\cdot)}+\|G-\gamma\|_{L^2_G}\right)\\
&\,\qquad+\epsilon a^{-2-c\sqrt{\epsilon}}\sqrt{a^\frac{\omega}4\E^{(\leq L-2)}(\hat{k},\cdot)}+\epsilon a^{-2-c\sqrt{\epsilon}+\frac{\omega}8}\sqrt{\E^{(\leq L-2)}(\phi,\cdot)}\\
&\,\qquad+a^{-2-c\sqrt{\epsilon}+\frac{\omega}8}\,\sqrt{\epsilon \E^{(\leq L-3)}(R,\cdot)}\Bigr\}\\
&\,+\sqrt{a^\frac{\omega}4\E^{(L)}(\hat{k},\cdot)}\cdot a^\frac{\omega}8\left(\|[\del_t,\Lap_G^\frac{L}2]\hat{k}\|_{L^2_G}+\|\mathfrak{K}_\text{junk}\|_{L^2_G}\right)\,.
\end{align*}
Again, one observes that terms introduced by the last line are strictly dominated by the remaining right hand side up to a constant, and applies the Young inequality to the remaining lines. The scaled bound \eqref{eq:en-k-scaled} then follows by integrating and inserting the initial data assumption.}
\end{proof}

To overcome the technical difficulty that the zero order Vlasov energy estimate will involve first order terms in the metric, we also require the following elliptic estimate for the shear.

\begin{lemma}[Scaled first order energy estimate for the second fundamental form]\label{lem:en-est-khat-scaled}
\begin{equation}\label{eq:en-est-khat-scaled}
a^2\,\E^{(1)}(\hat{k},\cdot)\lesssim \E^{(0)}(\phi,\cdot)+a^{2-c\sqrt{\epsilon}}\,\E^{(0)}(\hat{k},\cdot)+a^{2-c\sqrt{\epsilon}}\,\E^{(0)}_{1,0}(f,\cdot)\,.
\end{equation}
\end{lemma}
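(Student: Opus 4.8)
The plan is to exploit two features special to the $(2+1)$-dimensional setting: that $\hat{k}$ is a symmetric tracefree $(0,2)$-tensor on a closed surface, so that its full covariant derivative is controlled in $L^2$ by its divergence up to a curvature term, and that this divergence is governed by the momentum constraint \eqref{eq:REEqMom}. First I would integrate by parts in the definition of $\E^{(1)}(\hat{k},t)$, which turns it into $\int_M|\nabla\hat{k}|_G^2\,\vol{G}$ up to a harmless term of the schematic form $\int_M R[G]\ast\hat{k}\ast\hat{k}\,\vol{G}$ (such a term arises at all only if the convention for $\Lap_G$ on tensors is not the rough Laplacian). The key point, special to two dimensions, is then the Weitzenb\"ock-type identity for symmetric tracefree $(0,2)$-tensors on the closed surface $M$,
\[\int_M|\nabla\hat{k}|_G^2\,\vol{G}=2\int_M|\div_G\hat{k}|_G^2\,\vol{G}+c_0\int_M R[G]\,|\hat{k}|_G^2\,\vol{G}\]
for a universal constant $c_0$, which follows from integration by parts together with the fact that in two dimensions the Riemann tensor is pure trace, so that all curvature corrections obtained when commuting covariant derivatives of $\hat{k}$ collapse into multiples of $R[G]\,\hat{k}$.

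Writing $R[G]=(R[G]-2\kappa)+2\kappa$ and invoking the pointwise bound $\|R[G]-2\kappa\|_{C^0_G(M_t)}\lesssim\sqrt{\epsilon}\,a(t)^{-c\sqrt{\epsilon}}$ from \eqref{eq:APder}, the curvature terms above are all $\lesssim a(t)^{-c\sqrt{\epsilon}}\,\E^{(0)}(\hat{k},t)$, so the problem reduces to estimating $\|\div_G\hat{k}\|_{L^2_G(M_t)}^2$. For that I would insert the momentum constraint \eqref{eq:REEqMom}, which yields $|\div_G\hat{k}|_G^2\lesssim|\Psi+C|^2\,|\nabla\phi|_G^2+|\j^{Vl}|_G^2$ pointwise, and hence
\[\|\div_G\hat{k}\|_{L^2_G(M_t)}^2\lesssim\|\Psi+C\|_{C^0_G(M_t)}^2\,\|\nabla\phi\|_{L^2_G(M_t)}^2+\|\j^{Vl}\|_{H^0_G(M_t)}^2.\]

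Since $C$ is a fixed background constant, \eqref{eq:APPsi0} gives $\|\Psi+C\|_{C^0_G(M_t)}\lesssim 1$, and integrating by parts in \eqref{eq:def-en-sf} gives $a(t)^2\,\|\nabla\phi\|_{L^2_G(M_t)}^2\leq\E^{(0)}(\phi,t)$; thus the scalar-field contribution is $\lesssim a(t)^{-2}\,\E^{(0)}(\phi,t)$. For the Vlasov current, the order-zero case of \eqref{eq:near-coerc-Vl3} gives $\|\j^{Vl}\|_{H^0_G(M_t)}^2\lesssim a(t)^{-c\sqrt{\epsilon}}\,\E^{(0)}_{1,0}(f,t)$, with no accompanying metric error term since the reference current vanishes by isotropy. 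Assembling these estimates yields $\E^{(1)}(\hat{k},t)\lesssim a(t)^{-2}\,\E^{(0)}(\phi,t)+a(t)^{-c\sqrt{\epsilon}}\,\E^{(0)}(\hat{k},t)+a(t)^{-c\sqrt{\epsilon}}\,\E^{(0)}_{1,0}(f,t)$, and multiplying through by $a(t)^2$ gives \eqref{eq:en-est-khat-scaled}.

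The main obstacle is the two-dimensional Weitzenb\"ock identity in the first paragraph: this is exactly where the rigidity of $(2+1)$-dimensional geometry is used, since in spatial dimension $d\geq 3$ the divergence of a symmetric tracefree tensor does not control its full covariant derivative, and one would instead be forced to run separate high-order energy estimates for the shear as in \cite{FU23,FU24}. Once that identity is available, the remaining ingredients — the integration-by-parts reductions, the momentum constraint, and the coercivity estimates from Proposition \ref{lem:near-coerc} together with the a priori bounds of Lemma \ref{lem:AP} — are routine.
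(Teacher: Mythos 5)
Your proposal is correct and is, in spirit, the same argument as the paper's: reduce $\E^{(1)}(\hat{k},\cdot)$ to $\|\div_G\hat{k}\|_{L^2_G}^2$ plus a curvature term controlled by $\|R[G]\|_{C^0}$, then insert the momentum constraint \eqref{eq:REEqMom}. The one genuine difference is how the \emph{antisymmetrized} part of $\nabla\hat{k}$ is handled. The paper keeps $d^\nabla\hat{k}$ as a separate object and controls it via the spacetime Codazzi equation, $d^\nabla\hat{k}_{ijl}=\Riem[\g]_{0ijl}=\Ric[\g]_{0j}\,g_{il}-\Ric[\g]_{0l}\,g_{ij}$, before applying the Einstein equation; you instead absorb it into the divergence using the $2$-dimensional algebraic identity $d^\nabla u_{ijl}=G_{il}(\div_G u)_j-G_{ij}(\div_G u)_l$ (valid pointwise for symmetric tracefree $(0,2)$-tensors on a surface), which gives $\lvert d^\nabla u\rvert_G^2=2\lvert\div_G u\rvert_G^2$ and hence your Weitzenb\"ock identity with coefficient $2$. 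Both routes rely on the same $(2+1)$-dimensional rigidity, and they are consistent: applying the algebraic identity to the momentum constraint reproduces (up to an overall sign) the Codazzi-derived expression for $d^\nabla\hat{k}$. Your version is marginally more intrinsic (no spacetime Codazzi needed), while the paper's makes the cancellation with the constraint equations manifest.

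One caveat on your exposition rather than on the substance: the Weitzenb\"ock identity with coefficient $2$ does \emph{not} follow from integration by parts and "Riemann pure trace" alone. Integration by parts plus the pure-trace structure of $\Riem[G]$ only gives
\[\int_M\lvert\nabla u\rvert_G^2\,\vol{G}=\int_M\left[\tfrac12\,\lvert d^\nabla u\rvert_G^2+\lvert\div_G u\rvert_G^2+c_0\,R[G]\,\lvert u\rvert_G^2\right]\vol{G}\,,\]
and the promotion of the coefficient in front of $\lvert\div_G u\rvert_G^2$ from $1$ to $2$ requires the additional $2$-dimensional algebraic fact that $d^\nabla u$ is pointwise determined by $\div_G u$ — precisely the fact you correctly identify as failing for $d\geq 3$. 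The final assembly (using \eqref{eq:APPsi0} for $\Psi+C$, the structure of $\E^{(0)}(\phi,\cdot)$ in \eqref{eq:def-en-sf} to extract $a^2\|\nabla\phi\|_{L^2_G}^2$, \eqref{eq:near-coerc-Vl3} at order zero for $\j^{Vl}$, and \eqref{eq:APder} for $R[G]$) matches the paper exactly.
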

\begin{proof}
Let $u$ be an arbitrary tracefree $M_t$-tangent $(0,2)$-tensor and define the exterior covariant derivative
\[(d^\nabla u)_{ijl}=\nabla_lu_{ij}-\nabla_ju_{il}\]
which satisfies
\begin{align*}
\int_M\lvert d^\nabla u\rvert_g^2\,\vol{g}=&\,2\int_M\lvert \nabla u\rvert_g^2-\nabla_ju_{il}\,\nabla^{l}u^{ij}\,\vol{g}\,.
\end{align*}

By integration by parts and using $\Riem[g]_{ijlm}=\frac{R[g]}2\,(g_{il}\,g_{jm}-g_{im}\,g_{jl})$ as well as $\text{tr}_gu=0$, one obtains
\begin{align*}
\int_M\nabla_ju_{il}\,\nabla^{l}u^{ij}\,\vol{g}=&\,\int_M\left[\lvert \div_gu\rvert_g^2+\frac{R[g]}2\left(g_{jm}\,g_{il}-g_{ji}\,g_{lm}\right)u^{il}\,u^{mj}+\frac{R[g]}2\,\I^l_j\,u^{ij}\,u_{il}\right]\,\vol{g}\\
=&\,\int_M\left[\lvert\div_gu\rvert_g^2+R[g]\,\lvert u\rvert_g^2\right]\,\vol{g}\,.
\end{align*}
After rearranging and rescaling, this becomes
\begin{equation}\label{eq:div-curl-repl}
\int_M \lvert\nabla u\rvert_G^2\,\vol{G}=\int_M\left[\frac12\,\lvert d^\nabla u\rvert_G^2+\lvert \div_Gu\rvert_G^2+\frac34\,R[G]\,\lvert u\rvert_G^2\right]\,\vol{G}\,.
\end{equation}
Now considering $u=\hat{k}$ explicitly, note that the Codazzi equation for the \change{$(2+1)$-}dimensional spacetime metric $\g$ implies
\begin{align*}
d^\nabla\hat{k}_{ijl}=&\,\Riem[\g]_{0ijl}=\Ric[\g]_{0j}\,g_{il}-\Ric[\g]_{0l}\,g_{ij}
\end{align*}
since the shift vector vanishes in our transported coordinates. Now applying \eqref{eq:EEq1}, this yields
\begin{align*}
d^\nabla\hat{k}_{ijl}&\,=8\pi\,(\Psi+C)\,\left[G_{il}\,\nabla_j\phi-G_{ij}\,\nabla_l\phi\right]\changereport{-8\pi\left(G_{il}\,\j^{Vl}_j-G_{ij}\,\j^{Vl}_l\right)}
\end{align*}
Thus, also applying \eqref{eq:REEqMom-2+1} to compute $\div_G\hat{k}$, \eqref{eq:div-curl-repl} becomes
\begin{align*}
\E^{(1)}(\hat{k},\cdot)\lesssim \int_M \left[\lvert \Psi+C\rvert^2\,\lvert\nabla\phi\rvert_G^2+\lvert \j^{Vl}\rvert_G^2+\lvert R[G]\rvert\,\lvert \hat{k}\rvert_G^2\right]\,\vol{G}
\end{align*}
The a priori estimates \eqref{eq:APPsi0}, \eqref{eq:near-coerc-Vl3} and \eqref{eq:APder} then imply \eqref{eq:en-est-khat-scaled}.
\end{proof}

\subsection{Estimates for metric variables}\label{subsec:en-est-metric}

In this section, we collect various estimates for norms and energies associated with the metric, the tensor $\Gamma-\Gamhat$ and the scalar curvature. In particular, we need to derive integral estimates for $\Gamma-\Gamhat$ at low orders to deal with error terms arising from the Vlasov reference distribution \change{as well as }high order energy estimates for the curvature to control commutator errors from the Vlasov equation. In both cases, the key ingredient is the Codazzi equation -- for the former in the form of the elliptic estimate \eqref{eq:en-est-khat-scaled} and for the latter by directly applying the momentum constraint \eqref{eq:REEqMom-2+1}.

\begin{lemma}[Norm estimates for metric quantities] Let $I$ be an integer with $0\leq I\leq 10$. Then, the following estimates hold:
\begin{subequations}
\begin{align*}
\numberthis\label{eq:G-norm-int-est} a(t)^\frac{\omega}2\|\change{G^{\pm 1}-\gamma^{\pm 1}}&\,\|_{H^I_G(M_t)}^2\lesssim\epsilon^4+\int_t^{t_0}\left(a(s)^{-2+\change{\frac{\omega}8}-c\sqrt{\epsilon}}+a(s)^{-1-c\eta}\right) a(s)^\frac{\omega}2\,\|\change{G^{\pm 1}-\gamma^{\pm 1}}\|_{H^I_G(M_s)}^2\,ds\\
&\,+\int_t^{t_0}a(s)^{-2+\change{\frac{\omega}8}}\left(a(s)^\frac{\omega}2\E^{(\leq I)}(\hat{k},s)+\E^{(\leq I)}(\phi,s)+\change{\epsilon^\frac12\cdot\epsilon^\frac12}\E^{(\leq I-2)}(R,s)\right)\,ds\\
&\,+\int_t^{t_0}a(s)^{-1-c\sqrt{\epsilon}-\omega(I+\frac12)}\cdot a(s)^{(I+1)\omega}\,\E^{(\leq I)}_{1,\leq I}(f,s)\,ds\\
\numberthis\label{eq:G-norm-sup-est}\|\change{G^{\pm 1}-\gamma^{\pm 1}}&\,\|_{H^I_G(M_t)}^2\lesssim a(t)^{-c\epsilon^\frac18}\left(\epsilon^4+\epsilon^{-\frac14}\sup_{s\in(t,t_0)}\left(\|N\|_{H^I_G(M_s)}^2+\|\hat{k}\|_{H^I_G(M_s)}^2\right)\right)
\end{align*}
Additionally, one has the following low order integral estimates:
\begin{align*}
\numberthis\label{eq:en-est-Chr0}a(t)^{2+\frac{\omega}2}\|\Gamma-\Gamhat\|_{L^2_G(M_t)}^2&\,\lesssim\epsilon^4+\int_t^{t_0}\change{\left(a(s)^{-2+{\frac{\omega}8}}+a(s)^{-1-c\sqrt{\epsilon}-\omega}\right)}\cdot a(s)^{2+\frac{\omega}2}\|\Gamma-\Gamhat\|_{L^2_G(M_t)}^2\,ds\\
&\,+\int_t^{t_0}\change{a(s)^{-2+\frac{\omega}8-c\sqrt{\epsilon}}}\,\left(\E^{(0)}(\phi,s)+a(s)^\frac{\omega}4\E^{(0)}(\hat{k},s)\right)\,ds\\
&\,+\int_t^{t_0}a(s)^{-1-c\sqrt{\epsilon}-\frac{\omega}2}\,\left(a(s)^{\omega}\E^{(0)}_{1,0}(f,s)+a(s)^\frac{\omega}2\|\change{G^{-1}-\gamma^{-1}}\|_{L^2_G(M_s)}^2\right)\,ds\\
\numberthis\label{eq:en-est-Chr1}a(t)^{\frac{\omega}2}\,\|\Gamma-\Gamhat\|^2_{{H}^1_G(M_t)}&\,\lesssim\epsilon^4+\int_t^{t_0}\left(a(s)^{-2+\change{\frac{\omega}8}}+a(s)^{-1-c\sqrt{\epsilon}\change{-}\frac{\omega}2}\right)\,\change{a(s)^{\frac{\omega}2}}\,\|\Gamma-\Gamhat\|^2_{{H}^1_G(M_s)}ds\\
&\,+\int_t^{t_0}a(s)^{-2+\change{\frac{\omega}8-c\sqrt{\epsilon}}}\left(\change{a(s)^\frac{\omega}4}\E^{(\leq 2)}(\hat{k},s)+\E^{(\leq 2)}(\phi,s)\right)\,ds\\
&\,+\int_t^{t_0}a(s)^{-1-c\sqrt{\epsilon}\change{-\omega}}\left(a(s)^{2\omega}\,\E^{(\leq 1)}_{1,\leq 1}(f,s)+a(s)^{\frac{\omega}2}\,\|\change{G^{-1}-\gamma^{-1}}\|_{L^2_G(M_s)}^2\right)\,ds
\end{align*}
\end{subequations}
\end{lemma}
\begin{proof}
\eqref{eq:G-norm-int-est} and \eqref{eq:G-norm-sup-est} follow as in \cite[Lemma 6.14]{FU24} by commuting \eqref{eq:REEqG-2+1} with $\nabla^I$ and applying the lapse estimate \eqref{eq:en-lapse1} as well as the near-coercivity estimates from Lemma \ref{lem:near-coerc}. 
Regarding \eqref{eq:en-est-Chr0}, the evolution equation \eqref{eq:REEqChr-2+1} implies, using the bootstrap assumption \eqref{eq:bs-ass-HC} for the lapse,
\change{\begin{align*}
-\del_t\left(a^{2+\change{\frac{\omega}2}}\|\Gamma-\Gamhat\|_{L^2_G(M_\cdot)}^2\right)\lesssim&\,\epsilon a^{-c\eta}\cdot a^{2+\change{\frac{\omega}2}}\|\Gamma-\Gamhat\|_{L^2_G(M_\cdot)}^2\\
&\,\underbrace{+a^{2+\change{\frac{\omega}2}}\left(a^{-2}\sqrt{\E^{(\leq 1)}(\hat{k},\cdot)}+a^{-2}\sqrt{\E^{(1)}(N,\cdot)}\right)\|\Gamma-\Gamhat\|_{L^2_G(M_\cdot)}}_{(\ast)}
\end{align*}
Regarding $(\ast)$, one can apply shear estimate in Lemma \ref{lem:en-est-khat-scaled} and the lapse estimate \eqref{eq:en-lapse1} for $L=0$ to obtain the following estimate
\begin{align*}
(\ast)=&\,a^{-2+\frac\omega8}\left(\sqrt{a^{2+\frac{\omega}4}\E^{(\leq 1)}(\hat{k},\cdot)}+\sqrt{a^{2+\frac{\omega}4}\E^{(\leq 1)}(N,\cdot)}\right)\sqrt{a^{2+\frac{\omega}2}\|\Gamma-\Gamhat\|_{L^2_G(M_\cdot)}^2}\\
\lesssim&\,\left(a^{-2+\frac{\omega}4}\sqrt{\E^{(0)}(\phi,\cdot)}+\left(\epsilon a^{-2+\frac{\omega}8}+a^{-1-c\sqrt{\epsilon}}\right)\sqrt{a(s)^{\frac{\omega}4}\E^{(0)}(\hat{k},\cdot)}\right.\\
&\,\left.+a^{-1-c\sqrt{\epsilon}-\frac{\omega}2}\sqrt{a^\omega\E^{(0)}_{1,0}(f,\cdot)}+a^{-1-c\sqrt{\epsilon}+\frac{\omega}4}\|G^{-1}-\gamma^{-1}\|_{L^2_G(M_\cdot)}\right)\sqrt{a^{2+\frac{\omega}{2}}\|\Gamma-\Gamhat\|_{L^2_G(M_\cdot)}^2}
\end{align*}
\eqref{eq:en-est-Chr0} then follows by applying the Young inequality.}\\
\eqref{eq:en-est-Chr1} is proven identically, except that one applies the Young inequality to estimate the shear term directly and inserts \eqref{eq:en-lapse1} at order $L=2$. 
\end{proof}

To control high order curvature terms arising from the upcoming Vlasov energy estimates, it will be necessary to control curvature terms at order $L-1$ to control the other solution variables at order $L$. Since we only need to control the scalar curvature, the momentum constraint admits a convenient integral estimate that immediately regains this potential loss in regularity:

\begin{lemma}[Energy estimate for the curvature]\label{lem:en-est-R}

\change{\begin{subequations}
\begin{align*}
\numberthis\label{eq:en-est-Ric-2+1}\E^{(L-2)}(R,t)\lesssim\epsilon^4&\,+\int_t^{t_0}\left(\epsilon^\frac18\,a(s)^{-2}+a(s)^{-1-c\sqrt{\epsilon}}\right)\E^{(L-2)}(R,s)\,ds\\
+\int_t^{t_0}\Bigr\{&\,\epsilon^{-\frac18}\,a(s)^{-2}\left(\E^{(L)}(\phi,s)+\E^{(L)}(\hat{k},s)\right)\\
&\,+a(s)^{-1-c\sqrt{\epsilon}-(L+1)\omega}\left(a(s)^{(L+1)\omega}\,\E^{(\leq L)}_{1,\leq L}(f,s)+a(s)^\frac{\omega}2\,\|G^{-1}-\gamma^{-1}\|_{L^2_G(M_s)}^2\right)\\
&\,+\epsilon^{-\frac18}\,a(s)^{-2-c\sqrt{\epsilon}}\left(\E^{(\leq L-2)}(\phi,s)+\E^{(\leq L-2)}(\hat{k},s)\right)\\
&\,+\epsilon^\frac78\,a(s)^{-2-c\sqrt{\epsilon}}\,\E^{(\leq L-4)}(R,s)\Bigr\}\,ds\,,\\
\numberthis\label{eq:en-est-Ric-top-2+1}
a(t)^{2}\E^{(L-1)}(R,t)&\,\lesssim\epsilon^4+\int_t^{t_0}\left(\epsilon^\frac18\,a(s)^{-2}+a(s)^{-1-c\eta}\right)\, a(s)^2\,\E^{(L-1)}(R,s)\,ds\\
+\int_t^{t_0}&\Bigr\{\left(\epsilon^\frac{15}8\,a(s)^{-2}+a(s)^{-1-c\sqrt{\epsilon}}\right)\E^{(L)}(\hat{k},s)\\
&\,+\left(\epsilon^{-\frac18}\,a(s)^{-2}+a(s)^{-1-c\sqrt{\epsilon}}\right)\E^{(L)}(\phi,s)\\
&\,+a(s)^{-1-c\sqrt{\epsilon}-(L+1)\omega}\left(a(s)^{(L+1)\omega}\,\E^{(\leq L)}_{1,\leq L}(f,s)+a(s)^\frac{\omega}2\,\|G^{-1}-\gamma^{-1}\|_{L^2_G(M_s)}^2\right)\\
&\,+\epsilon\,a(s)^{-2}\,\E^{(L-2)}(R,s)+\epsilon\,a(s)^{-2-c\sqrt{\epsilon}}\,\E^{(\leq L-4)}(R,s)\\
&\,\left.+\left(\epsilon^\frac{15}8\,a(s)^{-2-c\sqrt{\epsilon}}+a(s)^{-1-c\sqrt{\epsilon}}\right)\left(\E^{(\leq L-2)}(\phi,s)+\E^{(\leq L-2)}(\hat{k},s)\right)\right\}\,ds\\
\numberthis\label{eq:en-est-Ric-scaled-2+1} a(t)^\frac{\omega}2\E^{(L-2)}&(R,t)\lesssim\epsilon^4+\int_t^{t_0}\left(\epsilon^\frac18 a(s)^{-2}+a(s)^{-2+\frac{\omega}8}+a(s)^{-1-c\sqrt{\epsilon}}\right)\E^{(L-2)}(R,s)\,ds\\
&\,+\int_t^{t_0}\Bigr\{a(s)^{-2+\frac{\omega}8}\left(\E^{(L)}(\phi,s)+\E^{(L)}(\hat{k},s)\right)\\
&\,\qquad +a(s)^{-1-c\sqrt{\epsilon}-(L+1)\omega}\left(\E^{(\leq L)}_{total,Vl}(s)+a(s)^\frac{\omega}2\|G^{-1}-\gamma^{-1}\|_{L^2_G(M_s)}^2\right)\\
&\,\qquad+a(s)^{-2-c\sqrt{\epsilon}+\frac{\omega}4}\left(\E^{(\leq L-2)}(\phi,s)+a(s)^\frac{\omega}4\E^{(\leq L-2)}(\hat{k},s)\right)\\
&\,\qquad+\epsilon^\frac78 a(s)^{-2-c\sqrt{\epsilon}}\E^{(\leq L-4)}(R,s)\Bigr\}\,ds\\
\numberthis\label{eq:en-est-Ric-scaled-top-2+1}a(t)^{2+\frac{\omega}4}\E^{(L-1)}&(R,t)\lesssim\epsilon^4+\int_t^{t_0}\left(\epsilon^\frac18 a(s)^{-2}+a(s)^{-2+\frac{\omega}8}+a(s)^{-1-c\eta}\right)\,a(s)^{2+\frac{\omega}4}\E^{(L-1)}(R,s)\,ds\\
&\,+\int_t^{t_0}\Bigr\{\left(\epsilon a(s)^{-2}+a(s)^{-1-c\sqrt{\epsilon}}\right)\cdot a(s)^\frac{\omega}4\E^{(L)}(\hat{k},s)\\
&\,\qquad+\left(a(s)^{-2+\frac{\omega}8}+a(s)^{-1-c\sqrt{\epsilon}}\right)\E^{(L)}(\phi,s)\\
&\,\qquad+a(s)^{-1-c\sqrt{\epsilon}+(L+1)\omega}\left(\E^{(\leq L)}_{total,Vl}(s)+a(s)^\frac{\omega}2\|G^{-1}-\gamma^{-1}\|_{L^2_G}^2\right)\\
&\,\qquad+\epsilon a(s)^{-2}\cdot a(s)^{\frac{\omega}4}\E^{(L-2)}(R,s)+\epsilon a(s)^{-2-c\sqrt{\epsilon}}\cdot a(s)^\frac{\omega}4\E^{(\leq L-4)}(R,s)\\
&\,\qquad+\left(\epsilon^2a(s)^{-2-c\sqrt{\epsilon}}+a(s)^{-1-c\sqrt{\epsilon}}\right)\left(\E^{(\leq L-2)}(\phi,s)+a(s)^{\frac{\omega}4}\E^{(\leq L-2)}(\hat{k},s)\right)\Bigr\}\,ds
\end{align*}
\end{subequations}}
\end{lemma}
\begin{proof}
\change{We focus on proving \eqref{eq:en-est-Ric-scaled-top-2+1} and \eqref{eq:en-est-Ric-scaled-2+1}, since \eqref{eq:en-est-Ric-scaled-2+1} and \eqref{eq:en-est-Ric-2+1} follow along similar, but simpler lines by commuting \eqref{eq:REEqR} with $\Lap_G^\frac{L}2$. }To prove \eqref{eq:en-est-Ric-top-2+1}, we first compute, using bootstrap assumption \eqref{eq:bs-ass-HC}, that
\begin{align*}\numberthis\label{eq:diff-est-R-basic}
-\del_t\left(a^{2}\,\E^{(L-1)}(R,\cdot)\right)\lesssim&\,\epsilon\,a^{-1-c\eta}\,a^{2}\,\E^{(L-1)}(R,\cdot)\\
&\,+\|a\,\del_t\nabla\Lap_G^{\frac{L}2-1} R[G]\|_{L^2_G(M_\cdot)}\,\sqrt{a^{2}\,\E^{(L-1)}(R,\cdot)}\,.
\end{align*}
To obtain a convenient expression for $a\,\del_t\nabla\Lap_G^{\frac{L}2-1}R[G]$, applying \eqref{eq:REEqMom-2+1} to \eqref{eq:REEqR} yields
\begin{align*}
a\,\del_tR[G]=&\,2\,N\,\tau\,a\left(R[G]-2\,\kappa\right)-8\,\kappa\,\dot{a}\,N-2\,a^{-1}\,\langle\hat{k},\nabla^2N\rangle_G\\
&\,+16\pi\,a^{-1}\,\langle\nabla N,\change{(\Psi+C)}\,\nabla\phi+\j^{Vl}\rangle_G\\
&\,+\change{16\pi\,a^{-1}\,(N+1)\left[\,(\Psi+C)\,\Lap_G\phi+\langle\nabla\phi,\nabla\Psi\rangle_G+\div_G\j^{Vl}\right]\,.}
\end{align*}
Commuting the above with $\nabla\Lap_G^{\frac{L}2-1}$, this schematically becomes the following.
\begin{subequations}\label{eq:REEqR-comm}
\begin{align}
\label{eq:REEqR-comm-1}a\,\del_t\nabla&\,\Lap_G^{\frac{L}2-1}R[G]=-4\,N\,\dot{a}\nabla\Lap_G^{\frac{L}2-1}\left(R[G]-2\,\kappa\right)-4\,\dot{a}\,R[G]\,\nabla\Lap_G^{\frac{L}2-1}N\\
\label{eq:REEqR-comm-2}&\,+a^{-1}\left(\hat{k}\ast\nabla^{L+1}N+\nabla\hat{k}\ast\nabla^{L} N+\nabla^2\hat{k}\ast\nabla^{L-1}N\right)\\
\label{eq:REEqR-comm-3}&\,+16\pi\,C\,a^{-1}\,\nabla\Lap_G^{\frac{L}2}\phi+a^{-1}\left(\nabla\Psi \ast\nabla^2\Lap_G^{\frac{L}2-1}\phi+\nabla^2\phi\ast\nabla\Lap_G^{\frac{L}2-1}\Psi+\nabla\phi\ast\nabla^{L}\Psi\right)\\
\label{eq:REEqR-comm-4}&\,{+16\pi\,a^{-1}\,\nabla\Lap_G^\frac{L}2\j^{Vl}}+a\,[\del_t,\nabla\Lap_G^{\frac{L}2-1}]R[G]+a\,\mathfrak{R}_{\text{junk},L-1}\
\end{align}
\end{subequations}
Herein, $\mathfrak{R}_{\text{junk},L-1}$ collects lower order error terms. Using the low order bounds \eqref{eq:APkhat0}, \eqref{eq:APkin} and \eqref{eq:APder} as well as the bootstrap bound \eqref{eq:bs-ass-HC} for the lapse, the $L^2_G$-norm of the right hand side of \eqref{eq:REEqR-comm-1}-\eqref{eq:REEqR-comm-2} is bounded by
\begin{align*}
\lesssim&\,\epsilon\,a^{-1-c\eta}\sqrt{a^2\,\E^{(L-1)}(R,\cdot)}+\epsilon\,a^{-2}\sqrt{a^2\,\E^{(L+1)}(N,\cdot)}\\
&\,+\epsilon\,a^{-1-c\sqrt{\epsilon}}\sqrt{\E^{(L)}(N,\cdot)}+a^{-1-c\sqrt{\epsilon}}\sqrt{\E^{(\leq L-1)}(N,\cdot)}\,.
\end{align*}
Using the low order estimates \eqref{eq:APkin} and \eqref{eq:APder} as well as \eqref{eq:ibp-trick-2+1}, that of \eqref{eq:REEqR-comm-3} is bounded by
\begin{align*}
\lesssim a^{-2}\sqrt{\E^{(L)}(\phi,\cdot)}+\left(\epsilon\,a^{-2-c\sqrt{\epsilon}}+\sqrt{\epsilon}\,a^{-1-c\sqrt{\epsilon}}\right)\sqrt{\E^{(\leq L-2)}(\phi,\cdot)}\,,
\end{align*}
while the Vlasov term in \eqref{eq:REEqR-comm-4} can be estimated directly using \eqref{eq:near-coerc-Vl3}. The remaining error terms in \eqref{eq:REEqR-comm-4} can be bounded as follows:
\begin{align*}
a\left\|[\del_t,\nabla\Lap_G^{\frac{L}2-1}]R[G]\right\|_{L^2_G}\lesssim&\,\epsilon \,a^{-2}\,\sqrt{a^2\,\E^{(L-1)}(R,\cdot)}+\epsilon\,a^{-1-c\sqrt{\epsilon}}\,\sqrt{\E^{(\leq L-2)}(R,\cdot)}\\
&\,+\epsilon\,a^{-1-c\sqrt{\epsilon}}\left(\sqrt{\E^{(\leq L-2)}(\hat{k},\cdot)}+\sqrt{\E^{(\leq L-2)}(N,\cdot)}\right)\\
a\|\mathfrak{R}_{\text{junk},L-1}\|_{L^2_G}\lesssim&\,\epsilon\,a^{-1-c\sqrt{\epsilon}}\left(\sqrt{\E^{(\leq L-1)}(\hat{k},\cdot)}+\sqrt{\E^{(\leq L-2)}(N,\cdot)}\right)+\epsilon\,a^{-c\eta}\,\sqrt{\E^{(\leq L-2)}(R,\cdot)}
\end{align*}
Returning to \eqref{eq:diff-est-R-basic} with these estimates in hand, one obtains the following, using the Young inequality and \eqref{eq:ibp-trick-2+1}. 
\begin{align*}
-\del_t\left(a^2\,\E^{(L-1)}(R,\cdot)\right)\lesssim&\,\left(\change{\epsilon^\frac18\,a^{-2}+a^{-1-c\eta}}\right)\left(a^2\E^{(L-1)}(R,\cdot)\right)+\epsilon\,a^{-2}\left(a^2\,\E^{(L+1)}(N,\cdot)\right)\\
&\,+a^{-1-c\sqrt{\epsilon}}\,\E^{(L)}(N,\cdot)+\epsilon^{-\frac18}\,a^{-2}\,\E^{(L)}(\phi,\cdot)\\
&\,+\epsilon\,a^{-1-c\sqrt{\epsilon}}\left(\E^{(L)}(\hat{k},\cdot)+\E^{(L-2)}(R,\cdot)\right)\\
&\,+a^{-1-c\sqrt{\epsilon}}\,\E^{(\leq L)}_{1,\leq L}(f,\cdot)+\epsilon\,a^{-1-c\sqrt{\epsilon}}\,\E^{(\leq L-2)}(N,\cdot)\\
&\,+\left(\epsilon \,a^{-2-c\sqrt{\epsilon}}+\sqrt{\epsilon}\,a^{-1-c\sqrt{\epsilon}}\right)\E^{(\leq L-2)}(\phi,\cdot)\\
&\,+\epsilon\,a^{-1-c\sqrt{\epsilon}}\left(\E^{(\leq L-2)}(\hat{k},\cdot)+\E^{(\leq L-4)}(R,\cdot)\right)
\end{align*}
\change{The estimate \eqref{eq:en-est-Ric-top-2+1} }now follows from inserting the lapse estimate \eqref{eq:en-lapse1} up to order $L$ (which, in particular, also controls $a^2\E^{(L+1)}(N,\cdot)$ sufficiently) \change{and then integrating in time.\\

Regarding \eqref{eq:en-est-Ric-scaled-top-2+1}, the individual estimates lead to the following bound by distributing the scale factor weight between individual factors.
\begin{align*}
-\del_t\bigr(a^{2+\frac{\omega}4}&\,\E^{(L-1)}(R,\cdot)\bigr)\lesssim\left(\epsilon^\frac18\,a(s)^{-2}+a(s)^{-1-c\eta}\right)\left(a^2\E^{(L-1)}(R,\cdot)\right)\\
&\,+\sqrt{a^{2+\frac{\omega}4}\E^{(L-1)}(R,\cdot)}\left\{\epsilon a^{-2}\sqrt{a^{2+\frac{\omega}4}\E^{(L+1)}(N,\cdot)}+a^{-1-c\sqrt{\epsilon}+\frac{\omega}8}\sqrt{\E^{(L)}(N,\cdot)}+\right.\\
&\,\qquad+a^{-2+\frac{\omega}8}\sqrt{\E^{(L)}(\phi,\cdot)}+\epsilon a^{-1-c\sqrt{\epsilon}}\left(\sqrt{a^\frac{\omega}4\E^{(L)}(\hat{k},\cdot)}+\sqrt{a^\frac{\omega}4\E^{(L-2)}(R,\cdot)}\right)\\
&\,\qquad+\left(\epsilon a^{-2-c\sqrt{\epsilon}+\frac{\omega}8}+a^{-1-c\sqrt{\epsilon}+\frac{\omega}8}\right)\sqrt{\E^{(\leq L-2)}(\phi,\cdot)}\\
&\,\qquad+\left.\epsilon a^{-1-c\sqrt{\epsilon}}\left(\sqrt{a^\frac{\omega}4\E^{(\leq L-2)}(\hat{k},\cdot)}+\sqrt{a^\frac{\omega}4\E^{(\leq L-4)}(R,\cdot)}\right)\right\}\\
\end{align*}
One again inserts the lapse estimate \eqref{eq:en-lapse1}, distributing weights for the resulting terms in a similar manner. In particular, the leading order shear term from $a^2\E^{(L+1)}(N,\cdot)$ enters as follows.
\begin{align*}
&\,\left(\epsilon a^{-2}+a^{-1-c\sqrt{\epsilon}+\frac{\omega}8}\right)\sqrt{a^{2+\frac{\omega}4}\E^{(L-1)}(R,\cdot)}\sqrt{\epsilon^2a^{\frac{\omega}4}\E^{(L)}(\hat{k},\cdot)}\\
\lesssim &\,\epsilon^\frac18 a^{-2}a^{2+\frac{\omega}4}\E^{(L-1)}(R,\cdot)+\epsilon^2a^{-2}\,a^{\frac{\omega}4}\E^{(L)}(\hat{k},\cdot)
\end{align*}
Using Young's inequality as well as \eqref{eq:ibp-trick-2+1} and integrating in time, this then implies \eqref{eq:en-est-Ric-scaled-top-2+1}.}
\end{proof}

\subsection{Energy estimates for matter components}\label{subsec:en-est-matter}

Herein, we collect the energy estimates for scalar field and Vlasov matter. The proof for the former goes along similar lines as in \cite[Lemma 6.2]{FU24} and \cite[Lemma 5.1]{FU25}, but we verify that the delicate cancellations therein which rely on the respective Friedman equations still extend to this setting, including to $C<0$.

\begin{lemma}[Scalar field energy estimate]
\begin{align*}
\numberthis\label{eq:en-est-SF-2+1}&\,\E^{(L)}(\phi,t)+\int_t^{t_0}\dot{a}(s)\,a(s)\,\E^{(L+1)}(N,s)+\frac{\dot{a}(s)}{a(s)}\,\E^{(L)}(N,s)\,ds\\
\lesssim&\,\epsilon^4+\int_t^{t_0}\left({\epsilon}\,a(s)^{-2}+a(s)^{-1-c\sqrt{\epsilon}}\right)\E^{(L)}(\phi,s)\,ds\\
&\,+\int_t^{t_0}\left\{\epsilon\,a(s)^{-2}\E^{(L)}(\hat{k},s)+\epsilon^\frac32 \,a(s)^{-2}\,\E^{(L-2)}(R,s)\right.\\
&\,\phantom{+\int_t^{t_0}}+a(s)^{-1-c\sqrt{\epsilon}-(L+1)\omega}\cdot \left(a(s)^{(L+1)\omega}\,\E^{(\leq L)}_{1,\leq L}(f,s)+a(s)^{\frac{\omega}2}\,\|G-\gamma\|_{L^2_G(M_s)}^2\right)\\
&\,\phantom{+\int_t^{t_0}}+\sqrt{\epsilon}\,a(s)^{-2-c\sqrt{\epsilon}}\E^{(\leq L-2)}(\phi,s)+\epsilon\,a(s)^{-2-c\sqrt{\epsilon}}\,\E^{(\leq L-2)}(\hat{k},s)\\
&\,\phantom{+\int_t^{t_0}}\left.+\,\epsilon^\frac32\,a(s)^{-2-c\sqrt{\epsilon}}\E^{(\leq L-4)}(R,s)\right\}\,ds\,.
\end{align*}
\end{lemma}
\begin{proof}
Schematically, after commuting \eqref{eq:REEqnablaphi} and \eqref{eq:REEqWave-2+1} with $\Lap_G^\frac{L}2$ and applying integration by parts to cancel the highest order terms in $\nabla\phi$, one obtains the following.
\begin{align*}
-\del_t\E^{(L)}(\phi,\cdot)=&\,\int_M\left\{4\,C\,\frac{\dot{a}}a\,\Lap_G^\frac{L}2N\,\Lap^\frac{L}2\Psi+2\langle\nabla\Lap_G^\frac{L}2N,\nabla\phi\rangle_G\,\Lap_G^\frac{L}2\Psi\right.\\
&\,\phantom{\int_M}\left.-2\,C\,\langle\nabla\Lap_G^\frac{L}2N,\nabla\Lap_G^\frac{L}2\phi\rangle_G-2\,\dot{a}\,a\,\lvert\nabla\Lap_G^\frac{L}2\phi\rvert_G^2\right\}\,\vol{G}+\langle\text{error terms}\rangle\,.
\end{align*}
The first two terms arise from \eqref{eq:REEqWave-2+1}, the third from \eqref{eq:REEqnablaphi} and the fourth from differentiating the weight within the energy. Regarding the first term, applying $\Lap_G^\frac{L}2$ to both sides of \eqref{eq:REEqN} implies
\begin{subequations}
\begin{align*}
0=&\,-\frac1{4\pi}\,\dot{a}\,a\int_M\div_G\left(\nabla\Lap_G^\frac{L}2N\,\Lap_G^\frac{L}2N\right)\,\vol{G}\\
=&\,\int_M\left[-\frac{1}{4\pi}\,\dot{a}\,a\,\lvert\nabla\Lap_G^\frac{L}2N\rvert_G^2-\frac{1}{4\pi}\frac{\dot{a}}a\,\Lap_G^\frac{L}2(H+hN)\,\Lap_G^\frac{L}2N\right]\,\vol{G}\\
\numberthis\label{eq:lapse-0-1}=&\,\int_M\left[-\frac{1}{4\pi}\,\dot{a}\,a\,\lvert\nabla\Lap_G^\frac{L}2N\rvert_G^2-\left(4\,C^2\,\frac{\dot{a}}a+4\,\dot{a}\left(\rho^{Vl}_{FLRW}+2\,\mathfrak{p}^{Vl}_{FLRW}\right)-\frac{\kappa}{2\pi}\, \dot{a}\,a\right)\lvert\Lap_G^{\frac{L}2}N\rvert_G^2\right.\\
\numberthis\label{eq:lapse-0-2}&\,\left.-\change{\,4\,C\,}\frac{\dot{a}}a\,\Lap_G^\frac{L}2\Psi\,\Lap_G^\frac{L}2N-4\,\dot{a}\change{\Lap^\frac{L}2\left(\rho^{Vl}-\rho^{Vl}_{FLRW}}\right)\,\Lap_G^\frac{L}2N+\langle\text{nonlinear error terms}\rangle\right]\,\vol{G}\,.
\end{align*}
\end{subequations}

Inserting this zero into \eqref{eq:diff-en-est-SF}, the first two terms of \eqref{eq:lapse-0-1} have definite sign and lead to the terms on the left hand side of \eqref{eq:en-est-SF-2+1} after absorbing some more terms discussed below, while the remaining ones in \eqref{eq:lapse-0-1} are bounded by $a^{-1}\,\E^{(L)}(N,\cdot)$. The first term in the \eqref{eq:lapse-0-2} cancels the first term on the right hand side of \eqref{eq:diff-en-est-SF}, while the latter can be bounded using \eqref{eq:near-coerc-Vl1}. Altogether, this means that the below bound holds for an appropriate constant $K>0$.
\begin{align*}
\numberthis\label{eq:diff-en-est-SF}&\,\del_t\E^{(L)}(\phi,\cdot)+\frac{1}{4\pi}\,\dot{a}\,a\,\E^{(L+1)}(N,\cdot)+4\,C^2\frac{\dot{a}}a\,\E^{(L)}(N,\cdot)\\
\leq&\,\int_M\left\{2\,\langle\nabla\Lap_G^\frac{L}2N,\nabla\phi\rangle_G\,\Lap_G^\frac{L}2\Psi-2\,C\,\langle\nabla\Lap_G^\frac{L}2N,\nabla\Lap_G^\frac{L}2\phi\rangle_G-2\,\dot{a}\,a\,\lvert\nabla\Lap_G^\frac{L}2\phi\rvert_G^2\right\}\,\vol{G}\\
&\,+K\left(a^{-1}\,\E^{(L)}(N,\cdot)+a^{-1-c\sqrt{\epsilon}}\,\change{\left(\E^{(L)}_{1,L}(f,\cdot)+\|\change{G^{-1}-\gamma^{-1}}\|_{L^2_G}^2\right)}\right)+\langle\text{error terms}\rangle\,.
\end{align*}
Since, by \eqref{eq:APder}, the first term on the right hand side of \eqref{eq:diff-en-est-SF} is bounded in absolute value by
\begin{equation*}
\lesssim\int_M\sqrt{\epsilon}\,a^{-c\sqrt{\epsilon}}\,\lvert\nabla\Lap_G^\frac{L}2N\rvert_G\,\lvert\Lap_G^\frac{L}2\Psi\rvert_G\,\vol{G}\lesssim \int_M\epsilon\,\dot{a}\,a\,\lvert\nabla\Lap_G^\frac{L}2N\rvert_G^2\,\vol{G}+a^{-c\sqrt{\epsilon}}\,\E^{(L)}(\phi,\cdot)\,,
\end{equation*}
we can absorb this resulting lapse term into the left hand side of \eqref{eq:diff-en-est-SF} by updating prefactors, while we simply keep the scalar field term on the right hand side.

Moving on to the second term on the right hand side of \eqref{eq:diff-en-est-SF}, \eqref{eq:Friedman-ineq} and Young's inequality imply
\begin{align*}
\left\lvert 2\,C\,\langle\nabla\Lap_G^\frac{L}2N,\nabla\Lap_G^\frac{L}2\phi\rangle_G\right\rvert\leq&\, \frac{1}{\sqrt{\pi}}\left({\dot{a}}\,a+\sqrt{\max\{0,\kappa\}}\,a\right)\lvert\nabla\Lap_G^\frac{L}2N\rvert_G\,\lvert\nabla\Lap_G^\frac{L}2\phi\rvert_G\\
\leq&\,\left({\dot{a}}\,a+\sqrt{\max\{0,\kappa\}}\,a\right)\left(2\,\lvert\nabla\Lap_G^\frac{L}2\phi\rvert_G^2+\frac{1}{8\pi}\,\lvert\nabla\Lap_G^\frac{L}2N\rvert_G^2\right)\,,
\end{align*}
The first term, combined with the third term in \eqref{eq:diff-en-est-SF}, only leaves
\[\sqrt{\max\{0,\kappa\}}\int_M a\,\lvert\nabla\Lap_G^\frac{L}2\phi\rvert_G^2\,\vol{G}\lesssim a^{-1}\,\E^{(L)}(\phi,\cdot)\,.\]
The second term can also be absorbed into the second term on the left of \eqref{eq:diff-en-est-SF} up to the error term $a^{-1}\E^{(L)}(N,\cdot)$ that we keep on the right.

In summary, this shows
\begin{align*}
-\del_t\E^{(L)}(\phi,\cdot)&\,+\dot{a}\,a\,\E^{(L+1)}(N,\cdot)+\frac{\dot{a}}a\,\E^{(L)}(N,\cdot)\\
\lesssim&\,a^{-1-c\sqrt{\epsilon}}\left(\E^{(L)}(\phi,\cdot)+\E^{(L)}(N,\cdot)+\E^{(\leq L)}_{1,\leq L}(f,\cdot)+\|\change{G^{-1}-\gamma^{-1}}\|_{L^2_G(M_\cdot)}^2\right)\\
&\,+\langle\text{collected borderline and junk error terms}\rangle\,.
\end{align*}

From here, \eqref{eq:en-est-SF-2+1} follows by estimating the error terms with the help of the low order bounds in Section \ref{sec:AP}, the elliptic lapse estimate Lemma \ref{lem:en-lapse} and the near-coercivity estimate Lemma \ref{lem:near-coerc}.
\end{proof}

Regarding the estimates for Vlasov matter, the structural features exploited in \cite[Section 6]{FU25} fully survive in \eqref{eq:REEqVlasov} and thus allow us to extract the energy estimates in Lemma \ref{lem:vlasov-total} as in the higher-dimensional setting. We provide estimates \eqref{eq:vlasov-en-0}-\eqref{eq:vlasov-total}, which are necessary for obtaining bounds on $\E^{(L)}_\textrm{\normalfont total}$, as well as the unscaled estimates \eqref{eq:vlasov-indiv}, which are needed to improve the bootstrap assumptions on Vlasov matter.

\begin{lemma}[Total Vlasov energy estimate]\label{lem:vlasov-total} At order $0$, the following bound holds:
\begin{align*}\numberthis\label{eq:vlasov-en-0}
\change{a(t)^{\omega}\,\E^{(0)}_{1,0}(f,t)}\lesssim&\,\epsilon^4+\int_t^{t_0}\left(\epsilon a(s)^{-2}+a^{-2+\change{\frac{\omega}4}}\right)\, a(s)^\omega\E^{(0)}_{1,0}(f,s)\,ds\\
&\,+\int_t^{t_0}a(s)^{-2-c\sqrt{\epsilon}+\frac{\omega}4}\left(\E^{(0)}(\phi,s)+\change{\epsilon^2\,\E^{(0)}(\hat{k},s)}+a(s)^{2+\frac{\omega}2}\|\Gamma-\Gamhat\|_{L^2_G(M_s)}^2\right)\,ds\\
&\,+\int_t^{t_0}a(s)^{-1-c\sqrt{\epsilon}+\frac{\omega}4}\, a(s)^{\frac{\omega}2}\|\change{G^{-1}-\gamma^{-1}}\|_{L^2_G(M_s)}^2\,ds\,.
\end{align*}
For $L\in\{2,4,6,8,10\}$, the total Vlasov energy $\E^{(L)}_\textrm{\normalfont total,Vl}$ as defined in Definition \ref{def:en} obeys the following energy estimate.
\begin{subequations}\label{eq:vlasov-total}
\begin{align*}\numberthis\label{eq:vlasov-total-scaled-2+1}
\E^{(L)}_\textrm{\normalfont total,Vl}(t)\lesssim&\,\epsilon^4+\int_t^{t_0}\left(a(s)^{-1-{\omega}}+a(s)^{-2-c\sqrt{\epsilon}+\frac{\omega}{2}}+\epsilon^\frac18\,a(s)^{-2}\right)\E^{(L)}_\textrm{\normalfont total,Vl}(s)\,ds\\
&\,+\int_t^{t_0}a(s)^{-2-c\sqrt{\epsilon}+L\omega}\left(\E^{(L)}(\phi,s)+\change{a(s)^\frac{\omega}4}\E^{(L)}(\hat{k},s)\right)\,ds\\
&\,+\int_t^{t_0}a(s)^{-2-c\sqrt{\epsilon}+\change{L\omega}}\, \change{a(s)^{2+\frac{\omega}4}}\,\E^{(L-1)}(R,s)+\change{a(s)^{-2-c\sqrt{\epsilon}+L\omega}}\,a(s)^{\frac{\omega}2}\E^{(L-2)}(R,s)\,ds\\
&\,+\int_t^{t_0}\left(a(s)^{-1-\omega\change{-c\eta}}+a(s)^{-2-c\sqrt{\epsilon}+\frac{\omega}2}\right)\E^{(\leq L-2)}_\textrm{\normalfont total,Vl}(f,s)\,ds\\
&+\int_t^{t_0}a(s)^{-2-c\sqrt{\epsilon}+\omega}\left(\E^{(\leq L-2)}(\phi,s)+\change{a(s)^\frac{\omega}4}\E^{(\leq L-2)}(\hat{k},s)\right)\,ds\\
&\,+\int_t^{t_0}\change{a(s)^{-2-c\sqrt{\epsilon}+\omega}}\,\change{a(s)^\frac{\omega}2}\E^{(\leq L-4)}(R,s)\,ds\\
&\,+\int_t^{t_0}\change{a(s)^{-1-c\sqrt{\epsilon}}\cdot}\, a(s)^\frac{\omega}2\left(\|\Gamma-\Gamhat\|_{H^1_G(M_s)}^2+\|\change{G^{-1}-\gamma^{-1}}\|_{H^2_G(M_s)}^2\right)\,ds\,.
\end{align*}
Recalling $\change{\E^{(L)}_\textrm{\normalfont total}}$ from Definition \ref{def:en}, this becomes
\change{\begin{align*}\numberthis\label{eq:total-Vl-scaled-summ}
\,\E^{(L)}_\textrm{\normalfont total,Vl}(t)\lesssim&\,\epsilon^4+\int_t^{t_0}\left(a(s)^{-1-{\omega}-c\sqrt{\epsilon}}+a(s)^{-2-c\sqrt{\epsilon}+\frac{\omega}{4}}+\epsilon^\frac18\,a(s)^{-2}\right)\E^{(L)}_\textrm{\normalfont total}(s)\,ds\\
+\int_t^{t_0}&\left(a(s)^{-1-\omega-c\eta}+a(s)^{-2-c\sqrt{\epsilon}+\frac{\omega}4}\right)\E^{(\leq L-2)}_\textrm{\normalfont total}(s)\,ds\,.
\end{align*}}
\end{subequations}
Additionally, the following individual unscaled estimates hold for any integers $J,K$ with $0\leq K\leq J\leq 10$:
\begin{align*}\numberthis\label{eq:vlasov-indiv}
\E^{(J)}_{1,K}(f,t)\lesssim&\,\epsilon^4+\int_t^{t_0}\left(a(s)^{-1-c\eta}+\epsilon^\frac18\,a(s)^{-2}\right)\E^{(J)}_{1,K}(f,s)\,ds\\
&\,+\underbrace{\int_t^{t_0} a(s)^{-1-c\eta-(K+2)\omega}\,a(s)^{(K+2)\omega}\E^{(J)}_{1,K+1}(f,s)\,ds}_{\text{if }K<J}\\
&\,+\int_t^{t_0}\left\{\epsilon^\frac78\,a(s)^{-2-c\sqrt{\epsilon}}\left(\E^{(J)}_{1,\leq K-1}(f,s)+\E^{(\leq J-1)}_{1,\leq K}(f,s)\right)\right.\\
&\,\phantom{\int_t^{t_0}}+\epsilon^{-\frac18}\,a(s)^{-2-c\sqrt{\epsilon}}\left(\E^{(\leq J)}(\phi,s)+\E^{(\leq J)}(\hat{k},s)+\change{\E^{(\leq J-2)}(R,s)+a(s)^2\E^{(J-1)}(R,s)}\right)\\
&\,\phantom{\int_t^{t_0}}+\left.a(s)^{-1-c\sqrt{\epsilon}}\left(\|\Gamma-\Gamhat\|_{H^1_G(M_s)}^2+\|\change{G^{-1}-\gamma^{-1}}\|_{H^2_G(M_s)}^2\right)\right\}\,ds\,.
\end{align*}
\end{lemma}
\begin{proof}
Again, we only provide an outline:
Firstly, recalling the Vlasov operator from \eqref{eq:REEqVlasov-gen}, one has that, for $K>0$,
\changereport{
\begin{equation}\label{eq:vlasov-main-mech}
a^{(K+1)\omega}\left\lvert\int_{T^\ast M}\langle v\rangle_G^2\langle\X\nabsak_{\mathrm{hor}}^K\nabsak_{\mathrm{vert}}^{J-K} f,\nabsak_{\mathrm{hor}}^K\nabsak_{\mathrm{vert}}^{J-K}f\rangle_{\G_0}\,\vol{\G}\right\rvert\lesssim \epsilon\,a^{1-c\eta}\,a^{(K+1)\omega}\E^{(L)}_{1,K}(f,\cdot)
\end{equation}}
and the analogous bound for $K=0$ replacing $f$ with $f-f_{FLRW}$\change{, see \cite[Lemma 6.2]{FU25}}. This is shown by integrating by parts and using \eqref{eq:APMom-2+1}, \eqref{eq:bs-ass-HC} and \eqref{eq:APkhat0} to bound terms that arise when derivatives hit variables other than \change{$f$}. This provides the core energy conservation mechanism and leads directly to \eqref{eq:vlasov-en-0} after bounding error terms arising from $\X f_{FLRW}$ as well as various error terms that arise when the time derivative hits momentum weights or metric terms. These can be controlled using the same tools along with \eqref{eq:APder}. The final line of \eqref{eq:vlasov-total-scaled-2+1} occurs from reference terms similar to \eqref{eq:ibp-Vlasov-1-1}.\\

\change{For higher orders, we consider the evolution of $a^{(K+1)\omega}\E^{(L)}_{1,K}(f,\cdot)$ for $L>K>0$ as an example:
\begin{subequations}\label{eq:vlasov-diff-est-2+1}
\begin{align*}
&\,-a^{(K+1)\omega}\del_t\E^{(L)}_{1,K}(f,\cdot)\\
\numberthis\label{eq:vlasov-diff-est-2+1-basic1}=&\,-a^{(K+1)\omega}\int_{T^\ast M}\langle v\rangle_G^2\left(2\frac{\del_t\langle v\rangle_G}{\langle v\rangle_G}+2(L-K)\frac{\del_tv^0}{v^0}\right)\lvert\nabsak^{L-K}_{vert}\nabsak^K_{hor}f\rvert_{\G_0}^2\,\vol{\G}\\
\numberthis\label{eq:vlasov-diff-est-2+1-basic2}&\,-a^{(K+1)\omega}\int_{T^\ast M}\langle v\rangle_G^2\,\del_t\G^{- 1}\ast_{\G_0}\changereport{\nabsak^{L-K}_{vert}\nabsak^K_{hor}f\ast_{\G_0}\nabsak^{L-K}_{vert}\nabsak^K_{hor}f}\,\vol{\G}\\
\numberthis\label{eq:vlasov-diff-est-2+1-delt}&\,-a^{(K+1)\omega}\int_{T^\ast M}\langle v\rangle_G^2\langle[\del_t,\nabsak_{vert}^{L-K}\nabsak_{hor}^K]f,\nabsak_{vert}^{L-K}\nabsak_{hor}^K]f\rangle_{\G_0}\,\vol{\G}\\
\numberthis\label{eq:vlasov-diff-est-2+1-X}&\,-a^{(K+1)\omega}\int_{T^\ast M}\langle v\rangle_G^2\langle[\nabsak_{vert}^{L-K}\nabsak_{hor}^K,\X]f,\nabsak_{vert}^{L-K}\nabsak_{hor}^K]f\rangle_{\G_0}\,\vol{\G}\\
&\,+\langle\text{term controlled via \changereport{\eqref{eq:vlasov-main-mech}}}\rangle\,.
\end{align*}
\end{subequations}
\changereport{The lines \eqref{eq:vlasov-diff-est-2+1-basic1}-\eqref{eq:vlasov-diff-est-2+1-basic2} }can be controlled in absolute value by $\lesssim \epsilon a^{-2}\E^{(L)}_{1,K}(f,\cdot)$ by inserting the evolution of the spatial metric \eqref{eq:REEqG-2+1} (and that of its inverse) and controlling resulting zero order shear and lapse terms with \eqref{eq:APkhat0} and \eqref{eq:bs-ass-HC} respectively. 
\eqref{eq:vlasov-diff-est-2+1-delt} can be controlled by bounds on $\|\del_tR[G]\|_{H^{K-2}}$ and $\|\del_t\Gamma[G]\|_{H^{K-1}}$, see \eqref{eq:REEqR} and \eqref{eq:REEqChr-2+1}, as well as using Lemma \ref{lem:AP-2+1} to estimate low order terms and Lemma \ref{lem:APMom-2+1} to estimate momentum weights where necessary. This leads to
\begin{align*}
&a^{\frac{(K+1)\omega}2}\left\|[\del_t,\nabsak_{\mathrm{hor}}^{K}\nabsak_{\mathrm{vert}}^{L-K}]f\right\|_{L^2_{{\G}_0}(T^\ast M_\cdot)}\\
\lesssim&\,a^{-2+\frac\omega2-c\sqrt{\epsilon}}\left(\sqrt{a^{K\omega}\E^{(\leq K)}(\hat{k},\cdot)}+\sqrt{a^{K\omega}\E^{(\leq K)}(N,\cdot)}+\epsilon\,\sqrt{a^{K\omega}\E^{(\leq K-2)}(\Ric,\cdot)}\right)\\
&\,+\epsilon\,a^{-2+\frac{\omega}2-c\sqrt{\epsilon}}\,\left(\sqrt{a^{K\omega}\E^{(L)}_{1,K-1}(f,\cdot)}+\sqrt{a^{K\omega}\E^{(\leq L-1)}_{1,\leq K-1}(f,\cdot)}\right)\,.
\end{align*}
After applying Young's inequality, \eqref{eq:ibp-trick-2+1} and \eqref{eq:ibp-Vlasov-1-1} as usual, the resulting terms are controlled by the first four lines of \eqref{eq:vlasov-total-scaled-2+1}.

Regarding \eqref{eq:vlasov-diff-est-2+1-X}, }the two main considerations need to be taken when commuting with the horizontal term and the shear term in $\X$. When commuting with the horizontal term, one obtains the following high order terms
\begin{equation}\label{eq:Vlasov-en-est-2+1-interim}
a^{-1}\,\nabla^{K-1}R[G]\ast\frac{v\ast v}{v^0}\ast\nabsak_{\mathrm{vert}}^{L-K+1}f+a^{-1}\left(\frac1{v^0}+\frac{v\ast v}{(v^0)^3}\right)\ast\nabsak_{\mathrm{vert}}^{L-K-1}\nabsak_{\mathrm{hor}}^{K+1}f\,,
\end{equation}
\change{among some lower order nonlinear terms. The first term in \eqref{eq:Vlasov-en-est-2+1-interim} can be controlled in $L^2_{1,\G_0}$ by $a^{-2-c\sqrt{\epsilon}}\sqrt{a^2\,\E^{(K-1)}(R,\cdot)}$ and is the reason we need to include this high order scaled energy in the third line of \eqref{eq:vlasov-total-scaled-2+1}. The second term can be bounded by $a^{-1}\sqrt{\E^{(L)}_{1,K+1}(f,\cdot)}$, taking the momentum weights in ${\G}_0$ into account. Thus, these two leading order terms from \eqref{eq:Vlasov-en-est-2+1-interim} in \eqref{eq:vlasov-diff-est-2+1-X} can be controlled by
\begin{align*}
\lesssim&\, \left(a^{-2+\frac{\omega}2}+a^{-1}\right)\,a^{(K+1)\omega}\E^{(L)}_{1,K}(f,\cdot)\\
&\,+a^{-2+K\omega-c\sqrt{\epsilon}}\E^{(K-1)}(R,\cdot)+a^{-1-\omega}\,a^{(K+2)\omega}\E^{(L)}_{1,K+1}(f,\cdot)\,.
\end{align*}
Note that the final term can be bounded by $a^{-1-\omega}\,\E^{(L)}_\textrm{\normalfont total,Vl}$, and thus is absorbed into the first line of \eqref{eq:vlasov-total-scaled-2+1}.
Regarding the vertical derivative term in the Vlasov operator, high order lapse terms can be controlled using Lemma \ref{lem:en-lapse} and \eqref{eq:APVlasov-2+1}, and high order Vlasov terms have negligible effect since $\nabla N$ converges at low orders due to the bootstrap assumption \eqref{eq:bs-ass-HC}. In summary, integrating \eqref{eq:vlasov-diff-est-2+1} for each $K$ leads to right hand sides controlled by \eqref{eq:vlasov-total-scaled-2+1}, and after summing over $K$, to \eqref{eq:vlasov-total-scaled-2+1}-\eqref{eq:total-Vl-scaled-summ}}.

The bound \eqref{eq:vlasov-indiv} is extracted similarly, where instead of exploiting a scaling hierarchy, one simply keeps energy terms that would otherwise barely fail to be integrable, at the cost of the weight $\epsilon^{-\frac18}$ when applying Young's inequality. Note that, once these spacetime quantities are sufficiently well controlled via estimates for $\E^{(L)}_\textrm{\normalfont total}$, these are simply inhomogeneous error terms. Thus, these estimates will be sufficient to close the Vlasov energy argument.
\end{proof}

\subsection{Total energy estimate}\label{subsec:en-est-total}

Finally, we can combine the energy estimates from this section to obtain the following total energy bounds.
\begin{corollary}[Total energy bounds]\label{cor:en-imp} For any $L\in \{0,2,4,6,8,10\}$ \change{and sufficiently small $\omega\gg\eta$ (e.g., $\omega=\frac{1}{100}$ and $\eta$ sufficiently small)}, the total energy satisfies the following bound.
\begin{equation}\label{eq:total-en-imp-2+1}
\E^{(\leq L)}_\textrm{\normalfont total}(t)\lesssim \epsilon^4\,a(t)^{-c\epsilon^\frac18}\,.
\end{equation}
Furthermore, \change{one has 
\begin{equation}\label{eq:quiesc-en-imp-2+1}
\E^{(\leq L)}_\textrm{\normalfont quiesc}(t)\lesssim \epsilon^4\,a(t)^{-c\epsilon^\frac18}
\end{equation}
as well as the following energy bounds for the lapse and curvature.
\begin{align}\label{eq:lapse-en-imp}
a^4\,\E^{(12)}(N,t)+a^2\,\E^{(11)}(N,t)+\E^{(\leq 10)}(N,t)\lesssim&\, \epsilon^\frac{15}4\,a^{2-c\epsilon^\frac18}\,,\\
a^2\E^{(9)}(R,\cdot)\lesssim&\,\epsilon^\frac{15}4a(t)^{2-c\epsilon^\frac18}\label{eq:Ric-en-top-imp}\,.
\end{align}
Furthermore, for integers $J,K$ with $0\leq K\leq J\leq 10$, one has
\begin{equation}\label{eq:vlasov-en-imp-2+1}
\E^{(\leq J)}_{1,\leq K}(f,t)\lesssim\epsilon^\frac72\,a(t)^{-c\epsilon^\frac18}\,.
\end{equation}}
\end{corollary}
\begin{proof}
First, we combine the energy estimates within this section into the following total estimate.
\begin{align*}\numberthis\label{eq:total-en-est}
\E^{(L)}_\textrm{\normalfont total}(t)\lesssim&\,\epsilon^4+\int_t^{t_0}\left(\epsilon^\frac18\,a(s)^{-2}+a(s)^{-1-c\eta-(L+1)\omega}+a(s)^{-2-c\sqrt{\epsilon}+\change{\frac{\omega}8}}\right)\E^{(L)}_\textrm{\normalfont total}(s)\,ds\\
&\,+\int_t^{t_0}\change{\left(\epsilon^{\frac38}\,a(s)^{-2-c\sqrt{\epsilon}}+a(s)^{-1-c\eta-(L+1)\omega}\right)}\,\E^{(\leq L-2)}_\textrm{\normalfont total}(s)\,ds\,.
\end{align*}
More precisely, this is obtained by estimating terms in the following manner, in the same order as in \eqref{eq:total-en-def}.
\change{\begin{itemize}
\item For any $L$, use the following estimates to control the first line of \eqref{eq:total-en-def}.
\begin{itemize}
\item Apply \eqref{eq:en-est-SF-2+1} to the scalar field energy.
\item Apply \eqref{eq:en-k} and \eqref{eq:en-k-scaled} to the shear energy, with the former to the summand scaled by $\epsilon^\frac14$ and the latter to the one scaled by $a^{\frac{\omega}4}$.
\item Apply \eqref{eq:en-est-Ric-2+1} and \eqref{eq:en-est-Ric-scaled-2+1} to the curvature energies at order $L-2$ analogously.
\end{itemize}
\item The terms in the second line of \eqref{eq:total-en-def} are treated as follows.
\begin{itemize}
\item To control the Vlasov energy, apply \eqref{eq:vlasov-en-0} at order zero and \eqref{eq:total-Vl-scaled-summ} for $L\geq 2$.
\item Apply \eqref{eq:en-est-Ric-top-2+1} and \eqref{eq:en-est-Ric-scaled-top-2+1} to the curvature energies at order $L-1$.
\end{itemize}
\item Moreover, the metric errors in the third line are controlled as follows:
\begin{itemize}
\item For $L=0$, one uses \eqref{eq:G-norm-int-est} and \eqref{eq:en-est-Chr0}.
\item For $L=2$, one uses \eqref{eq:G-norm-int-est} and \eqref{eq:en-est-Chr1}.
\end{itemize}
\end{itemize}}
  
Applying the Gronwall lemma to \eqref{eq:total-en-est} then immediately implies \eqref{eq:total-en-imp-2+1} for $L=0$, since the second line does not occur. Assuming \eqref{eq:total-en-imp-2+1} has been shown for $L-2$, \eqref{eq:total-en-est} implies
\change{\begin{align*}
\E^{(L)}_\textrm{\normalfont total}(t)\lesssim&\,\int_t^{t_0}\left(\epsilon^\frac18\,a(s)^{-2}+a(s)^{-2-c\sqrt{\epsilon}+\frac{\omega}8}+a(s)^{-1-c\eta-(L+1)\omega}\right)\E^{(L)}_\textrm{\normalfont total}(s)\,ds\\
&\,+\epsilon^4\left[1+\int_t^{t_0}\left(\epsilon^\frac18a(s)^{-2-c\epsilon^\frac18}+a(s)^{-1-(L+1)\omega-c\eta}\right)\,ds\right]
\end{align*}}
and thus \eqref{eq:total-en-imp-2+1} at order $L$ after applying \eqref{eq:a-int-est} to the \change{second line }as well as the Gronwall lemma. After completing this iteration, \eqref{eq:total-en-imp-2+1} is proven for all $L$ in the statement.\\

\change{Moving on to $\E^{(L)}_\textrm{\normalfont quiesc}$, one obtains the following estimate combining \eqref{eq:en-est-SF-2+1}, \eqref{eq:en-k} and \eqref{eq:en-est-Ric-2+1}.
\begin{align*}
\E^{(L)}_\textrm{\normalfont quiesc}&\,(t)\lesssim\epsilon^4+\int_t^{t_0}\left(\epsilon^\frac18a(s)^{-2}+a(s)^{-1-c\eta}\right)\E^{(L)}_\textrm{\normalfont quiesc}(s)\,ds\\
&\,+\int_t^{t_0}\left(\epsilon^\frac38 \,a(s)^{-2-c\sqrt{\epsilon}}+a(s)^{-1-c\sqrt{\epsilon}}\right)\E^{(\leq L-2)}_\textrm{\normalfont quiesc} (s)\,ds\\
&\,+\int_t^{t_0}a(s)^{-1-c\sqrt{\epsilon}-(L+1)\omega}\left\{\E^{(\leq L)}_\textrm{\normalfont total,Vl}(s)+a(s)^\frac{\omega}2\left(\|G^{-1}-\gamma^{-1}\|^2_{H^2_G(M_s)}+\|\Gamma-\Gamhat\|_{H^1_G(M_s)}^2\right)\right\}\,ds\,.
\end{align*}
In the final line, we collect all possible metric error terms across different orders for the sake of simplicity. Using \eqref{eq:total-en-imp-2+1} for the final line, and assuming that the bound \eqref{eq:quiesc-en-imp-2+1} has been obtained for order $L-2$ or lower, one again obtains
\[\E^{(L)}_\textrm{\normalfont quiesc}(t)\lesssim \epsilon^4a(t)^{-c\epsilon^\frac18}+\int_t^{t_0}\left(\epsilon^\frac18a(s)^{-2}+a(s)^{-1-c\eta}\right)\E^{(L)}_\textrm{\normalfont quiesc}(s)\,ds\]
and thus \eqref{eq:quiesc-en-imp-2+1} at order $L$ via the Gronwall lemma. This proves the bound for $L\in 2\N,\,L\leq 10$ by iteration. Furthermore, \eqref{eq:lapse-en-imp} follows from inserting \eqref{eq:total-en-imp-2+1} and \eqref{eq:quiesc-en-imp-2+1} into \eqref{eq:en-lapse2}, using the former to bound Vlasov energies and metric errors and the latter for any remaining terms. Similarly, applying \eqref{eq:total-en-imp-2+1} and \eqref{eq:quiesc-en-imp-2+1} to \eqref{eq:en-est-Ric-top-2+1} for $L=10$ implies
\[a(t)^2\E^{(9)}(R,t)\lesssim \epsilon^4+\epsilon^\frac{15}4a(t)^{-c\epsilon^\frac18}+\int_t^{t_0}\left(\epsilon^\frac18a(s)^{-2}+a(s)^{-1-c\eta}\right)\,a(s)^2\E^{(L-1)}(R,s)\,ds\]
and thus \eqref{eq:Ric-en-top-imp} by the Gronwall lemma.\\

Finally, we turn to \eqref{eq:vlasov-indiv} to prove \eqref{eq:vlasov-en-imp-2+1}. Note that we can use \eqref{eq:total-en-imp-2+1} to bound the second line of \eqref{eq:vlasov-indiv} by
\[\leq \int_t^{t_0}a(s)^{-1-c\eta-{(K+2)}\omega}\,\E^{(J)}_\textrm{\normalfont total}(s)\,ds\lesssim \int_t^{t_0}\epsilon^4a(s)^{-1-(K+2)\omega-c\eta}\,ds\,.\]
Additionally, one can use \eqref{eq:quiesc-en-imp-2+1}, along with \eqref{eq:Ric-en-top-imp} for the top order curvature term, to bound the penultimate line of \eqref{eq:vlasov-indiv}, as well as \eqref{eq:total-en-imp-2+1} to bound the metric errors in the final line. This yields
\begin{align*}
\E^{(J)}_{1,K}(f,t)\lesssim&\,\epsilon^4+\int_t^{t_0}\left(a(s)^{-1-c\eta-\change{(K+2)\omega}}+\epsilon^\frac18\,a(s)^{-2}\right)\E^{(J)}_{1,K}(f,s)\,ds\\
&\,+\int_t^{t_0}\epsilon^\frac78\,a(s)^{-2-c\sqrt{\epsilon}}\left(\E^{(J)}_{1,\leq K-1}(f,s)+\E^{(\leq J-1)}_{1,\leq K}(f,s)\right)\,ds\\
&\,+\int_t^{t_0}\epsilon^{\frac{29}8}\,a(s)^{-2-c\epsilon^\frac18}+\epsilon^4a(s)^{-1-(K+1)\omega-c\eta}\,ds\,.
\end{align*}
For $\omega<\frac{1}{24}$ \change{and $\eta\ll\omega$}, the last line is simply bounded by $\epsilon^4a(t)^{-c\epsilon^\frac18}$ up to a constant. Again, \eqref{eq:vlasov-en-imp-2+1} for $J=K=0$ now follows directly from the Gronwall lemma, since the second line does not appear, and the full statement follows by iterating over $K$ for each $J>0$ and then over $J$.}
\end{proof}

\section{Past Stability}\label{sec:main}

\subsection{The $(2+1)$-dimensional ESFV system}

Corollary \ref{cor:en-imp} now puts us in the position to close the bootstrap argument and prove the main result of this paper:

\begin{theorem}[Past stability of FLRW solutions to the $(2+1)$-dimensional ESFV system]\label{thm:main-2+1}
Let $(M,\mathring{g},\mathring{k},\mathring{\pi},\mathring{\psi},\mathring{f})$ be CMC \change{initial data for }the Einstein scalar-field Vlasov system as in Section \ref{subsec:init-2+1} that is close to the FLRW solution from Lemma \ref{lem:FLRW-2+1} in the sense of Assumption \ref{ass:init-2+1}. Further, assume $\mathring{f}$ to have compact momentum support. If $m=0$, additionally assume the momentum support of $\mathring{f}$ to be bounded away from the origin.

Then, the past maximal globally hyperbolic development $(\M,\g,\phi,\f)$ admits a CMC foliation $(M_t)_{t\in(0,t_0]}$ along which the following estimates hold for some $c^\prime>0$.
\begin{equation}\label{eq:bs-imp}
\mathcal{H}+\mathcal{C}\lesssim\epsilon^\frac74\,\changereport{a}^{-c^\prime\epsilon^\frac18},\quad \|f-f_{FLRW}\|_{C^5_{1,\underline{\gamma}_0}(\changereport{T^\ast M})}\lesssim \sqrt{\epsilon}\,a^{-c^\prime\sqrt{\epsilon}}\,.
\end{equation}
Consequently, the solution is asymptotically velocity term dominated in the following sense: There exist a scalar function $\Psi_\textrm{\normalfont Bang}$, a tracefree $(1,1)$-tensor field $\hat{K}_\textrm{\normalfont Bang}$ and a \change{$(0,2)$-tensor }$H_\textrm{\normalfont Bang}$ \change{on the Big Bang hypersurface $M_0$, which we identify with $M$ from here on out, }satisfying
\begin{equation}
\|\Psi_\textrm{\normalfont Bang}\|_{C^{7}_\gamma(M)}+\change{\|\hat{K}_\textrm{\normalfont Bang}\|_{C^{7}_\gamma(M)}}+\|H_\textrm{\normalfont Bang}-\gamma\|_{C^{7}_\gamma(M)}\lesssim\epsilon
\end{equation}
such that the following estimates hold.
\begin{align*}\numberthis\label{eq:asymp-quiesc}
\|n-1\|_{C^6_\gamma(M_t)}+\|a^2\,\del_t\phi-(\Psi_\textrm{\normalfont Bang}+C)\|_{C^6_\gamma(M_t)}+
\|a^2\,\hat{k}^\sharp-\hat{K}_\textrm{\normalfont Bang}\|_{C^6_\gamma(M_t)}&\,\\
+\left\|a^{-2}\,g\odot\exp\left[-2\int_t^{t_0}a(s)^{-2}\,ds\cdot\hat{K}_\textrm{\normalfont Bang}\right]-H_\textrm{\normalfont Bang}\right\|_{C^6_\gamma(M_t)}&\,\lesssim\epsilon\,a(t)^{1-c^\prime\epsilon^\frac18}\,,\\
\numberthis\label{eq:asymp-phi-2+1}\left\|\phi-\phi(t_0,\cdot)+\int_t^{t_0}a(s)^{-2}\,ds\cdot(\Psi_\textrm{\normalfont Bang}+C)\right\|_{C^6_\gamma(M_t)}&\,\lesssim\epsilon\,a(t)^{1-c^\prime\epsilon^\frac18}\,.
\end{align*}
Furthermore, one has
\begin{align}\label{eq:asymp-Ham}
-8\pi\,\Psi_\textrm{\normalfont Bang}\,(\Psi_\textrm{\normalfont Bang}+2\,C)=\hat{K}_\textrm{\normalfont Bang}\cdot\hat{K}_\textrm{\normalfont Bang}\,
\end{align}
where, for $(1,1)$ tensor fields $\mathfrak{S},\mathfrak{T}$ on $M$, we write $\mathfrak{S}\cdot\mathfrak{T}=\mathfrak{S}^i_{\ j}\mathfrak{T}^j_{\ i}$. Additionally, there exists \change{$f_\textrm{\normalfont Bang}\in C^{4}_{1,\underline{\gamma}_0}(T^\ast M)$ }satisfying 
\change{\begin{equation}\label{eq:Vlasov-control}
\|f_\textrm{\normalfont Bang}-\mathring{f}\|_{C^4_{1,\underline{\gamma}_0}(T^\ast M)}\lesssim \sqrt{\epsilon}
\end{equation}}
such that, 
one has
\change{\begin{equation}\label{eq:asymp-Vlasov}
\left\|f-f_\textrm{\normalfont Bang}\right\|_{C^4_{1,\underline{\gamma}_0}(T^\ast M_t)}\lesssim\sqrt{\epsilon}\,a(t)^{1-c^\prime\epsilon^\frac18}\,.
\end{equation}
If, on top of Assumption \ref{ass:init-2+1}, one assumes 
\begin{equation}
\mathrm{dist}_{\underline{\gamma}}(\supp\mathring{f},\supp f_{FLRW})\leq \epsilon^2\,,
\end{equation}
where $\mathrm{dist}_{\underline{\gamma}}$ denotes the metric on $T^\ast M$ induced by the Riemannian metric $\underline{\gamma}$, then one also has
\begin{equation}
\mathrm{dist}_{\underline{\gamma}}(\supp f_{\mathrm{Bang}},\supp f_{FLRW})\lesssim \epsilon^\frac74\,.
\end{equation}}
Finally, $(\M,\g)$ is geodesically past incomplete, forming a stable crushing singularity as $t\downarrow 0$ at which the Kretschmann scalar $\mathcal{K}=\Riem[\g]_{\alpha\beta\gamma\delta}\,\Riem[\g]^{\alpha\beta\gamma\delta}$ exhibits stable blow-up:
\begin{equation}\label{eq:blowup-2+1}
\|a^4\,\lvert k\rvert_g^2-K_\textrm{\normalfont Bang}^2\|_{C^0_\gamma(M_t)}+\left\|a^8\,\mathcal{K}-6(8\pi)^2\,(\Psi_\textrm{\normalfont Bang}+C)^4\right\|_{C^0_{\gamma}(M_t)}\lesssim \epsilon\,a(t)^{1-c\epsilon^\frac18}\,.
\end{equation}
\end{theorem}
As explained after Lemma \ref{lem:scale-factor-2+1}, if $M\cong \S^2$, the FLRW solutions admit a time reflection symmetry so that a fully analogous stability statment holds toward the future and, consequently, that these solutions are globally stable.

\begin{proof}
On the bootstrap interval $(t_{Boot},t_0]$, the improved estimate \eqref{eq:bs-imp} follows directly for all terms in $\mathcal{H}$ except for $\|G-\gamma\|_{H^{10}_G(M_t)}$ by inserting the improved energy estimates obtained in Corollary \ref{cor:en-imp} into the near-coercivity estimates in Lemma \ref{lem:near-coerc}. Regarding the metric term, the preceding argument gives
\[\|N\|_{H^{10}_G(M_t)}^2+\|\hat{k}\|_{H^{10}_G(M_t)}^2\lesssim\epsilon^\frac{15}4\,a^{-c\,\epsilon^\frac18}\,.\]
Thus the improved bound for $\|G-\gamma\|_{H^{10}_G(M_t)}$ follows from \eqref{eq:G-norm-sup-est}. In particular, Lemma \ref{lem:near-coerc} also implies
\[\|\rho-\rho^{Vl}\|_{H^{10}_G(M_t)}+\|S^{Vl,\parallel}\|_{H^{10}_G(M_t)}+\|\j^{Vl}\|_{H^{10}_G(M_t)}\lesssim \epsilon^\frac{7}4\,a(t)^{-c\,\epsilon^\frac18}\,.\]
Using the improved bound on $\|G-\gamma\|_{H^{10}_G(M_t)}$, one straightforwardly computes that all bounds extend to the respective Sobolev spaces with respect to the reference metric $\gamma$ after updating constants. The bounds for $\mathcal{C}$ in \eqref{eq:bs-imp} then follow by Sobolev embedding and similarly changing back from $\gamma$ to $G$, proving the first estimate in \eqref{eq:bs-imp}.\\

Recalling the low order Vlasov bound in \eqref{eq:bs-imp} from \eqref{eq:APVlasov-2+1}, this improves the bootstrap assumptions; see Assumption \ref{ass:bs}. Along with the momentum support bounds from Lemma \ref{lem:APMom-2+1}, one also checks that all continuation criteria from Section \ref{subsec:lwp-2+1} hold as $t\downarrow t_{Boot}$. Hence, the solution can be extended beyond $t=t_{Boot}$ and consequently to $(M_t)_{t\in(0,t_0]}$ now that the bootstrap argument has been completed and \eqref{eq:bs-imp} holds throughout.\\

The remainder of the theorem largely follows as in \cite[Theorem 15.1]{Rodnianski2014} and \cite[Theorem 8.1 \change{and Corollary 8.3}]{FU25}. \change{While more details can be found in these works, we demonstrate the spirit of these arguments by proving \eqref{eq:asymp-Vlasov} at order $0$ with \eqref{eq:asymp-quiesc} already proven; the other asymptotic bounds are obtained similarly and \eqref{eq:blowup-2+1} follows from these by straightforward computations. \\
The rescaled Vlasov equation \eqref{eq:REEqVlasov} can be rewritten as follows, explicitly computing $\A_jf_{FLRW}$ in the third \changereport{term}:
\begin{align*}
\numberthis\label{eq:REEqVlasovVTD}\del_tf=&\,-N\,a^{-1}\,\frac{v^j}{v^0}\,\A_jf-a^{-1}\,\frac{v^j}{v^0}\left[\A_j(f-f_{FLRW})+2\mathcal{F}^{\prime}\,(\lvert v\rvert_\gamma)^2\,\left(\Gamma[G]_{ij}^m-\Gamma[\gamma]^m_{ij}\right)(\gamma^{-1})^{jl}\,v_m\,v_l\right]\\
&\,+a^{-1}\,v^0\,\nabla^{\sharp j}N\,\B_jf\,.
\end{align*}
Now applying \eqref{eq:bs-imp} and \eqref{eq:APMom-2+1} throughout, as well as \eqref{eq:asymp-quiesc} for the third line, we obtain
\[\left\lvert\frac{d}{dt}\left(\langle v\rangle_\gamma f\right)\right\rvert\lesssim\,a^{-1-c\epsilon^\frac18}\,.\]
and consequently, for any $s,t\in(0,t_0]$,
\[\sup_{x,v\in TM}\langle v\rangle_\gamma\lvert f(t,x,v)-f(s,x,v)\rvert\lesssim \sqrt{\epsilon}\left(a(t)^{1-c\epsilon^\frac18}-a(s)^{1-c\epsilon^\frac18}\right)\,.\]}

Hence, \change{$f(t,\cdot,\cdot)$ }uniformly converges to a limit $f_\textrm{\normalfont Bang}$ in \change{$C^0_{1,\underline{\gamma}_0}(T^\ast M)$ }as $t\downarrow 0$ and \eqref{eq:asymp-Vlasov} holds at order $0$. Note that we lose one order of control on the Vlasov distribution compared to \eqref{eq:bs-imp} since, when applying \eqref{eq:REEqVlasovVTD} and higher order analogues, we need to estimate $f-f_{FLRW}$ at order $\ell+1$ to control $f_\textrm{\normalfont asymp}$ at order $\ell$. Further, note that evaluating \eqref{eq:asymp-Vlasov} at $t=t_0$ yields \eqref{eq:Vlasov-control}.
\end{proof}

We formulate the asymptotic behaviour of the spatial metric $g$ in \eqref{eq:asymp-quiesc} to be consistent with the FLRW Big Bang stability works \cite{Rodnianski2018, Speck2018, FU25}. This expression is in line with the formal solution 
\[(g_{VTD})_{ij}=a(t)^2\,\left\{\exp\left(2\int_t^{t_0}a(s)^{-2}\,ds\cdot \hat{K}_{\normalfont VTD}\right)\odot H_{\normalfont VTD}\right\}_{ij}\]
that one obtains by solving the velocity term dominated equations; see also \cite[Remark 19.2]{Speck2018}. Alternatively, one can renormalise the spatial metric as in \cite[Definition 6]{GPR23} such that the resulting metric converges toward the Big Bang. 
\begin{corollary}\label{cor:renorm}
\changereport{Let $(\M,\g,\phi,\f)$ be as in Theorem \ref{thm:main-2+1} and consider, for $t\in(0,t_0]$ and $x\in (M_t)$,
\begin{equation*}
\change{\mathcal{W}_{t,x}: T_\cdot M_t\longrightarrow T_x M_t,\qquad \mathcal{W}_{t,x}(w)}=\exp\left[-\int_t^{t_0}a(s)^{-2}\,ds\cdot (\hat{k}^\sharp)_{t,x}\right]\odot w\,
\end{equation*}
as well as the family of renormalised metrics $\mathcal{G}_t$ given by
\change{\[\mathcal{G}_{t,x}:T_x M_t\times T_x M_t\rightarrow\R,\quad \mathcal{G}_{t,x}(X,Y)=a(t)^{-2}\,\left(g_{t,x}(\mathcal{W}_{t,x}(X),\mathcal{W}_{t,x}(Y))\right).\]}}
Then, there exists a symmetric $(0,2)$-tensor field $\mathcal{G}_{\normalfont Bang}$ on $M$ such that the following bounds hold.
\begin{align}
\label{eq:control-renorm}\|\mathcal{G}_{\normalfont Bang}-\gamma\|_{C^7_\gamma(M)}\lesssim&\,\epsilon\,,\\
\label{eq:asymp-renorm}\|\mathcal{G}-\mathcal{G}_{\normalfont Bang}\|_{C^6_\gamma(M_t)}\lesssim&\, \epsilon\,a(t)^{1-c^\prime\,\epsilon^\frac18}\,.
\end{align}
Moreover, $\hat{K}_{\normalfont Bang}$ is self-adjoint with respect to $\mathcal{G}_{\normalfont Bang}$.
\end{corollary}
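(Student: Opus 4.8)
The plan is to recognise $\mathcal{G}_t$ as a small perturbation of the quantity $a^{-2}g\odot\exp[-2\int_t^{t_0}a(s)^{-2}\,ds\cdot\hat{K}_{\normalfont Bang}]$ appearing in \eqref{eq:asymp-quiesc}, which by Theorem \ref{thm:main} already converges to $H_{\normalfont Bang}$ as $t\downarrow 0$. Write $\beta(t)=\int_t^{t_0}a(s)^{-2}\,ds$. Since $\hat{k}^\sharp_t$ is self-adjoint with respect to $g_t$ (equivalently with respect to $G_t$), so is $\mathcal{W}_t=\exp[-\beta(t)\,\hat{k}^\sharp_t]$, and therefore
\[\mathcal{G}_t(X,Y)=a(t)^{-2}\,g_t(X,\exp[-2\beta(t)\,\hat{k}^\sharp_t]\,Y)=(a(t)^{-2}g_t\odot\exp[-2\beta(t)\,\hat{k}^\sharp_t])(X,Y)\]
in the notation of \eqref{eq:asymp-quiesc}. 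Setting $\mathfrak{g}_t:=a(t)^{-2}g_t\odot\exp[-2\beta(t)\,\hat{K}_{\normalfont Bang}]$, the bound \eqref{eq:asymp-quiesc} reads $\|\mathfrak{g}_t-H_{\normalfont Bang}\|_{C^6_\gamma(M_t)}\lesssim\epsilon\,a(t)^{1-c^\prime\epsilon^\frac18}$. Hence it suffices to establish $\|\mathcal{G}_t-\mathfrak{g}_t\|_{C^6_\gamma(M_t)}\lesssim\epsilon\,a(t)^{1-c^{\prime\prime}\epsilon^\frac18}$; one then sets $\mathcal{G}_{\normalfont Bang}:=H_{\normalfont Bang}$, so that \eqref{eq:control-renorm} is the estimate $\|H_{\normalfont Bang}-\gamma\|_{C^7_\gamma(M)}\lesssim\epsilon$ of Theorem \ref{thm:main} and \eqref{eq:asymp-renorm} follows from the triangle inequality.

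To bound $\mathcal{G}_t-\mathfrak{g}_t$, apply the identity $e^{A}-e^{A^\prime}=\int_0^1 e^{sA}(A-A^\prime)\,e^{(1-s)A^\prime}\,ds$ with $A=-2\beta(t)\,\hat{k}^\sharp_t$ and $A^\prime=-2\beta(t)\,\hat{K}_{\normalfont Bang}$. The inputs from Theorem \ref{thm:main} are $\|\hat{k}^\sharp_t-\hat{K}_{\normalfont Bang}\|_{C^6_\gamma(M_t)}\lesssim\epsilon\,a(t)^{1-c^\prime\epsilon^\frac18}$ and $\|\hat{k}^\sharp_t\|_{C^6_\gamma(M_t)}+\|\hat{K}_{\normalfont Bang}\|_{C^6_\gamma(M_t)}\lesssim\epsilon$, together with $\beta(t)\lesssim 1+\lvert\log a(t)\rvert$, which follows from $a(s)\simeq\sqrt{s}$ in Lemma \ref{lem:scale-factor}. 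Since $\beta(t)\lesssim 1+\lvert\log a(t)\rvert$ while $\|\hat{k}^\sharp_t\|$ and $\|\hat{K}_{\normalfont Bang}\|$ are $\lesssim\epsilon$, the operator norms of $e^{sA}$ and $e^{(1-s)A^\prime}$ are $\lesssim a(t)^{-c\epsilon}$, and the scalar prefactor $\beta(t)$ is dominated by any positive power of $a(t)$; after controlling the Christoffel and product terms that appear when passing to the $C^6_\gamma$-norm by the bounds on $\mathcal{C}$ from Theorem \ref{thm:main}, this yields $\|\exp[-2\beta(t)\hat{k}^\sharp_t]-\exp[-2\beta(t)\hat{K}_{\normalfont Bang}]\|_{C^6_\gamma(M_t)}\lesssim\epsilon\,a(t)^{1-c^{\prime\prime}\epsilon^\frac18}$. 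Multiplying by $a(t)^{-2}g_t$, whose $C^6_\gamma$-norm is $\lesssim 1$, gives the claimed bound on $\|\mathcal{G}_t-\mathfrak{g}_t\|_{C^6_\gamma(M_t)}$ and hence \eqref{eq:asymp-renorm}.

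It remains to prove that $\hat{K}_{\normalfont Bang}$ is self-adjoint with respect to $\mathcal{G}_{\normalfont Bang}$. Since $\mathcal{W}_t$ is a power series in $\hat{k}^\sharp_t$ it commutes with $\hat{k}^\sharp_t$, and combining this with the self-adjointness of $\hat{k}^\sharp_t$ with respect to $G_t$ gives $\mathcal{G}_t(\hat{k}^\sharp_t X,Y)=G_t(\hat{k}^\sharp_t\mathcal{W}_t X,\mathcal{W}_t Y)=G_t(\mathcal{W}_t X,\hat{k}^\sharp_t\mathcal{W}_t Y)=\mathcal{G}_t(X,\hat{k}^\sharp_t Y)$, so $\hat{k}^\sharp_t$ is self-adjoint with respect to $\mathcal{G}_t$ for every $t\in(0,t_0]$. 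Letting $t\downarrow 0$, where $\hat{k}^\sharp_t\to\hat{K}_{\normalfont Bang}$ and $\mathcal{G}_t\to\mathcal{G}_{\normalfont Bang}$ uniformly on $M$ by the above, the pointwise algebraic identity $\mathcal{G}_{\normalfont Bang}(\hat{K}_{\normalfont Bang}X,Y)=\mathcal{G}_{\normalfont Bang}(X,\hat{K}_{\normalfont Bang}Y)$ follows, which completes the proof. The main obstacle is the matrix-exponential comparison in the second paragraph: one must check that the logarithmic divergence of $\beta(t)$ and the resulting factors $a(t)^{-c\epsilon}$ in $\|e^{\pm\beta(t)\hat{k}^\sharp_t}\|$ do not destroy the decay, which works precisely because the renormalising endomorphism is built from the shear $\hat{k}^\sharp$, whose size is $\lesssim\epsilon$, so that these factors can be absorbed into $a(t)^{-c^{\prime\prime}\epsilon^\frac18}$ once $\epsilon$ is sufficiently small.
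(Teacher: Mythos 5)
Your proof is correct, but it takes a genuinely different route from the paper's.

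The paper's proof is \emph{dynamical}: starting from \eqref{eq:REEqG}, it differentiates $\mathcal{G}_{ij}=G_{rs}\,\mathcal{W}^r_{\ i}\mathcal{W}^s_{\ j}$ in time, exploits the self-adjointness of $\hat k^\sharp$ (and hence $\mathcal{W}$) with respect to $G$ to cancel the $a^{-2}$-order terms, and arrives at $\lvert\del_t\mathcal{G}\rvert_\gamma\lesssim\epsilon\,a^{-1-c\epsilon^{1/8}}(\lvert\mathcal{G}\rvert_\gamma+1)$. Integrating this (as in the proof of Theorem~\ref{thm:main}) produces the limit $\mathcal{G}_{\normalfont Bang}$ as a Cauchy limit; the only inputs from Theorem~\ref{thm:main} are the pointwise size bound $\lvert\mathcal{W}\rvert_\gamma\lesssim a^{-c\epsilon}$ and the rate $\lvert\del_t\hat k^\sharp\rvert_\gamma\lesssim\epsilon\,a^{-1-c\epsilon^{1/8}}$ extracted from \eqref{eq:REEqk}. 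Your proof is instead \emph{algebraic}: you observe that self-adjointness of $\mathcal{W}_t$ gives $\mathcal{G}_t=a^{-2}g_t\odot\exp[-2\beta(t)\hat k^\sharp_t]$, identify this as a perturbation of the explicit quantity $\mathfrak{g}_t=a^{-2}g_t\odot\exp[-2\beta(t)\hat K_{\normalfont Bang}]$ appearing in \eqref{eq:asymp-quiesc}, compare the two matrix exponentials via the Duhamel identity, and declare $\mathcal{G}_{\normalfont Bang}:=H_{\normalfont Bang}$. This buys you something the paper's proof does not state explicitly: it pins down $\mathcal{G}_{\normalfont Bang}$ as exactly the $H_{\normalfont Bang}$ of Theorem~\ref{thm:main}, directly tying the two formulations together. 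The price is that your argument leans on the metric-asymptotics bound $\|\mathfrak{g}_t-H_{\normalfont Bang}\|_{C^6_\gamma}\lesssim\epsilon\,a^{1-c'\epsilon^{1/8}}$ of \eqref{eq:asymp-quiesc}, whereas the paper's route rederives the convergence from the evolution equation and is independent of that particular consequence. Both conclude the self-adjointness of $\hat K_{\normalfont Bang}$ with respect to $\mathcal{G}_{\normalfont Bang}$ by the same limiting argument.

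Two minor remarks. First, when you cite the input $\|\hat k^\sharp_t-\hat K_{\normalfont Bang}\|_{C^6_\gamma(M_t)}\lesssim\epsilon\,a(t)^{1-c'\epsilon^{1/8}}$ from \eqref{eq:asymp-quiesc}, you have silently dropped the $a^2$ prefactor that appears there; this is the correct reading (the $G$-raised shear $\hat k^\sharp$ is the quantity that converges, as one checks on Kasner data or from the integrability of $\del_t\hat k^\sharp$), and it is the same interpretation the paper's proof uses implicitly, but you should flag it. Second, the phrase ``after controlling the Christoffel and product terms that appear when passing to the $C^6_\gamma$-norm'' papers over a nontrivial Leibniz-type computation: differentiating $e^{sA}$ produces nested commutator contributions in $\nabla^j A=-2\beta\,\nabla^j\hat k^\sharp_t$, each carrying an extra factor of $\beta\lesssim\lvert\log a\rvert$, so six derivatives cost at most $\lvert\log a\rvert^6$, which is indeed absorbed into the $a^{-c''\epsilon^{1/8}}$ loss — but the bookkeeping is worth a line or two in a written version.
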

\change{\begin{proof}
The asymptotics in \eqref{eq:asymp-renorm} are proven identically to the $(3+1)$-dimensional analogue in \cite[Corollary 8.4]{FU25}, using the evolution equations \eqref{eq:REEqG-2+1} and \eqref{eq:REEqk} along with Theorem \ref{thm:main-2+1} to obtain
\[\lvert \del_t\mathcal{G}\rvert_\gamma\lesssim\epsilon\,a^{-1-c\,\epsilon^\frac18}(\lvert\mathcal{G}\rvert_\gamma+1)\]
and then extracting a Big Bang limit of $\mathcal{G}$ as for the Vlasov distribution above. The bound \eqref{eq:control-renorm} follows directly from \eqref{eq:asymp-renorm} and the initial data assumption, and $\hat{K}_{\normalfont Bang}$ is self-adjoint with respect to $\mathcal{G}_{\normalfont Bang}$ since $\hat{k}^\sharp$ is self-adjoint with respect to $G$ and, consequently, $\mathcal{G}$ on any $M_t$ for $t\in(0,t_0]$.
\end{proof}}

\change{\begin{remark}[Asymptotics of the constraint equations] Recall that \eqref{eq:asymp-Ham} is the limit of the renormalised Hamiltonian constraint. Unfortunately, one cannot obtain a similar limit for the momentum constraint since the Levi-Civita connection associated to the spatial metric degenerates along with the metric itself, preventing a limit on the divergence of the shear, along with causing degeneracies in $\j^{Vl}$. While it might be possible to recast both of these terms with the help of $\mathcal{G}$, this would require tracking eigendirections of the shear along the evolution. Even works that use such frames explicitly, including \cite{GPR23}, lose control of this frame towards the past, preventing an asymptotic momentum constraint. In light of \cite{FG24}, where one can ensure that a Gaussian approximate VTD solution satisfies the momentum constraint given Einstein nonlinear scalar-field data on the Big Bang hypersurface unlike for the VTD solution in CMCTC gauge, this might be an issue with the foliation itself. We leave this question open to future work. 
\end{remark}}

\begin{remark}[Concentration of Vlasov particle velocities]\label{rem:Vlasov-contra-2+1}
\change{Note that \eqref{eq:asymp-Vlasov} implies that the Vlasov distribution is asymptotically velocity term dominated along with the other solution variables if viewed on the co-mass shell. However, as in \cite[Remark 8.5]{FU25}, the picture changes when one tries to analyse the velocity of Vlasov particles rather than their (conjugate) momenta and views the distribution function on the mass shell. In short, \enquote{lifting} to the mass shell means one must renormalize the spatial metric as indicated by Corollary \ref{cor:renorm}. The resulting velocity characteristics are, consequently, dominated by their VTD component
\[Q_{t,x}(q)=\exp\left[-2\int_t^{t_0}a(s)^{-2}\,ds\cdot\left(\hat{K}_{\normalfont Bang}\right)_x\right]\odot a(t)^{-2}q\,.\]
Thus and in line with Corollary \ref{cor:renorm}, it is useful to consider the FLRW renormalised particle velocities $w^i=a(t)^{-2}q^i$ and the a posteriori expansion-renormalised velocities
\[W_{t,x}(w)=\exp\left[2\int_t^{t_0}a(s)^{-2}\,ds\cdot\left(\hat{K}_{\normalfont Bang}\right)_x\right]\odot w.\]
Indeed, one can prove by similar means as in Theorem \ref{thm:main-2+1} that, for
\[f^\sharp_\textrm{\normalfont asymp}(t,x,w):=f^\sharp(t,x,W^{-1}_{t,x}(w))\,,\]
the support of $f^\sharp_\textrm{\normalfont asymp}(t,\cdot,\cdot)$ also remains compact and that this renormalised function converges to an asymptotic footprint $f^\sharp_{\textrm{\normalfont Bang}}$. In particular, one has
\[\supp f^\sharp(t,x,\cdot)=\left\{W_{t,x}(w)\,\vert\,w\in\supp f^\sharp_\textrm{\normalfont asymp}(t,x,\cdot)\right\}\] 
by definition of $f^\sharp_{\textrm{\normalfont asymp}}$. When approaching the Big Bang, particle velocities thus concentrate in directions $w$ where $\lim_{t\to 0}W_{t,x}(q)$ remains bounded.

To identify these directions, first consider the massive case ($m=1$) where particles can be initially at rest, i.e., consider $x\in M$ such that $\mathring{f}(x,0)\neq 0$.  In the case where $(\hat{K}_\textrm{\normalfont Bang})_x=0$, $f^\sharp_\textrm{\normalfont asymp}$ simply agrees with the (FLRW renormalised) initial distribution and $f^\sharp$ converges to an asymptotic profile $f_\textrm{\normalfont Bang}$ that is close to the FLRW distribution. Otherwise, the tracefree and self-adjoint operator $(\hat{K}_\textrm{\normalfont Bang})_x$ has precisely one positive and one negative eigendirection. Denoting these eigenspaces by $E_{x,\pm}$, it follows that $\lim_{t\to 0}W_{t,x}(q)=0$ is satisfied if one takes $w\in E_{x,-}$. Otherwise, $\lim_{t\to 0}W_{t,x}(q)=\infty$ holds and $w$ lies outside of $\supp f^\sharp(t,x,\cdot)$ for small enough $t>0$. Thus, \eqref{eq:asymp-Vlasov} implies the following pointwise limit for any $w\in T^\ast_xM$. 
\[\lim_{t\to 0}f^\sharp(t,x,q)=f^\sharp_{\textrm{\normalfont Bang}}(x,0)\,\chi_{E_{x,-}}(w(q))\,.\]
If $f^\sharp_\textrm{\normalfont Bang}(\cdot,0)\neq 0$, which holds for sufficiently small $\epsilon>0$ due to \eqref{eq:Vlasov-control}, the velocity support thus concentrates toward $E_{x,-}$ and any element of $E_{x,-}$ lies in the support.\\

For $m=0$, or more generally when the initial particle distribution vanishes in the neighbourhood of the zero section for some $x\in M$, $f^\sharp_\textrm{\normalfont Bang}$ may also vanish at the origin for some $x\in M$. In that case, for any $q\in T^\ast_xM$, the Vlasov distribution vanishes both if $W_{t,x}(q)$ becomes too large or too small. Denoting the orthogonal projectors onto $E_{x,\pm}$ by $\mathbb{P}_{\pm}$, one has
\[\sup_{q\in \supp f^\sharp(t,x,\cdot)}\lvert \mathbb{P}_+w(q)\rvert_\gamma \to 0\quad \text{as}\quad t\downarrow 0\]
as well as
\[\inf_{q\in \supp f^\sharp(t,x,\cdot)}\lvert \mathbb{P}_-w(q)\rvert_\gamma \to \infty\, \quad \text{as}\quad t\downarrow 0\]
while the momentum support still remains bounded for any $t>0$; see \eqref{eq:APMom-2+1}. In other words, particle velocities concentrate in the direction of $E_{x,-}$ toward the Big Bang, but become arbitrarily large.\\

Finally, one expects that $\hat{K}_{Bang}=0$ is a non-generic case, and thus that velocity concentration is generic. However, this is not immediately apparent from Theorem \ref{thm:main-2+1}, as we do not obtain a scattering result. However, nearby Kasner solutions must be contained in the solutions covered by Theorem \ref{thm:main-2+1} for which $\hat{K}_{Bang}$ only vanishes if it coincides with the FLRW solution. Since the construction above also ensures that the footprint states depend on initial data continuously, it follows that the set of solutions for which velocities concentrate is, at least, of positive measure. In light of the scattering result \cite{Li24} for the linearized Einstein scalar-field system, it further is reasonable to conjecture that $\hat{K}_{Bang}$ vanishing entirely is truly non-generic, but this still remains open even in the full Einstein scalar-field setting.}

\end{remark}

Additionally, we can now state the full version of Corollary \ref{cor:u1-intro} and show how Theorem \ref{thm:main-2+1} can be translated to the polarized $U(1)$-symmetric vacuum setting:

\begin{corollary}[Past stability of non-extremal polarized $U(1)$-symmetric $(3+1)$-dimensional vacuum solutions]\label{cor:u1}
Consider $3+1$ vacuum spacetimes of the form
\begin{equation}\label{eq:u1-ref-cor}
\left(\check{M}=(0,t_0]\times M\times\S^1,\ \check{g}_{\text{ref}}=e^{-2\sqrt{4\pi}\,\phi_{\textrm{\normalfont ref}}}\,(-dt^2+a(t)^2\,\gamma)+e^{2\sqrt{4\pi}\,\phi_{\textrm{\normalfont ref}}}\,\change{(dx^3)}^2\right),
\end{equation}
where \change{$\phi_{\textrm{\normalfont ref}}$ }is a spatially homogeneous function with $\del_t\phi_{\textrm{\normalfont ref}}=C\,a^{-2}$ and $C<0$ and $(M,\gamma)$ is a surface of constant curvature; see Remark \ref{rem:u1-ref} and Appendix \ref{subsec:ref-u1}. On a constant time slice $\{t=t_0\}$ of \eqref{eq:u1-ref-cor}, take polarized $U(1)$-symmetric initial data $(M\times\S^1,\mathring{\tilde{g}},\mathring{\tilde{k}})$ that is close to that of \eqref{eq:u1-ref-cor} in the sense of Assumption \ref{ass:init-2+1} and the initial data correspondence in Remark \ref{rem:u1-init}. Additionally, assume without loss of generality that the induced mean curvature $\mathring{k}$ on $M_{t_0}$ has constant trace; see \eqref{eq:u1-init-corresp}.\\

Then, such a solution admits a past maximal global hyperbolic development $(\check{M},\check{g})$ that is polarized $U(1)$-symmetric in the sense that the transported coordinate derivative \change{$\del_{x^3}$ }on $\S^1$ acts as a Killing vector field. Writing 
\begin{equation}\label{eq:u1-shape}
\check{g}=e^{-2\sqrt{4\pi}\,\phi}\,(-n^2dt^2+g)+e^{2\sqrt{4\pi}\,\phi}\,\change{(dx^3)}^2
\end{equation}
for $\phi: I\times M\rightarrow \R$, $(g,k,n,\phi)$ satisfy the asymptotic bounds of Theorem \ref{thm:main-2+1} with $f\equiv 0$ along the constant time foliation $(M_t\times \S^1)_{t\in(0,t_0]}$. Moreover, let $\tilde{k}$ be the second fundamental form with respect to the foliation $(M_t\times\S^1, \tilde{g}=\check{g}\vert_{M_t\times\S^1})_{t\in(0,t_0]}$ and let $\check{\mathcal{K}}$ denote the Kretschmann scalar with respect to $\check{g}$. Then, there exist positive continuous functions $Y_\textrm{\normalfont Bang}, Z_\textrm{\normalfont Bang}$ on $M$ such that the following bounds hold.
\begin{subequations}
\begin{align}
\|e^{-2\sqrt{4\pi}\,\phi}\,a^4\,\lvert \tilde{k}\rvert_{\tilde{g}}^2-Y_\textrm{\normalfont Bang}\|_{C^0(M_t)}\lesssim&\,\epsilon\,a(t)^{2-c\epsilon^\frac18}\,,\label{eq:u1-asymp-k}\\
\|e^{-4\sqrt{4\pi}\,\phi}\,a^8\,\check{\mathcal{K}}-Z_\textrm{\normalfont Bang}\|_{C^0(M_t)}\lesssim&\,\epsilon\,a(t)^{4-c\epsilon^\frac18}\,.\label{eq:u1-asymp-K}
\end{align}
\end{subequations}
In particular, the $(\check{M},\check{g})$ is $C^2$-inextendible toward the past.
\end{corollary}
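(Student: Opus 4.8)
The plan is to deduce Corollary \ref{cor:u1} from Theorem \ref{thm:main} by (i) invoking the correspondence \eqref{eq:u1-corresp}, (ii) feeding the polarized $U(1)$-symmetric vacuum data through the initial data correspondence \eqref{eq:u1-init-corresp} to obtain Einstein scalar-field data close to FLRW data in the sense of Assumption \ref{ass:init}, (iii) applying Theorem \ref{thm:main} with $\overline{f}\equiv0$ (equivalently $\mathcal{F}\equiv 0$) to the induced $(2+1)$-dimensional Einstein scalar-field solution $((0,t_0]\times M,\g,\phi)$, and then (iv) translating the resulting asymptotic bounds on $(g,k,n,\phi)$ into bounds for the second fundamental form $\tilde k$ of the induced foliation $(M_t\times\S^1,\tilde g)$ and the Kretschmann scalar $\check{\mathcal K}$ via the curvature formulas collected in Section \ref{subsec:curvature-u1}. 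Local well-posedness and propagation of $U(1)$-symmetry and polarization are already discussed in the remark after Section \ref{subsec:lwp}, and the curvature blow-up of $\check{\mathcal K}$ is what guarantees that this development is in fact past maximal.

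First I would record the reduction: by Moncrief's result \cite{Mon86} (cited around \eqref{eq:u1-corresp}), a metric of the form \eqref{eq:u1-shape} solves the $(3+1)$-dimensional vacuum equations iff $((0,t_0]\times M, -n^2dt^2+g,\phi)$ solves the $(2+1)$-dimensional Einstein scalar-field system in CMCTC gauge. The data correspondence \eqref{eq:u1-init-corresp} turns closeness of $(\mathring{\tilde g},\mathring{\tilde k})$ to the reference data \eqref{eq:u1-FLRW} into closeness of $(\mathring g,\mathring k,\mathring\pi,\mathring\psi)$ to FLRW data — this is a direct, if slightly tedious, algebraic check that the nonlinear maps in \eqref{eq:u1-init-corresp} are smooth near the reference configuration and hence preserve the $\epsilon^2$-smallness in $\H(t_0)+\C(t_0)$; here the hypothesis $C<0$ (equivalently $\del_t\phi_\textrm{ref}<0$) is what places us at a non-extremal, genuinely singular reference solution, see Remark \ref{rem:u1-ref}. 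Since $\mathring f\equiv 0$, the massless compact-support and lower-bound hypotheses of Theorem \ref{thm:main} are vacuous, so Theorem \ref{thm:main} applies and yields \eqref{eq:bs-imp}, \eqref{eq:asymp-quiesc}, \eqref{eq:asymp-Ham} with $\Psi_\textrm{Bang}$, $\hat K_\textrm{Bang}$, $H_\textrm{Bang}$ and the Kretschmann blow-up for $\g=-n^2dt^2+g$ to order $\epsilon a(t)^{1-c\epsilon^{1/8}}$.

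Next I would carry out the curvature computation. Using the formulas from Section \ref{subsec:curvature-u1} that express $\Riem[\check g]$ and the second fundamental form $\tilde k$ of $M_t\times\S^1$ in terms of $\nabla\phi$, $\del_t\phi$, $\hat k$, $n$ and $\g$-curvature quantities — exactly the same computation underlying \eqref{eq:Kretsch-ref} and \eqref{eq:u1-FLRW} in Remark \ref{rem:u1-ref}, but now with the perturbed, non-CMC induced foliation — one finds that the leading-order behaviour of $e^{-2\sqrt{4\pi}\phi}a^4|\tilde k|_{\tilde g}^2$ and $e^{-4\sqrt{4\pi}\phi}a^8\check{\mathcal K}$ is governed by $(\Psi_\textrm{Bang}+C)$ and $\hat K_\textrm{Bang}$. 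Concretely, after rescaling by the conformal factor $e^{\mp 2\sqrt{4\pi}\phi}$ to compensate for the difference between the $\check g$-foliation $(M_t\times\S^1)$ and the CMC foliation $(M_t)$ of $\g$, every term that does not saturate the $a^{-1}$ (resp.\ $a^{-2}$) rate carries an extra factor $a^{1-c\epsilon^{1/8}}$ by \eqref{eq:asymp-quiesc}; one then sets $Y_\textrm{Bang}$ and $Z_\textrm{Bang}$ equal to the limiting polynomial expressions in $\Psi_\textrm{Bang}+C$ and $\hat K_\textrm{Bang}$ (positivity of $Z_\textrm{Bang}$ follows since, by \eqref{eq:asymp-Ham} with $f\equiv0$, $\Psi_\textrm{Bang}+C$ cannot vanish where $C<0$, compare \eqref{eq:Kretsch-ref}; positivity of $Y_\textrm{Bang}$ follows similarly). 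The bounds \eqref{eq:u1-asymp-k}–\eqref{eq:u1-asymp-K} then drop out. Since $\check{\mathcal K}\to\infty$ pointwise as $t\downarrow0$, $(\check M,\check g)$ is $C^2$-inextendible toward the past by the standard criterion, and the blow-up also certifies that the obtained development is the full past maximal one (as flagged in the remark preceding this corollary).

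The main obstacle I expect is bookkeeping in step (iv): organizing the $(3+1)$-dimensional curvature formulas for the \emph{non-CMC} induced foliation — in particular keeping track of the time derivatives of the conformal factor $e^{\pm2\sqrt{4\pi}\phi}$, whose logarithmic derivative $\del_t\phi\sim C a^{-2}$ is itself singular — and verifying that the \enquote{cross terms} coupling $\del_{x_3}$-directions to $M$-directions, together with the genuinely $(3+1)$-dimensional Gauss–Codazzi contributions, either vanish by polarization/$U(1)$-symmetry or are subleading. A secondary, more routine obstacle is checking that the data correspondence \eqref{eq:u1-init-corresp} genuinely transports the $\H+\C$-smallness (one must also confirm the CMC normalization $\mathrm{tr}_{\mathring g}\mathring k = -2\dot a(t_0)/a(t_0)$ is compatible with the reference configuration, which is why the statement allows assuming constant mean curvature without loss of generality). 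Neither is conceptually deep, but both require care to state cleanly.
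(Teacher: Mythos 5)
Your proposal is correct and follows essentially the same route as the paper: reduce via the Moncrief correspondence \eqref{eq:u1-corresp} and the data map \eqref{eq:u1-init-corresp} to the $(2+1)$-dimensional Einstein scalar-field system, apply Theorem~\ref{thm:main} with $\overline f\equiv0$, translate the output into bounds on $\tilde k$ and $\check{\mathcal K}$ via the curvature formulas of Section~\ref{subsec:curvature-u1} together with the Friedman equation and \eqref{eq:asymp-Ham}, and conclude $C^2$-inextendibility from Kretschmann blow-up. The one place you remain vaguer than the paper is the explicit algebraic simplification of $Y_{\mathrm{Bang}}$ and $Z_{\mathrm{Bang}}$ into the forms $4\pi(\Psi_{\mathrm{Bang}}+3C)^2$ and $256\pi^2(\Psi_{\mathrm{Bang}}+C)^2(\Psi_{\mathrm{Bang}}+2C)(\Psi_{\mathrm{Bang}}+4C)$ (which is exactly the "bookkeeping" you flag), but your reasoning for their positivity — smallness of $\Psi_{\mathrm{Bang}}$ against fixed $C<0$ via \eqref{eq:asymp-Ham} — is in the right spirit.
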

\begin{remark}[Curvature blow-up]\label{rem:u1-curv-blow-up}
Applying the Friedman equation \eqref{eq:Friedman-2+1}, note that one has 
\[e^{\sqrt{4\pi}\,\phi_{\text{ref}}(t)}=\mathfrak{c}_\kappa(t)\,e^{\sqrt{4\pi}a(t_0)}\,a(t)^{-1}\,,\] where $\mathfrak{c}_\kappa\in C^\infty([0,t_0])$ is identical to $1$ for $\kappa=0$ and converges to a positive constant approaching $t=0$ otherwise. Combining this with \eqref{eq:asymp-phi-2+1}, this implies that $\lvert\tilde{k}\rvert_g^2$ and $\check{\mathcal{K}}$ blow up at order close to $a^{-6}$ and $a^{-12}$ respectively, up to a factor of $a^{\pm c\,\epsilon}$. Using the identities \eqref{eq:u1-mghd-translate} below, the components of $\check{g}$ can be bounded similarly. 
\end{remark}
\begin{proof}
Following the initial data correspondence in Remark \ref{rem:u1-ref}, Theorem \ref{thm:main-2+1} yields the past maximal $(2+1)$-dimensional Einstein scalar field solution $(\M,\g,\phi)$ that corresponds to $(3+1)$-dimensional vacuum solution
\[\left(\check{M}=\M\times \S^1, \check{g}=e^{-2\sqrt{4\pi}\,\phi}\,\g+e^{2\sqrt{4\pi}\,\phi}\,(dx^3)^2\right)\,\]
launched by $(M\times \S^1,\mathring{\tilde{g}},\mathring{\tilde{k}})$. Since $\g_{0i}=0$ for $i=1,2$, we can write
\[\check{g}=-n^2\,e^{-2\sqrt{4\pi}\,\phi}\,dt^2+\tilde{g},\]
where $\tilde{g}$ is a time-dependent Riemannian metric on $M\times\S^1$. Additionally writing the second fundamental form and shear with respect to this foliation as $\tilde{k}$ and $\hat{\tilde{k}}$, as well as $\psi=n^{-1}\,\del_t\phi$, the components of $\tilde{g}$ and $\tilde{k}$ are as follows.
\begin{subequations}\label{eq:u1-mghd-translate}
\begin{align}
\tilde{g}_{ij}=&\,e^{-2\sqrt{4\pi}\,\phi}\,g_{ij},\quad \tilde{g}_{33}=e^{2\sqrt{4\pi}\,\phi}\,,\\
\tilde{k}_{ij}=&\,e^{-\sqrt{4\pi}\,\phi} \left(k_{ij}+{\sqrt{4\pi}}\,\psi\, g_{ij}\right),\quad \tilde{k}_{33}=-\sqrt{4\pi}\,e^{3\sqrt{4\pi}\,\phi}\,\psi\,,\\
\tilde{g}_{13}=&\,\tilde{g}_{23}=\tilde{k}_{13}=\tilde{k}_{23}=0\,.
\end{align}
\end{subequations}

From this, we directly compute, using the Friedman equation \eqref{eq:Friedman-2+1}, that
\begin{align*}
\lvert \tilde{k}\rvert_{\tilde{g}}^2=&\,e^{2\sqrt{4\pi}\,\phi}\left[\left\lvert\hat{k}+\left(\frac{\tau}2+\sqrt{4\pi}\,\psi\right)g\right\rvert_g^2+4\pi\,\psi^2\right]\\
=&\,e^{2\sqrt{4\pi}\,\phi}\,a^{-4}\left[\lvert \hat{k}\rvert_G^2+2\left(\sqrt{4\pi}\,C-\sqrt{4\pi\,C^2-\kappa\,a^2}+\sqrt{4\pi}\,\Psi\right)^2+4\pi\,(C+\Psi)^2\right]\,.
\end{align*}
Now, \eqref{eq:u1-asymp-k} follows from \eqref{eq:asymp-quiesc} and \eqref{eq:asymp-Ham}, defining
\begin{align*}
Y_\textrm{\normalfont Bang}:=&\,\hat{K}_\textrm{\normalfont Bang}\cdot\hat{K}_\textrm{\normalfont Bang}+8\pi\,(\Psi_\textrm{\normalfont Bang}+2\,C)^2+4\pi\,(\Psi_\textrm{\normalfont Bang}+C)^2\\
=&\,16\pi\,C\,(\Psi_\textrm{\normalfont Bang}+2\,C)+4\pi\,(\Psi_\textrm{\normalfont Bang}+C)^2\\
=&\,4\pi\,(\Psi_\textrm{\normalfont Bang}+3\,C)^2\,.
\end{align*}

After some lengthy computations, for which we defer to Section \ref{subsec:curvature-u1}, and again applying the Friedman equation \eqref{eq:Friedman-2+1} \change{in the second line below}, one also sees that the Kretschmann scalar $\check{\mathcal{K}}$ takes the following form.
\begin{align*}
\numberthis\label{eq:K-formula}e^{-4\sqrt{4\pi}\,\phi}\,\check{\mathcal{K}}=&\,32\cdot 4\pi\,\psi^2\left(\frac{\tau}2+\sqrt{4\pi}\,\psi\right)^2+8\cdot 4\pi\,\psi^2\,\lvert \hat{k}\rvert_g^2+\langle \textrm{Err}\rangle\\
=&\,32\pi\,a^{-8}\,(C+\Psi)^2\left[4\left(\sqrt{4\pi}\,C-\sqrt{4\pi\,C^2-\kappa \,a^2}+\sqrt{4\pi}\,\Psi\right)^2+\lvert\hat{k}\rvert_G^2\right]+\langle \textrm{Err}\rangle\,.
\end{align*}
The error term $\langle \textrm{Err}\rangle$ can be bounded as below using Theorem \ref{thm:main-2+1}.
\[\|\langle \textrm{Err}\rangle\|_{C^0_\gamma(M_t)}\lesssim \epsilon a(t)^{-4-c\epsilon^\frac18}\,.\] 
The bound \eqref{eq:u1-asymp-K} again follows from \eqref{eq:asymp-quiesc} and \eqref{eq:asymp-Ham} for
\begin{align*}
Z_\textrm{\normalfont Bang}:=&\,32\pi\left(\Psi_\textrm{\normalfont Bang}+C\right)^2\left(16\pi(\Psi_\textrm{\normalfont Bang}+2\,C)^2+\hat{K}_\textrm{\normalfont Bang}\cdot\hat{K}_\textrm{\normalfont Bang}\right)\\
=&\,256\pi^2\left(\Psi_\textrm{\normalfont Bang}+C\right)^2(\Psi_\textrm{\normalfont Bang}+2\,C)(\Psi_\textrm{\normalfont Bang}+4\,C)\,.
\end{align*}
Since \eqref{eq:u1-asymp-K} shows that the Kretschmann scalar blows up as noted in Remark \ref{rem:u1-curv-blow-up}, $C^2$-inextendibility follows immediately.
\end{proof}
\begin{remark}[The induced foliation is not CMC]\label{rem:not-cmc}
As \eqref{eq:u1-mghd-translate} shows, the foliation $(M_t\times \S^1)_{t\in(0,t_0)}$ is no longer a constant mean curvature foliation. More precisely, one has
\[\text{tr}_{\tilde{g}}\tilde{k}=e^{\sqrt{4\pi}\,\phi}\,(\tau+\sqrt{4\pi}\,\psi)\]
and thus
\[e^{-\sqrt{4\pi}\,\phi_{\text{ref}}(t)}\,\lvert\nabla\text{tr}_{\tilde{g}}\tilde{k}\rvert_{(\gamma+\change{(dx^3)}^2)}\lesssim \epsilon\,\lvert\log(t)\rvert\]
by \eqref{eq:asymp-phi-2+1}. This slight difference in foliation leads to differences in asymptotic bounds of our Kretsch\-mann scalar in the spatially flat case compared to \cite{RodSpFou20}, even if one were to express $\check{g}$ in a coordinate system in which the lapse converges to $1$.
\end{remark}
\begin{remark}[Instability of extremal Kasner(-like) solutions]\label{rem:extreme}
In principle, one can attempt to redo the argument of Corollary \ref{cor:u1} corresponding to reference solutions with $\del_t\phi_{\text{ref}}=C\,a^{-2},\, C>0$. As noted in Remark \ref{rem:u1-ref}, this corresponds to extremal Kasner-like spacetimes. Since one has
\[\frac{\tau}2+\sqrt{4\pi}\,\del_t\phi_{\text{ref}}=a^{-2}\left(\sqrt{4\pi\,C^2}-\sqrt{4\pi\,C^2+\kappa\,a^{2}}\right)=\O{\kappa\,a^{-1}}\,,\]
the reference solution does not exhibit curvature blow-up in the spatially flat case by the first line of \eqref{eq:K-formula}. For $\kappa\neq 0$, since $e^{-4\sqrt{4\pi}\,\phi_{\text{ref}}}$ is now proportional to $a^{-4}$ at leading order, the Kretschmann scalar blows up at order \change{$\O{a^{-2}}$}. However, for the development of nearby initial data, \eqref{eq:K-formula} becomes
\begin{align*}
e^{-4\sqrt{4\pi}\,\phi}\,\check{\mathcal{K}}=32\pi\,a^{-8}\left(C+\Psi\right)^2\left(\lvert \hat{k}\rvert_G^2+\Psi^2\right)+\langle\text{lower order terms}\rangle\,.
\end{align*}

Consequently, if $\Psi_\textrm{\normalfont Bang}$ and $\hat{K}_\textrm{\normalfont Bang}$ do not vanish,\footnote{By \eqref{eq:asymp-Ham} and since $\hat{K}_\textrm{\normalfont Bang}$ is self-adjoint, if $\epsilon>0$ is sufficiently small, $\Psi_\textrm{\normalfont Bang}(x)=0$ is satisfied for some $x\in M$ if and only if $\hat{K}_\textrm{\normalfont Bang}(x)=0$.} the Kretschmann scalar blows up at an order above $a^{-4+c\epsilon^\frac18}$ for some $c>0$. While instability of these extremal solutions is to be expected simply by considering the Kasner family itself, this seems to indicate that this instability is generic. To make this rigorous, one would need to consider the scattering theory for initial data near that of extremal solutions in detail to find positive measure initial data sets for which $\Psi_\textrm{\normalfont Bang}$ does not vanish or only vanishes on a null set.
\end{remark}


\section{Appendix: Formulas for polarized $U(1)$-symmetric vacuum spacetimes}\label{sec:app}

\subsection{Kasner-like form of reference metrics for $\kappa\neq 0$}\label{subsec:ref-u1}
In this section, we collect the necessary transformations to bring the reference solutions described by \eqref{eq:u1-FLRW} into Kasner-like form
\[-dS^2+b(S)^2\,\gamma+b_3(S)^2\,\change{(dx^3)}^2\]
for $\kappa\neq 0$, as discussed in Remark \ref{rem:u1-ref}. Recall from Remark \ref{rem:u1-ref} that, for some $t_1>0$, the time variable $S\equiv S(t)$ is determined by
\begin{equation}\label{eq:S}
\change{\frac{dS}{dt}}=\exp\left(\sqrt{4\pi}\,C\int_t^{t_1}a(s)^{-2}\,ds\right),\ S(0)=0\,.
\end{equation}
Note that $S:[0,T/2)\rightarrow [0,\infty)$ is invertible and we can view $t\equiv t(S)$ as a function of $S$. 

The non-negative scale factors $b(S)$ and $b_3(S)$ are then given by
\begin{subequations}\label{eq:b-b3}
\begin{align}
b(S)^2=&\,a(t(S))^2\,\exp\left(2\sqrt{4\pi}\,C\,\int_{t(S)}^{t_1}a(s)^{-2}\,ds\right)\,,\\
b_3(S)^2=&\,\exp\left(-2\sqrt{4\pi}\,C\,\int_{t(S)}^{t_1}a(s)^{-2}\,ds\right)\,.
\end{align}
\end{subequations}
Without loss of generality, we assume $\kappa=\pm 1$ for this section; the remaining cases can be reduced to these by an appropriate rescaling of $\gamma$, which can be achieved by choosing $t_1>0$ appropriately. 

To obtain the asymptotic behaviours of $b$ and $b_3$ as $S$ approaches $0$, we will first compute expressions for these functions in terms of $t(S)$ and then expand $S$ in terms of $t$.\\

First, rearranging \eqref{eq:Friedman-2+1} with ${\rho}_{FLRW}=\mathfrak{p}_{FLRW}=0$ and integrating over the interval $(0,t)$, we obtain
\begin{align*}
t=&\,\int_0^{a(t)}\left(4\pi\,C^2\,x^{-2}-\kappa\right)^{-\frac12}dx\\
=&\,2\,\sqrt{\pi}\,\lvert C\rvert\,\int_0^{\frac{a(t)}{2\,\sqrt{\pi}\,\lvert C\rvert}}\frac{1}{\left(y^{-2}-\kappa\right)^{-\frac12}}\,dy\\
=&\,\sqrt{\pi}\,\lvert C\rvert\,\int_0^{\frac{a(t)}{2\,\sqrt{\pi}\,\lvert C\rvert}}\frac{2y}{\sqrt{1-\kappa\,y^2}}\,dy\\
=&\,-\kappa\left(\sqrt{4\pi\,C^2-\kappa\,a(t)^2}-2\sqrt{\pi}\,\lvert C\rvert\right)\,.
\end{align*}
After rearranging, this yields
\begin{equation}\label{eq:scale-factor-explicit}
a(t)^2=4\sqrt{\pi}\,\lvert C\rvert\,t-\kappa\,t^2=t\,(4\sqrt{\pi}\,\lvert C\rvert\,-\kappa\,t)\,.
\end{equation}
Below, we write $D:=(4\sqrt{\pi}\,\lvert C\rvert)^{-1}$.\\
We now solve the inital value problem \eqref{eq:S}, depending on the signs of $C$ and $\kappa$. Note that \eqref{eq:scale-factor-explicit} implies
\begin{align*}
\numberthis\label{eq:u1-conf-formula}\exp\left(\sqrt{4\pi}\,C\int_t^{t_1}a(s)^{-2}\,ds\right)=&\,\exp\left(\sqrt{4\pi}\,C\,\left.D\,\left[\log(s)-\log(D^{-1}-\kappa\,s)\right]\right\vert_{s=t}^{t_1}\right)\,\\
=&\,\left[\frac{t}{D^{-1}-\kappa\,t}\,\frac{D^{-1}-\kappa\,t_1}{t_1}\right]^{-\frac{\mathrm{sgn}(C)}2}\,.
\end{align*}
For $C<0$ and $\kappa=-1$, one thus obtains the following for $K_{t_1}:=t_1^{\frac{\mathrm{sgn}(C)}2}\,\left(D^{-1}-\kappa\,t_1\right)^{-\frac{\mathrm{sgn}(C)}2}$.
\begin{align*}
S(t)=&\,K_{t_1}\,\int_0^{t}\sqrt{\frac{s}{D^{-1}+ s}}\,ds=\frac1D\int_1^{1+D\,t}\sqrt{1-\frac1u}\,du\\
=&\,2\,K_{t_1}\,D^{-1}\,\int_0^{\mathrm{arcosh}(\sqrt{1-D\,t})}\sinh(x)^2\,dx\\
=&\,K_{t_1}\,D^{-1}\,\left[\sinh(x)\,\cosh(x)-x\right]\big\vert_{x=0}^{\mathrm{arsinh}\left(\sqrt{D\,t}\right)}\\
=&\,K_{t_1}\,D^{-1}\,\left(\sqrt{D\,t}\,\sqrt{D\,t+1}-\mathrm{arsinh}(\sqrt{D\,t})\right)\,.
\end{align*}
Note that both summands are analytic in $\sqrt{D\,t}$ near $\sqrt{D\,t}=0$. Thus, computing the Taylor expansion of both summands in $\sqrt{D\,t}$, one observes
\begin{align*}
S(t)=&\,K_{t_1}\,D^{-1}\left((D\,t)^\frac12+\frac12(D\,t)^\frac32+\O{(D\,t)^\frac52}\right)-K_{t_1}\,D^{-1}\left((D\,t)^\frac12-\frac16 (D\,t)^\frac32+\O{(D\,t)^\frac52}\right)\\
=&\,\frac23\,K_{t_1}\,D^{\frac12}\,t^\frac32+\O{t^\frac52}\quad\text{as}\ t\downarrow 0
\end{align*}
For $C<0$ and $\kappa=1$, one analogously computes
\begin{align*}
S(t)=&\,K_{t_1}\,\int_0^t\sqrt{\frac{s}{D^{-1}-s}}\,ds\\
=&\,2K_{t_1}\,D^{-1}\,\int_0^{\mathrm{arccos}(\sqrt{1+D\,t})}\sin(x)^2\,dx\\
=&\,-K_{t_1}\,D^{-1}\,\left(\sqrt{D\,t}\,\sqrt{1-D\,t}+\mathrm{arcsin}\left(\sqrt{D\,t}\right)\right)
=&\,\frac23\,K_{t_1}\,D^\frac12\,t^\frac32+\O{t^\frac52}\quad\text{as}\ t\downarrow 0\,.
\end{align*}
In both cases, this implies
\[t(S)=\left(\frac32\right)^\frac23\,K_{t_1}^{-\frac23}\,D^{-\frac13}\,S^\frac23+\O{S}\,.\]
Plugging this back into \eqref{eq:b-b3}, along with using \eqref{eq:scale-factor-explicit} and \eqref{eq:u1-conf-formula}, implies the following expansions for $b(S)^2$ and $b_3(S)^2$ as $S$ approaches $0$.
\begin{align*}
b(S)^2=&\,a(t(S))^2\exp\left(4\sqrt{\pi}\,C\,\int_{t(S)}^{t_1}a(s)^{-2}\,ds\right)\\
=&\,K^2_{t_1}\,\left(4{\sqrt{\pi}\,\lvert C\rvert}\,t(S)-\kappa\,t(S)^2\right)\frac{t(S)}{4\sqrt{\pi}\,\lvert C\rvert-\kappa\,t(S)}\\
=&\,K^2_{t_1}\,t(S)^2+\O{t(S)^3}\\
=&\,\left(\frac32\right)^\frac43\,K_{t_1}^{\frac23}\,D^{-\frac23}\,S^{\frac43}+\O{S^2}\\
b_3(S)^2=&\,\exp\left(-2\sqrt{4\pi}\,C\,\int_{t(S)}^{t_1}a(s)^{-2}\,ds\right)\\
=&\,K_{t_1}^{-2}\,D\,t(S)^{-1}+\O{1}\\
=&\,\left(\frac32\right)^{-\frac23}\,K_{t_1}^ {-\frac13}\,D^\frac43\,S^{-\frac23}+\O{1}\,.
\end{align*}
Turning now to the case where $C>0$, we first compute for $\kappa=-1$ that
\begin{align*}
S(t)=&\,K_{t_1}\,\int_0^{t}\sqrt{\frac{D^{-1}+s}{s}}\,ds=K_{t_1}\,D^{-1}\,\int_0^{D\,t}\sqrt{1+\frac1u}\,du\\
=&\,2K_{t_1}\,D^{-1}\,\int_0^{\mathrm{arsinh}(\sqrt{D\,t})}\cosh(x)^2\,dx\\
=&\,K_{t_1}\,D^{-1}\left(\sqrt{D\,t}\sqrt{1+D\,t}+\mathrm{arsinh}\left(\sqrt{D\,t}\right)\right)\\
=&\,2K_{t_1}\,\sqrt{\frac{t}D}+\O{t^\frac32}\quad\text{as}\ t\downarrow 0\,.
\end{align*}
For $\kappa=1$, one has
\begin{align*}
S(t)=&\,K_{t_1}\,\int_0^t\sqrt{\frac{D^{-1}-s}{s}}\,ds=D^{-1}\,\int_0^t\sqrt{\frac1u-1}\,du\\
=&\,2K_{t_1}\,D^{-1}\,\int_0^{\mathrm{arcsin}(\sqrt{D\,t})}\sin(x)^2\,ds\\
=&\,K_{t_1}\,D^{-1}\left[\sqrt{D\,t}\sqrt{1-D\,t}+\mathrm{arcsin}(\sqrt{D\,t})\right]\\
=&\,2K_{t_1}\,\sqrt{\frac{t}D}+\O{t^\frac32}\quad\text{as}\ t\downarrow 0\,.
\end{align*}
Thus, as $S$ approaches $0$, one now has
\begin{align*}
b(S)^2=&\,K_{t_1}^2\,\left(4\sqrt{\pi}\,C\,t(S)-\kappa\,t(S)^2\right)\,\frac{4\sqrt{\pi}\,C-\kappa\,t(S)}{t(S)}\\
=&\,16\pi\,C^2\,K_{t_1}+\O{t(S)}\\
=&\,16\pi\,C^2\,K_{t_1}+\O{S^2}\ \text{and}\\
b_3(S)^2=&\,K_{t_1}^{-2}\,\frac{t(S)}{4\sqrt{\pi}\,C+t(S)}=K_{t_1}^{-2}\,\frac{t(S)}{4\sqrt{\pi}\,C}+\O{t(S)^2}\\
=&\,4\pi\,C^2\,K_{t_1}^{-4}\,S^2+\O{S^3}\,.
\end{align*}

\subsection{Christoffel symbols and curvature components}\label{subsec:curvature-u1}

In this section, we collect formulas for the Christoffel symbols and Riemann curvature tensor for spacetime metrics under consideration in Corollary \ref{cor:u1}. Therein, the spacetime metric on $I\times M\times\mathbb{S}^1$ takes the form
\[\check{g}=e^{-2\sqrt{4\pi}\,\phi}\left(-n^2\,dt^2+g\right)+e^{2\sqrt{4\pi}\,\phi}\,\change{(dx^3)}^2\,,\]
where $g$ is a Riemannian metric on $M$. Additionally, if $k$ denotes the second fundamental form with respect to the metric $g$ along $(M_t)_{t\in(0,t_0]}$, $\text{tr}_gk=\tau$ only depends on time. Below, the index $t$ corresponds to the vector field $\del_t$ and we write $\hat{k}_{ij}=k_{ij}-\frac{\tau}2\,g_{ij}$ as well as $\psi=n^{-1}\,\del_t\phi$.\\

The Christoffel symbols $\check{\Gamma}$ with respect to $\check{g}$ take the following form:
\begin{align*}
\check{\Gamma}^{t}_{tt}=&\,\del_tn-\sqrt{4\pi}\,n\,\psi,& \check{\Gamma}^{t}_{ti}=&\,\del_i\log(n)-\sqrt{4\pi}\,\del_i\phi\,,\\
\check{\Gamma}^l_{tt}=&\,n\,g^{lm}\,\del_m n+\sqrt{4\pi}\,n^2\,g^{lm}\,\del_m\phi& \check{\Gamma}^l_{tj}=&\,-n\left(k^l_{\ j}+\sqrt{4\pi}\,\psi\,\I^l_j\right)\,,\\
\check{\Gamma}^{t}_{ij}=&\,-n^{-1}\left(k_{ij}+\sqrt{4\pi}\,\psi\,g_{ij}\right) &\check{\Gamma}^{l}_{ij}=&\,\Gamma^l_{ij}-\sqrt{4\pi}\,\del_i\phi\,\I^l_j-\sqrt{4\pi}\,\del_j\phi\,\I^l_i+\sqrt{4\pi}\,\del_m\phi\,g^{lm}g_{ij}\,,\\
 \check{\Gamma}^3_{t3}=&\,\sqrt{4\pi}\,n\,\psi & \check{\Gamma}^3_{3i}=&\,\sqrt{4\pi}\,\del_i\phi\,,\\
  \check{\Gamma}^{t}_{33}=&\,\sqrt{4\pi}\,n^{-1}\,e^{4\sqrt{4\pi}\phi}\,\psi &
\check{\Gamma}^i_{33}=&\,-\sqrt{4\pi}\,e^{4\sqrt{4\pi}\phi}\,g^{ij}\,\del_j\phi\,,\\
\check{\Gamma}^{3}_{\mu\nu}=&\,\check{\Gamma}^\rho_{3\nu}=\check{\Gamma}^3_{33}=0\,. &&
\end{align*}

With these, one can compute the curvature components listed below. For components in which the index $3$ does not occur, it can simplify computation to view $\check{g}\vert_{I\times M}=e^{-2\sqrt{4\pi}\,\phi}\,\g$ as a conformal transformation and use standard formulas that then relate $\Riem[\check{g}]$ with $\Riem[\g]$. Additionally, we use $\square_{\g}\phi=0$, i.e.,
\[e_0\psi=n^{-1}\,\del_t\psi=\Lap_G\phi+\langle\nabla \log(n),\nabla\phi\rangle_g+{\tau}\,\psi\,,\]
as well as $\Ric[\g]_{\mu\nu}=8\pi\,\nabbar_\mu\phi\,\nabbar_\nu\phi$ and that the Weyl tensor $W[\g]$ vanishes.\\

The Riemann curvature tensor is fully described by the following components:
\begin{align*}
{\Riem[\check{g}]^{t}}_{itj}=&\,{\Riem[\g]^t}_{itj}+\sqrt{4\pi}\,\nabla_i\nabla_j\phi+\sqrt{4\pi}\,n\,k_{ij}\,\psi+4\pi\,\nabla_i\phi\,\nabla_j\phi\\
&\,+\left(-\sqrt{4\pi}\,e_0\psi+\sqrt{4\pi}\,\langle\nabla\log(n),\nabla\phi\rangle_g+4\pi\,\lvert\nabla\phi\rvert_g^2\right)g_{ij}\\
=&\,-\sqrt{4\pi}\left(\sqrt{4\pi}\,\psi+\frac{\tau}2\right)\psi\,g_{ij}+\sqrt{4\pi}\,n\,\hat{k}_{ij}\psi+\left(-\sqrt{4\pi}\,\Lap_G\phi-4\pi\,\lvert\nabla\phi\rvert_g^2\right)g_{ij}\\
&\,+\sqrt{4\pi}\,\nabla_i\nabla_j\phi\,,\\
{\Riem[\check{g}]^t}_{ijl}=&\,{\Riem[\g]^t}_{ijl}+\left(\sqrt{4\pi}\,n^{-1}\,\nabla_j\psi-\sqrt{4\pi}\,k^r_{\ l}\,\nabla_r\phi-8\pi\,n\,\psi\,\nabla_l\phi\right)g_{ij}\\\
&\,-\left(\sqrt{4\pi}\,n^{-1}\,\nabla_j\psi-\sqrt{4\pi}\,k^r_{\ j}\,\nabla_r\phi-8\pi \,n\,\psi\,\nabla_j\phi\right)g_{il}\\
=&\,\left(\sqrt{4\pi}\,n^{-1}\,\nabla_l\psi-\sqrt{4\pi}\,k^r_{\ l}\,\nabla_r\phi\right)g_{ij}-\left(\sqrt{4\pi}\,n^{-1}\,\nabla_j\psi-\sqrt{4\pi}\,k^r_{\ j}\,\nabla_r\phi\right)g_{il}\,,\\
{\Riem[\check{g}]^{12}}_{12}=&\,{\Riem[\g]^{12}}_{12}+\left(\sqrt{4\pi}\,\nabla^1\nabla_1\phi-4\pi\,\nabla^1\phi\,\nabla_1\phi-\sqrt{4\pi}\,{k}^1_{\ 1}\,\psi\right)\\
&\,+\left(\sqrt{4\pi}\,\nabla^2\nabla_2\phi+4\pi\,\nabla^2\phi\,\nabla_2\phi+\sqrt{4\pi}\,{k}^2_{\ 2}\,\psi\right)-4\pi\left(-\psi^2+\lvert\nabla\phi\rvert^2\right)\\
=&\,2\sqrt{4\pi}\,\psi\left(\frac{\tau}2+\sqrt{4\pi}\,\psi\right)+\sqrt{4\pi}\,\Lap_G\phi\,,\\
{\Riem[\check{g}]^3}_{t3t}=&\,-\sqrt{4\pi}\,n^2\,e_0\psi-8\pi\,(n\,\psi)^2+\sqrt{4\pi}\,n\,\langle\nabla\phi,\nabla n\rangle_g+4\pi \,n^2\,\lvert \nabla\phi\rvert_g^2\\
=&-2\sqrt{4\pi}\,n^2\,\psi\,\left(\frac{\tau}2+\sqrt{4\pi}\,\psi\right)-\sqrt{4\pi}\,n^2\,\Lap_G\phi+\sqrt{4\pi}\,n\,\langle\nabla\phi,\nabla n\rangle_g+4\pi\,n^2\,\lvert \nabla\phi\rvert_g^2\,,\\
{\Riem[\check{g}]^3}_{i3j}=&\,-\sqrt{4\pi}\,\nabla_i\nabla_j\phi-12\pi\, n\,\psi\,\nabla_i\phi-\sqrt{4\pi}\,\psi\,\left(\hat{k}_{ij}+\left(\frac{\tau}2+\sqrt{4\pi}\psi\right)\,g_{ij}\right)+4\pi\,\lvert\nabla\phi\rvert_g^2\,g_{ij}\,,\\
{\Riem[\check{g}]^3}_{i3t}=&\,-\sqrt{4\pi}\,\nabla_i(n\,\psi)+\sqrt{4\pi}\psi\,\nabla_in-12\pi\,n\,\psi\,\nabla_i\phi-\sqrt{4\pi}\,n\,\hat{k}^r_{\ i}\,\nabla_r\phi-\sqrt{4\pi}\,n\,\tau\,\nabla_i\phi\,,\\
{\Riem[\check{g}]^3}_{\mu\nu\rho}=&\,
0\,.
\end{align*}

\section*{Declarations}

\subsection*{Competing Interests} This research was funded in part by the Austrian Science Fund (FWF) 10.55776/Y963 \changereport{and 10.55776/PAT7614324}. 
The author \changereport{was }a recipient of a DOC Fellowship of the Austrian Academy of Sciences at the Faculty of Mathematics at the University of Vienna, \change{and received a scholarship by the German Academic Scholarship Foundation (Studienstiftung des deutschen Volkes) during most of his work on this paper.} 

\subsection*{Acknowledgements} The author thanks David Fajman for suggesting to tackle this problem and Michael Eichmair, David Fajman, Hans Ringström \change{and the anonymous referee }for their constructive comments that helped improve the manuscript. Additionally, he thanks Mihalis Dafermos for his warm hospitality during the author's stay at Princeton University and the Erwin Schrödinger International Institute for Mathematics and Physics in Vienna for hosting the workshop \enquote{Nonlinear Waves and General Relativity}, during both of which parts of this work were written.

\bibliographystyle{alpha}
\bibliography{bibliography}

\end{document}